\documentclass[a4paper,UKenglish]{lipics}

\usepackage{wrapfig}
\usepackage{fontenc}
\usepackage{array}

\usepackage{multirow} 
\usepackage{cite}
\usepackage{amsfonts}
\usepackage{amstext}
\usepackage{mathtools}
\usepackage{paralist}

\usepackage{tikz}
\usetikzlibrary{calc}
\usetikzlibrary{shapes,arrows,positioning,shadows,snakes}

\usepackage{footmisc}

\usepackage{amsthm} 

\usepackage{pifont}

 \providecommand{\tabularnewline}{\\}

\usepackage{nameref}
\hypersetup{linktocpage=true,pagebackref=true}
\usepackage{cleveref}

\usepackage{thmtools,thm-restate} 




\usepackage{graphicx}
\usepackage{graphics}
\usepackage{microtype}
\usepackage{colordvi}
\usepackage{xspace}
\usepackage{algorithm}
\usepackage{algorithmicx}
\usepackage{algpseudocode}
\usepackage{url}
\usepackage{enumitem}
\usepackage{authblk}





\newcommand{\WDF}{{\mathit{WDF}}}
\newcommand{\AL}{\mathit{WB}}
\newcommand{\WB}{\mathit{WB}}
\newcommand{\SO}{\mathit{SO}}
\newcommand{\SF}{\mathit{SF}}
\newcommand{\WSF}{\mathit{WSF}}
\newcommand{\FF}{\mathit{FF}}
\newcommand{\LF}{{\mathit LF}}
\newcommand{\DF}{\mathit{DF}}
\newcommand{\WS}{\mathit{WS}}

\newcommand{\UB}{\mathit{UB}}

        {\hspace*{\fill}$\Box$\par\vspace{4mm}}

\makeatletter
\def\thmt@refnamewithcomma #1#2#3,#4,#5\@nil{%
  \@xa\def\csname\thmt@envname #1utorefname\endcsname{#3}%
  \ifcsname #2refname\endcsname
    \csname #2refname\expandafter\endcsname\expandafter{\thmt@envname}{#3}{#4}%
  \fi
}
\makeatother


%
%




\newcommand{\ceil}[1]{\ensuremath{\left\lceil#1\right\rceil}}

\newcommand{\abs}[1]{\lvert #1\rvert}


\newcommand{\OPT}{\mbox{\sf OPT}}

\newcommand{\opt}{\mbox{\sf OPT}}

\newcommand{\set}[1]{\left\{ #1 \right\}}

\newcommand{\tset}{{\mathcal T}}

\newcommand{\pset}{{\mathcal{P}}}

\newcommand{\bset}{{\mathcal{B}}}
\newcommand{\aset}{{\mathcal{A}}}

\newcommand{\sset}{{\mathcal{S}}}


\newcommand{\be}{\begin{enumerate}}
\newcommand{\ee}{\end{enumerate}}
\newcommand{\bd}{\begin{description}}
\newcommand{\ed}{\end{description}}
\newcommand{\bi}{\begin{itemize}}
\newcommand{\ei}{\end{itemize}}




\renewcommand{\phi}{\varphi}
\newcommand{\eps}{\epsilon}

\newcommand{\N}{\ensuremath{\mathbb N}}




\newcommand{\Greedy}{\textsc{Greedy}}

\newcommand{\kiopt}{\text{KI-OPT}}
\newcommand{\kimtr}{\text{KI-MTR}}

\def\markatright#1{\leavevmode\unskip\nobreak\quad\hspace*{\fill}{#1}}
\renewenvironment{proof}
  {\begin{trivlist}\item[\hskip\labelsep{\emph{Proof}.}]}
  {\markatright{\qed}\end{trivlist}}


\if 0
\declaretheorem[numberwithin=section,refname={Theorem,Theorems},Refname={Theorem,Theorems}]{theorem}
\declaretheorem[numberlike=theorem,refname={Lemma,Lemmas},Refname={Lemma,Lemmas}]{lemma}
\declaretheorem[numberlike=theorem,refname={Fact,Facts},Refname={Fact,Facts}]{fact}
\declaretheorem[numberlike=theorem,refname={Corollary,Corollaries},Refname={Corollary,Corollaries}]{corollary}
\declaretheorem[numberlike=theorem,refname={Proposition,Propositions},Refname={Proposition,Propositions}]{proposition}

\declaretheorem[numberlike=theorem,refname={observation,observations},Refname={Observation,Observations}]{observation}

\declaretheorem[numberlike=theorem,refname={Claim, Claims},Refname={Claim, Claims}]{claim}

\declaretheorem[numberlike=theorem]{definition}

\fi

\theoremstyle{remark}

\newtheorem{fact}{Fact}

\def\ShowComment{True}

\ifdefined\ShowComment

\def\parinya#1{\marginpar{$\leftarrow$\fbox{P}}\footnote{$\Rightarrow$~{\sf #1 --Parinya}}}

\else

\def\parinya#1{}

\fi

\ifdefined\ShowComment

\def\kurt#1{\marginpar{$\leftarrow$\fbox{K}}\footnote{$\Rightarrow$~{\sf #1 --Kurt}}}

\else

\def\kurt#1{}

\fi

\ifdefined\ShowComment

\def\thatchaphol#1{\marginpar{$\leftarrow$\fbox{T}}\footnote{$\Rightarrow$~{\sf #1 --Thatchaphol}}}

\else

\def\thatchaphol#1{}

\fi

\ifdefined\ShowComment

\def\laszlo#1{\marginpar{$\leftarrow$\fbox{L}}\footnote{$\Rightarrow$~{\sf #1 --LK}}}

\else

\def\laszlo#1{}

\fi

\ifdefined\ShowComment

\else

\def\laszlo#1{}

\fi

\newcommand{\Kurt}[1]{\textcolor{red}{#1}\marginpar{Kurt}}



\title{The landscape of bounds for binary search trees} 

\author[1]{Parinya Chalermsook} 
\author[1]{Mayank Goswami}  
\author[2]{L\'{a}szl\'{o} Kozma} 
\author[1]{Kurt Mehlhorn} 
\author[3]{Thatchaphol Saranurak}  

\affil[1]{{Max-Planck Institute for Informatics, Germany.}  
{\footnotesize \tt \{parinya,gmayank,mehlhorn\}@mpi-inf.mpg.de} }
\affil[2]{{Saarland University, Germany. } { \tt kozma@cs.uni-saarland.de}} 
\affil[3]{{KTH Royal Institute of Technology, Sweden.} {\footnotesize \tt thasar@kth.se} } 

\authorrunning{P.\ Chalermsook, M.\ Goswami, L.\ Kozma, K.\ Mehlhorn, T.\ Saranurak} 

\Copyright{Parinya Chalermsook, Mayank Goswami, L\'{a}szl\'{o} Kozma, Kurt Mehlhorn, Thatchaphol Saranurak}


\serieslogo{}
\volumeinfo
  {Billy Editor and Bill Editors}
  {2}
  {Conference title on which this volume is based on}
  {1}
  {1}
  {1}
\EventShortName{}
\DOI{10.4230/LIPIcs.xxx.yyy.p}



\begin{document}

\maketitle

\begin{abstract}

Binary search trees (BSTs) with rotations can adapt to various kinds of structure in search sequences, achieving amortized access times substantially better than the $\Theta(\log{n})$ worst-case guarantee.
Classical examples of structural properties include \emph{static optimality}, \emph{sequential access}, \emph{working set}, \emph{key-independent optimality}, and \emph{dynamic finger}, all of which are now known to be achieved by the two famous online BST algorithms (Splay and Greedy). 
Beyond the insight on how ``efficient sequences'' might look like, structural properties are important as stepping stones towards proving or disproving dynamic optimality, the elusive 1983 conjecture of Sleator and Tarjan that postulates the existence of an asymptotically optimal online BST. A BST can be optimal only if it satisfies all ``sound'' properties (those achieved by the \emph{offline} optimum).

In this paper, we introduce novel properties that explain the efficiency of sequences not captured by any of the previously known properties, and which provide new barriers to dynamic optimality. We also establish connections between various properties, old and new. For instance, we show the following.
\begin{compactitem}
\item A tight bound of $O(n \log{d})$ on the cost of Greedy for $d$-decomposable sequences, improving our earlier $n 2^{O(d^2)}$ bound (FOCS 2015). 
The result builds on the recent lazy finger result of Iacono and Langerman (SODA 2016). 
On the other hand, we show that lazy finger alone cannot explain the efficiency of pattern avoiding sequences even in some of the simplest cases.

\item  
A hierarchy of bounds using multiple lazy fingers, addressing a recent question of Iacono and Langerman.

\item The optimality of the Move-to-root heuristic in the key-independent setting introduced by Iacono (Algorithmica 2005).

\item A new tool that allows combining any finite number of sound structural properties. 
As an application, we show an upper bound on the cost of a class of sequences that all known properties fail to capture.  

\item The equivalence between two families of BST properties. The observation on which this connection is based was known before -- we make it explicit, and apply it to classical BST properties.     
This leads to a clearer picture of the relations between BST properties and to a new proof of several known properties of Splay and Greedy that is arguably more intuitive than the current textbook proofs.
\end{compactitem}
\end{abstract}

\section{Introduction}

In the dynamic BST model a sequence of keys are accessed in a binary search tree, and after each access, the tree can be reconfigured via a sequence of rotations and pointer moves starting from the root. (There exist several alternative but essentially equivalent cost models, see e.g.~\cite{Wilber,DHIKP09}.)
Two classical online algorithms in this model are the Splay tree of Sleator and Tarjan~\cite{ST85} and Greedy, an algorithm discovered independently by Lucas~\cite{Luc88} and Munro~\cite{Mun00} and turned into an online algorithm by Demaine et al.~\cite{DHIKP09}.

Our understanding of the BST model goes far beyond the usual paradigm of worst-case complexity. For broad classes of access sequences the $\Theta(\log{n})$ worst-case bound is too pessimistic, and both Splay and Greedy are able to achieve better amortized access times. 
Understanding the kinds of structure in sequences that facilitate efficient access has been the main focus of BST research in the past decades.  
The description of useful structure is typically given in the form of \emph{formulaic bounds}.  

Given an access sequence $X = (x_1,\ldots, x_m) \in [n]^m$, a {\em formulaic BST bound} (or simply BST bound) is a function $f(X)$ computable in polynomial time, intended to capture the access cost of BST algorithms on sequence $X$. We say that a BST bound $f(X)$ is {\em sound} if $\opt(X) \leq O(f(X)+|X|)$ where $\opt(X)$ is the optimal cost achievable by an \emph{offline} algorithm. The bound $f(X)$ is {\em achieved by algorithm $\aset$} if the cost of accessing sequence $X$ by $\aset$ (denoted $\aset(X)$) is at most $O(f(X)+|X|)$. 
BST bounds play two crucial roles: 

(i) They shed light on the structures that make sequences efficiently accessible by BST algorithms. 
For instance, the \emph{dynamic finger bound} $\sum_{i} \log |x_i - x_{i+1} + 1|$ intuitively captures the ``encoding length'' of the distances between consecutive accesses (algorithms can take advantage of the proximity of keys).

(ii) A sound BST bound is a concrete intermediate step towards the {\em dynamic optimality conjecture}~\cite{ST85}, which postulates that a simple online algorithm can asymptotically match the optimum on every access sequence, i.e.\ that it can be $O(1)$-competitive. This has been conjectured for both Splay and Greedy, but the conjecture remains unsettled after decades of research.
An $O(1)$-competitive algorithm needs to achieve all sound BST bounds. 
Proposing concrete bounds and verifying whether candidate algorithms such as Splay or Greedy achieve them has been so far the main source of progress towards dynamic optimality. 
 
Several such bounds appear in the literature: besides the classical  
{\em dynamic finger}~\cite{ST85, finger2}, {\em working set}~\cite{ST85,Iacono05}, {\em unified bound}~\cite{ST85,ElmasryFI13}, etc.\ recently studied bounds include {\em lazy finger} and {\em weighted dynamic finger}~\cite{BoseDIL14,LI16}, and bounds pertaining to \emph{pattern-avoidance}~\cite{ChalermsookG0MS15}. In some cases the interrelation between these bounds is unclear (i.e.\ whether one subsumes the other). 

\smallskip\noindent{\bf Our contributions.} 
In this paper we systematically organize the known bounds into a coherent picture, and study the pairwise relations between bounds. We introduce new sound BST bounds (in fact, a hierarchy of them). Some of these bounds serve as bridges between existing bounds, whereas others explain the easiness of certain sequences, hitherto not captured by any known bound. (We only focus on bounds defined on access sequences, we ignore therefore the deque~\cite{tarjan_sequential}, and split~\cite{split_Luc91} conjectures, that concern other operations.)

We highlight in this section the contributions that we find most interesting, with an informal discussion of their implications. We refer to \textsection\,\ref{sec:dict} for a more precise definition of the bounds considered in this paper. Our current knowledge of sound BST bounds and their relations, i.e.\ the ``landscape'' of BST bounds is presented in Figure~\ref{fig:landscape}. In the following, let $X \in [n]^m$ be an arbitrary access sequence.

\smallskip
\noindent{\bf Lazy finger results.} Our first set of contributions is a study of lazy finger bounds, their generalizations, and their connections with other BST bounds.
The lazy finger bound~\cite{BoseDIL14}, denoted $LF(X)$, captures the ``proximity'' of successive accesses \emph{in a reference tree}. Bose et al.~\cite{BoseDIL14} proved that lazy finger generalizes the classical dynamic finger bound.  
It was recently shown~\cite{LI16} that $\Greedy(X) \leq O(LF(X))$.
We prove a new connection between lazy finger and a recently studied~\cite{ChalermsookG0MS15} {\em decomposability parameter} $d(X)$.

\vspace{-0.05in}
\begin{theorem} 
For permutation sequence $X \in [n]^n$ and decomposability parameter $d = d(X)$, we have $LF(X) \leq O(n \log d)$. 
\end{theorem}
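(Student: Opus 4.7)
My plan is to exhibit a single static reference tree $T$ on $[n]$ for which $\sum_i \mathrm{dist}_T(x_i,x_{i+1}) \le O(n\log d)$; since $LF(X)$ minimizes this quantity over all $T$, the claim will follow.

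First I exploit the recursive structure of a $d$-decomposable permutation to build $T$. The permutation matrix of $X$ is recursively partitioned into at most $d$ axis-aligned rectangular sub-blocks; each sub-block corresponds to a consecutive time interval and a consecutive key interval, and is itself $d$-decomposable, so the recursion has depth $L = O(\log_d n)$. Because the sub-blocks at every level cut their parent's key range into consecutive sub-intervals, the recursion defines a laminar family of intervals of $[n]$; I convert this family into a BST $T$ by placing at every block a balanced BST of height $O(\log d)$ over its $\le d$ children and recursing inside each child. The resulting $T$ has height $O(\log n)$, and for every block $B$ in the recursion the subtree of $T$ rooted at $B$ has height $O(\log |B|)$.

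The core step is a level-wise charging argument. Call a consecutive pair $(x_i,x_{i+1})$ a \emph{level-$\ell$ split} if the two keys lie in the same level-$(\ell-1)$ block but in different level-$\ell$ sub-blocks, and let $N_\ell$ denote the number of such pairs. Inside any level-$(\ell-1)$ block $B$, its $\le d$ level-$\ell$ sub-blocks partition $B$'s time window into at most $d$ consecutive intervals, so the access sequence sweeps across them in order and can cross their boundaries at most $d-1$ times; summing over the $\le d^{\ell-1}$ level-$(\ell-1)$ blocks gives $N_\ell \le (d-1)\cdot d^{\ell-1} \le d^\ell$. A level-$\ell$ split pair has its LCA in $T$ inside the subtree of a level-$(\ell-1)$ block of size at most $n/d^{\ell-1}$, hence contributes distance $O((L-\ell+1)\log d)$ to the sum. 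Combining,
\[
\sum_i \mathrm{dist}_T(x_i,x_{i+1}) \;\le\; O(\log d)\sum_{\ell=1}^{L} d^\ell(L-\ell+1) \;\le\; O(\log d)\cdot \frac{d^{L+2}}{(d-1)^2} \;=\; O(n\log d),
\]
where the geometric sum is evaluated via $\sum_{k\ge 1} k/d^k = d/(d-1)^2$, and the final step uses $d^L=n$ together with $d^2/(d-1)^2\le 4$ for all $d\ge 2$.

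The main obstacle I anticipate is matching the recursion-based level partition to the precise definition of $d$-decomposability used in the paper. The argument crucially needs that a parent block is cut into sub-blocks along consecutive time-axis intervals, so that a single sweep produces at most $d-1$ boundary crossings per parent; if instead the decomposition cuts only along keys, the sub-blocks could be time-interleaved and the bound $N_\ell \le d^\ell$ would fail. Once this decomposition property is confirmed the BST construction and the per-pair distance estimate are elementary, and the global bound reduces to the displayed geometric-series calculation.
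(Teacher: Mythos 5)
Your plan hinges on the decomposition being \emph{balanced}: you assert that the recursion has depth $L = O(\log_d n)$, that a level-$(\ell-1)$ block has size at most $n/d^{\ell-1}$, and that the resulting $T$ has height $O(\log n)$. None of these hold in general. A $d$-decomposable permutation need not admit a balanced decomposition; for instance, a ``staircase'' decomposition that repeatedly splits a block into one sub-block of size $s-1$ and a singleton is a valid $2$-decomposition with recursion depth $L = n-1$, and with your tree construction (a balanced $T_0$ of height $O(\log d)$ over the children at every level, with no regard to subtree sizes) the resulting $T$ is a path of length $\Theta(n)$. In that case your per-pair estimate ``$O((L-\ell+1)\log d)$'' is $\Theta(n-\ell)$, and $\sum_\ell N_\ell (L-\ell+1)$ is $\Theta(n^2)$, not $O(n)$. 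The obstacle you flag at the end (time-interleaved blocks) is not the real issue --- blocks in the decomposition are time-consecutive by definition --- the issue is unbalanced recursion.

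The deeper point is that even after repairing the tree (say, by weight-balancing $T_0$ so that every subtree $T_B$ has height $O(\log |B|)$), your charging scheme pays the full diameter of $T_B$ to every level-$\ell$ split inside $B$; for the staircase this gives $\sum_{s=2}^n O(\log s) = \Theta(n\log n)$, still off by a $\log n / \log d$ factor. The distance $d_T(x_i,x_{i+1})$ is genuinely much smaller than the subtree height in that example, but a per-pair estimate cannot see this. The paper instead proves by induction the strengthened invariant
\[
\LF_T(S) \;:=\; d_T(r,s_1) + \sum_{i\ge 2} d_T(s_{i-1},s_i) + d_T(s_{|S|},r) \;\le\; 4\bigl(|S|-1\bigr)\lceil\log d\rceil ,
\]
where $r$ is the root of $T$. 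The extra ``enter from root, exit to root'' terms are exactly what lets the induction absorb the climb from $s_{i-1}$ to the root of $T_\ell$ and the descent from the root of $T_{\ell+1}$ to $s_i$ into the \emph{sub-block's own} budget, so that each crossing between siblings only pays the $O(\log d)$ detour through the connector $T_0$. Since each block is entered and exited once, the total cost is $O(\log d)$ per edge of the recursion tree, and the recursion tree has $O(n)$ edges (its $n$ leaves are singletons and every internal node has at least two children) --- independently of its depth or balance. That amortization step is the missing idea in your argument.
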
  
\vspace{-0.05in}

As a corollary, we obtain the tight bound $\Greedy(X) \leq O(n \log d)$ which improves the earlier bound of $O(n 2^{O(d^2)})$ and resolves an open question from~\cite{ChalermsookG0MS15}.
We remark that $d(X)$ is a natural parameter whose 
special case $d(X) = 2$ includes the well-known 
traversal sequences.
 
Next, inspired by~\cite{DemaineILO13}, we define the $k$-lazy finger parameter $LF^k(X)$.
\vspace{-0.05in}  
\begin{theorem} 
\label{thm:mainlf}
Let $X \in [n]^m$ be a sequence and $k \in {\mathbb N}$. Then $\opt(X) \leq O(\log k) \cdot LF^k(X)$.
\end{theorem}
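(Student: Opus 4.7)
The plan is to upper-bound $\opt(X)$ by exhibiting an offline BST strategy whose cost on $X$ is $O(\log k)\cdot LF^k(X)$. First, I fix a witness to $LF^k(X)$: a reference tree $T$ on $[n]$, initial finger positions $p_{0,1},\dots,p_{0,k}$, and an assignment $\phi:[m]\to[k]$ specifying which finger serves each access. The finger positions evolve by $p_{i,\phi(i)} = x_i$ and $p_{i,j} = p_{i-1,j}$ for $j \neq \phi(i)$, and the witness cost is $\sum_{i} d_T(p_{i-1,\phi(i)}, x_i)$.

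The offline algorithm maintains the invariant that, after the $i$-th access, the $k$ current finger positions all lie at depth $O(\log k)$ in the BST. Near the root sits a balanced \emph{dispatcher} of size $O(k)$ and depth $O(\log k)$, whose $k$ leaves correspond to the current finger positions; the remaining keys hang below the dispatcher in subtrees that mirror the local geometry of $T$ around each finger, so that a walk from $p_{i-1,j}$ to $x_i$ can be carried out in $O(d_T(p_{i-1,j}, x_i))$ rotations, exactly as in the single-finger analysis of Iacono and Langerman~\cite{LI16}.

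To serve $x_i$ with $j=\phi(i)$ we (a) descend from the root to the dispatcher leaf of finger $j$, paying $O(\log k)$; (b) walk from $p_{i-1,j}$ to $x_i$ in the local subtree, paying $O(d_T(p_{i-1,j},x_i))$; and (c) restructure so that $x_i$ is promoted to the new dispatcher leaf for finger $j$. Summing over $i$ and using the trivial bound $LF^k(X) \ge m$ gives total cost $O(m\log k) + O(LF^k(X)) = O(\log k)\cdot LF^k(X)$, which upper-bounds $\opt(X)$ as claimed.

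The main obstacle is step (c): a single BST cannot literally hold $k$ independent neighborhoods of $T$, since each key appears exactly once. I would handle this by an amortized argument in the spirit of Iacono--Langerman, using a potential function that measures the deviation of the current BST from the ideal ``$k$-finger'' arrangement and lets the restructuring cost be paid either by future accesses through the shared parts of the tree or by the walk in~(b). A possibly cleaner alternative is to first reduce to a \emph{multi-tree} model in which each finger carries its own separate BST and finger swaps are free, and then embed that model into a single BST with $O(\log k)$ multiplicative overhead via a tango-tree-style dispatcher. I expect the bookkeeping for ``finger migration'' inside the potential argument to be the most delicate piece of the proof.
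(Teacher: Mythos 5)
Your high-level plan matches the paper's: exhibit an offline BST that simulates the $k$-finger witness $(T,\vec f,\vec\ell)$ with multiplicative overhead $O(\log k)$, keeping each finger's current position at depth $O(\log k)$ in the simulating tree. The paper indeed refines the $O(k)$-overhead simulation of Demaine et al.\ to $O(\log k)$, and the trivial bound $LF^k(X)\ge m$ is used exactly as you use it to absorb the $O(m\log k)$ term. So the outline is right.

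But there is a genuine gap precisely where you flag it. You write that ``the remaining keys hang below the dispatcher in subtrees that mirror the local geometry of $T$ around each finger,'' then concede that a single BST cannot literally hold $k$ independent neighborhoods, and propose to fix this either by an unspecified potential argument or by a multi-tree model with a tango-style dispatcher. Neither of these is worked out, and the second would not go through as stated: the neighborhoods of the $k$ fingers in $T$ overlap arbitrarily in key-space (two fingers can sit on adjacent keys, or on nodes one of which is an ancestor of the other), so there is no fixed partition of $[n]$ into $k$ ranges to hang beneath a static dispatcher. The entire technical content of the theorem is in resolving exactly this interaction, and your sketch does not resolve it.

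The paper's construction does something structurally different from ``local neighborhoods under a dispatcher.'' It maintains the Steiner tree $S(T,F)$ of the $k$ current fingers in $T$, decomposed into $O(k)$ \emph{pseudofingers} (fingers plus degree-3 Steiner nodes) and \emph{half-tendons} (the pieces of the paths between consecutive pseudofingers on one side of the lower endpoint); these intervals are pairwise disjoint, so they can be totally ordered. The simulating BST places a balanced BST of depth $O(\log k)$ over these $O(k)$ intervals, implements each half-tendon as a deque-BST with $O(1)$ amortized insert/delete at either end, and hangs the \emph{knuckles} (the components of $T\setminus S(T,F)$) below. Moving a finger by one edge in $T$ changes only $O(1)$ pseudofingers and touches only the extremal elements of $O(1)$ half-tendons, and each such update costs $O(\log k)$ amortized because the affected element is always in a small ``active set'' near depth $O(\log k)$. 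This is the missing idea: it is the Steiner-tree/tendon decomposition, not a collection of per-finger subtrees, that makes the $k$ regions compatible inside one BST. Your proposal names the right target and the right invariant but does not supply this construction or a substitute for it.
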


\vspace{-0.05in}
This improves the $O(k) \cdot LF^{k}(X)$ bound, which is implicit in~\cite{DemaineILO13}. Moreover, our bound is tight in the sense that there exists $X$ for which $\opt(X) \geq \Omega(\log k) \cdot LF^k(X)$.
Remark that $LF^1(X) \geq LF^2(X) \geq \ldots \geq LF^n(X)$, giving a hierarchy of sound BST bounds.  
For $k \geq 2$, this bound is not known to be achieved by any online algorithm.

The bounds in the $k$-lazy finger hierarchy are not implied by each other. In fact, we show a strongest possible separation between $LF^k$ and $LF^{k-1}$. That is, for any $k$, there is a sequence $X$ for which $LF^{k-1}(X)/LF^{k}(X) \geq \Omega(\log (n/k))$.  
This result yields 
a large number of intermediate steps towards dynamic optimality: A candidate algorithm not only needs to achieve the finger bounds for constantly many fingers, but also has to match the asymptotic ratio of $O(\log k)$.  

We show an application of multiple lazy fingers by giving a new upper bound on $\opt$: Let $m(X)$ denote the {\em monotone complexity} parameter of $X$.
In~\cite{ChalermsookG0MS15}, we showed that $\Greedy(X)  \leq |X| \cdot 2^{O(k^2)}$ for $k= m(X)$. 
Here we show that $\opt(X) \leq O(\log k) \cdot LF^k(X) \leq  O(k \log k) \cdot |X|$, raising the open question of whether there is any online algorithm matching this bound.  

\smallskip
\noindent{\bf Interleave results.} We introduce a simulation technique that allows combining any finite number of sound BST bounds.

\begin{wrapfigure}[13]{l}[0.15\textwidth]{0.31\textwidth}
	\begin{center}  
	\vspace{-0.05in}
		\includegraphics[width=0.25\textwidth, trim=0 0 0 0.5cm]{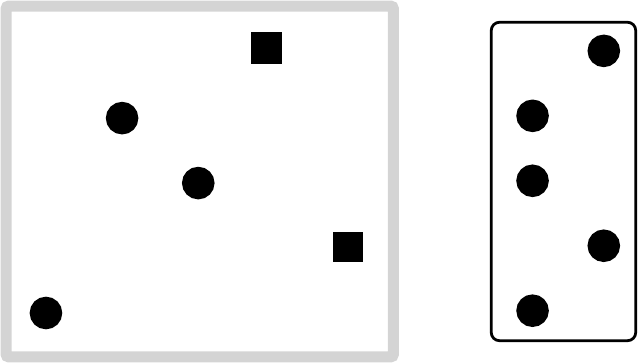}
	\end{center}
	\caption{\label{fig:interleave} The sequence $X= (1,5,3,2,4)$ (left) resulting from composing the sequences $X^{(1)} = (1,3,2)$ (circles) and $X^{(2)} = (2,1)$ (squares) with the template ${\tilde X = (1,2,1,1,2)}$ (right).}
\end{wrapfigure}

Let $X^{(1)},\ldots, X^{(\ell)}$ be a collection of sequences where $X^{(i)} \in [n_i]^{m_i}$ (each $X^{(i)}$ is a sequence of length $m_i$ on key space of size $n_i$). 
We consider a natural way to compose these sequences by using a sequence  
$\tilde X \in [\ell]^{m}$ as a template, where $m = \sum_i m_i$ and $n= \sum_i n_i$.
The ``composed sequence'' $S=(S_1,\ldots, S_m) \in [n]^m$ is defined as $S_t = X^{(\tilde X_t)}_{\sigma(t)} + N_t$, where $N_t = \sum_{i=1}^{\tilde{X}_t - 1} n_i$, and $\sigma(t) = |\set{t' \leq t: \tilde X_{t'} = \tilde X_t}|$.
Denote the composed sequence by $X= \tilde X\{X_1,\ldots, X_\ell\}$. Intuitively, the sequences are interleaved spatially, and the order in which they produce the next element of the composed sequence is governed by the template sequence. See Figure~\ref{fig:interleave} for illustration.

\vspace{-0.05in}
\begin{theorem} 
Let 
$X = \tilde X\{X_1,\ldots, X_\ell\}$, and let $\tilde f, f_1,\ldots, f_\ell$ be sound BST bounds. Then 
$f(X) = \tilde f(\tilde X) + \sum_{i=1}^{\ell} f_i(X_i)$ is also a sound BST bound.
\end{theorem}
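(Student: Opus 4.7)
The plan is to prove soundness of $f$ in two stages. First, I would apply the soundness of the component bounds: from $\opt(\tilde X) = O(\tilde f(\tilde X) + |\tilde X|)$ and $\opt(X_i) = O(f_i(X_i) + |X_i|)$ for each $i$, summed and combined with $|\tilde X| = \sum_i |X_i| = |X|$, the claim reduces to showing
$$\opt(X) \;\leq\; O\!\left(\opt(\tilde X) + \sum_{i=1}^{\ell} \opt(X_i)\right).$$

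Second, I would exhibit an offline BST algorithm $\aset$ on $X$ achieving this bound. Let $\aset^\star$ be an optimal algorithm on $\tilde X$ and $\aset_i$ an optimal algorithm on $X_i$, with BST states $T^\star_t$ and $T_{i,j}$ respectively. The idea is to maintain a single composite BST $T_t$ on $[n]$ by ``grafting'' the slot-trees $T_{i,\sigma_i(t)}$ onto the meta-tree $T^\star_t$, where $\sigma_i(t) = |\set{t' \leq t : \tilde X_{t'} = i}|$. The composite should be designed so that for every key $k$ belonging to slot $i$ (i.e., $k \in \{N_i+1,\ldots,N_i+n_i\}$),
$$\text{depth}_{T_t}(k) \;\leq\; O\!\left(\text{depth}_{T^\star_t}(i) + \text{depth}_{T_{i,\sigma_i(t)}}(k)\right).$$
Summing this bound over $t$ yields a total access cost of $O(\opt(\tilde X) + \sum_i \opt(X_i))$ for $\aset$; a parallel argument shows that the rotations needed to transition $T_{t-1} \to T_t$ can be simulated at a constant overhead per rotation performed by $\aset^\star$ or $\aset_i$, giving the same bound on rotation cost.

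The main obstacle is the grafting construction and its depth bound. A naive attachment --- placing a neighboring slot-subtree at the leftmost or rightmost null pointer of a host slot-subtree --- inflates depths by the length of the host's leftmost/rightmost path, which can be as large as $|T_i|$ and is unacceptable. I would sidestep this by choosing, for each slot $i$, a canonical anchor key (for instance the boundary key $N_i+n_i$) that plays two roles simultaneously in $T_t$: its position mirrors that of slot $i$ in $T^\star_t$, so that the $\ell$ anchor keys form an ``outer backbone'' of shape $T^\star_t$, while slot $i$'s remaining keys hang from the anchor in a subtree whose depths match those in $T_{i,\sigma_i(t)}$ up to a constant factor. Making within-slot and inter-slot depths decouple cleanly in this way --- and checking that both the access-depth bound and the rotation-simulation claim hold under this anchor scheme --- is the delicate technical step on which the proof hinges.
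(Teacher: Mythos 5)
Your high-level plan is exactly right and matches the paper's: reduce to showing $\opt(X) \le O\left(\opt(\tilde X) + \sum_i \opt(X_i)\right)$, then build a single composite BST that grafts the slot-trees onto a meta-tree of shape $T^\star_t$ so that the access depth of a key in slot $i$ is bounded by (meta-depth of slot $i$) plus (depth within the slot-tree). You also correctly identify the central obstacle: naive grafting along a leftmost/rightmost spine inflates depths. Where the proposal has a genuine gap is in the proposed fix. Your anchor scheme keeps the meta-tree shape $T^\star_t$ on actual keys $\{N_i+n_i\}$, but in $T^\star_t$ a slot's meta-node can have other slots' meta-nodes as \emph{descendants}. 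Concretely, if slot $j$ (with $j<i$) is the left child of slot $i$ in $T^\star_t$, then both slot $i$'s remaining keys (all $<N_i+n_i$) and slot $j$'s entire subtree must live in the left subtree of anchor $N_i+n_i$; one of them necessarily hangs below the other, and you are back to the very spine-inflation problem you set out to avoid. Declaring the anchor backbone to ``decouple cleanly'' does not resolve this, and as stated the construction does not exist.

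The paper's actual device sidesteps the collision entirely, and you need something like it. It first proves (Theorem~\ref{thm:aux equiv}) that introducing \emph{auxiliary keys} never changes $\opt$, and then proves the key Lemma~\ref{lem:integer as leaves}: any offline algorithm on $[k]$ can be simulated, at a factor-$3$ overhead, by an algorithm on a tree with auxiliary keys in which every element of $[k]$ is always a \emph{leaf} (each node $i$ is expanded into a chain $i_L \to i_R \to i$, tripling depths and letting one simulate each rotation in the original tree by $O(1)$ rotations in the expanded tree). Running this simulated version of the optimal algorithm for $\tilde S$ produces a meta-tree $\tilde T_A$ where no slot is an ancestor of any other slot, so hanging $T_A^{(i)}$ off leaf $i$ gives the composite tree with no interference at all; the access and restructuring costs then add cleanly, yielding $\opt(S) \le \sum_i \opt(S_i) + 3\opt(\tilde S)$. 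So the missing idea is not a cleverer anchor choice within the original meta-tree, but the insertion of auxiliary elements to force slots to the leaves, together with the lemma that auxiliary elements are free.
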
   
\vspace{-0.05in}

This result allows us to analyze the optimum of natural classes of sequences whose easiness was not implied by any of the known bounds.

\smallskip
\noindent{\bf Key-independent setting.} We revisit the key-independent setting, in which Iacono showed that optimality is equivalent to working set~\cite{Iacono05}. We show the following.

\vspace{-0.05in}
\begin{theorem} 
In the key-independent case Move-to-root is optimal. 
\end{theorem}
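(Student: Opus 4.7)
The plan is to reduce to Iacono's theorem \cite{Iacono05}, which states that in the key-independent setting the working set bound is tight: $\OPT^{\mathrm{KI}}(X) = \Theta(WS(X))$, where $t_i$ denotes the number of distinct keys accessed since $x_i$'s previous occurrence and $WS(X) = \sum_i \log(t_i + 1)$. It therefore suffices to prove that
\[
\mathbb{E}_\pi[\mathrm{MTR}(\pi(X))] = O(WS(X)),
\]
where the expectation is over a uniformly random bijection $\pi$ applied to the keys.

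The first step is the classical structural characterization of the MTR tree, provable by a short induction on the number of accesses: the BST maintained by Move-to-root after any prefix of $X$ is exactly the BST obtained by inserting the distinct keys seen so far in the reverse order of their most recent access. Consequently, the depth of $x_i$ in the tree just before its access equals its depth in the BST obtained by first inserting the $t_i$ more-recently-touched distinct keys (in reverse access order) and then inserting $x_i$; keys inserted after $x_i$ in that BST do not influence its depth, so the rest of the MTR tree can be ignored.

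The second step is to invoke key-independence. Since $\pi$ is a uniformly random bijection, the relative ranks of the $t_i+1$ keys in question form a uniformly random permutation, so the BST described above is a uniformly random BST on $t_i+1$ elements. By the textbook analysis of random BSTs (one form of the QuickSort recurrence), the expected depth of any designated element in such a tree is $O(\log(t_i + 1))$. Summing over $i$ gives $\mathbb{E}_\pi[\mathrm{MTR}(\pi(X))] = O(\sum_i \log(t_i + 1)) = O(WS(X))$, which combined with Iacono's theorem proves the claim.

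I expect the main delicate point to be making the random-BST step precise for the element $x_i$, since the insertion order of the $t_i+1$ keys is determined by $X$ (not by $\pi$) while only their labels are randomized. However, this is exactly the symmetric setting in which the textbook random-BST depth bound applies: the joint distribution of the labels on any fixed subset of keys, taken over random $\pi$, is uniform over all orderings of those labels, independent of any event defined purely by access positions. The structural characterization of the MTR tree is classical but its short inductive proof is worth including for self-containedness; edge cases such as $t_i = 0$ (the previous access was $x_i$ itself, so $x_i$ is at the root with cost $O(1)$) and first accesses (set $t_i = n$, absorbed into $WS(X)$) are handled without difficulty.
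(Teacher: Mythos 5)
Your proof takes essentially the same approach as the paper's: both use the structural characterization of the Move-to-root tree as a binary search tree built by inserting the accessed keys in reverse order of most-recent access (the paper cites this as the Allen--Munro lemma, and additionally records the fact that the recently-accessed keys form a connected subtree containing the root), then use key-independence to argue that the relative ranks of the relevant $t_i+1$ keys are a uniform random permutation, yielding expected depth $O(\log(t_i+1))$ -- the paper carries out the ancestor-probability calculation explicitly ($\Pr[\pi(s_\ell) \text{ is ancestor of } \pi(s_j)] = 1/|J\cap[\ell,j]|$) whereas you invoke the standard random-BST depth bound, but the calculations are the same. One small point you gloss over: for a first access, the search path exits the subtree of previously-accessed keys and descends into the untouched part of the initial tree $T$, contributing $d_T(x_i)$ per first access; the paper isolates this as an additive $f(n) = \sum_i d_T(i)$ term, which is $O(n\log n)$ for a balanced initial tree and matches the $+n\log n$ term that also appears (and is needed) in Iacono's $\kiopt(S)=\Theta(\WS(S)+n\log n)$ statement. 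Setting $t_i=n$ and claiming this is ``absorbed into $\WS(X)$'' is not literally correct since the paper's $\WS$ uses $|w_S(j)|$ (which can be smaller than $n$) for first accesses and the initial-tree cost is a separate quantity, but since both sides of the equivalence carry the same $+n\log n$ additive slack, the conclusion is unaffected.
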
  
\vspace{-0.05in}

This may seem surprising, since Move-to-root is a rather simple heuristic, not guaranteed to achieve even sublinear amortized access time. 
The result is nevertheless consistent with intuition, since in other key-independent models (e.g.\ list update problem) heuristics similar to Move-to-root have already been known to be asymptotically optimal~\cite{list_update}. 
Moreover, in such key-independent problems, ``useful structure'' has typically been described via bounds resembling the working set bound (see e.g.~\cite{Panagiotou, albers, Albers2}), which in the key-independent BST case is indeed the ``full story'' of optimality. 

\smallskip
\noindent{\bf Relations between classical bounds.}
We make explicit the equivalence between two popular notions in BST bounds, namely, {\em information-theoretic proximity with weighted elements} (such as in weighted dynamic finger) and {\em proximity of keys in a reference tree} (such as in static optimality and lazy finger).
This equivalence has implicitly appeared several times in the literature, but it has not been applied to some of the classical BST bounds. 

By making the connection explicit, the landscape of known bounds becomes clearer. In particular, the following become obvious: (i) static optimality is just an access lemma with fixed weight function, and (ii) static finger is an unweighted version of static optimality. Using these observations, we prove some of the known properties of Splay and Greedy in a way that is arguably simpler and more intuitive than existing textbook proofs.

\smallskip

\noindent{\bf Open problems.} Some of the sound BST bounds presented in the paper are not known to be achieved by online algorithms. In particular, does any online algorithm achieve the $k$-lazy finger (times $O(\log k)$) bound when $k \geq 2$? Does any online algorithm achieve the interleave bound? Are there broad classes of linear cost sequences not captured by any of the known bounds?
Does any online algorithm achieve the bound of $O(k \log k) |X|$, where $k = m(X)$?
These questions serve as concrete intermediate steps for proving or disproving dynamic optimality. 

Our result for the decomposability parameter of a sequence is tight. The bounds for general pattern avoidance and monotone pattern parameter (defined in \textsection\,\ref{sec:dict}) are not known to be tight. 

There exist other ways of composing sequences (different from the operation used in our interleave bound). Do these operations similarly lead to composite BST bounds? In particular, if $X$ is the \emph{merge} of $X_1$ and $X_2$ (i.e.\ $X$ can be partitioned into two subsequences $X_1$ and $X_2$), does a linear cost of both $X_1$ and $X_2$ imply the linear cost of $X$? 

\begin{figure}
\begin{center}  

\includegraphics[scale=0.16]{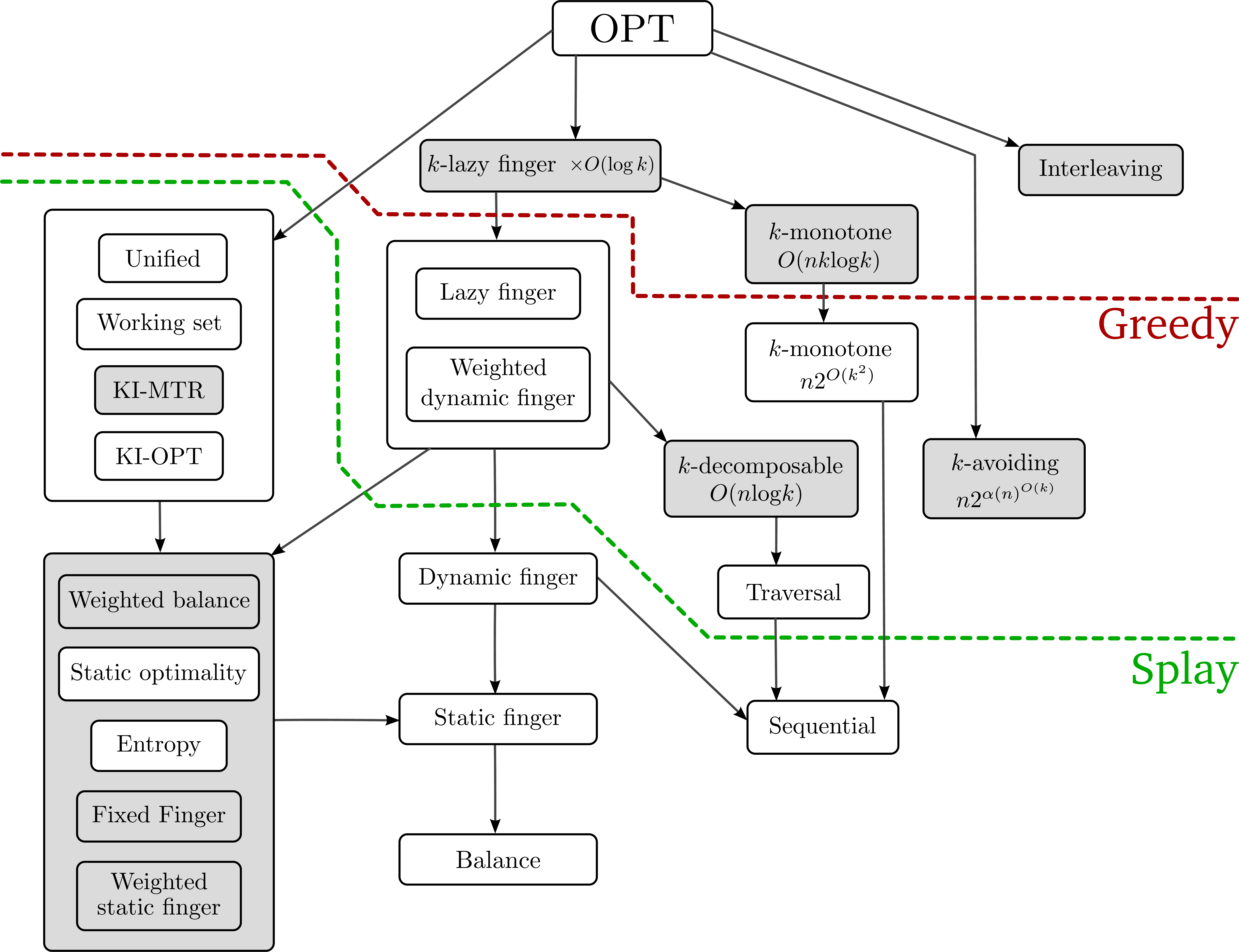}
\end{center}
\caption{\label{fig:landscape}BST bounds and relations between them. Each box represents a BST bound. Arrows indicate asymptotic domination: the source of the arrow is ``smaller'' than the target. Bounds grouped into larger boxes are asymptotically equivalent. Shaded boxes indicate bounds that are either new or strengthened in this paper (with the exception of the $k$-avoiding bound for which we prove in this paper a stronger \emph{lower bound} only). The following remarks are in order.\\
1. We define pattern-avoiding bounds in a parameterized way, such that they are well-defined for all access sequences. Let $k$ be an arbitrary positive integer. If $X$ is $k$-avoiding, then the value of the $k$-avoiding bound for $X$ is $2^{\alpha(n)^{O(k)}}$, and otherwise the value of the bound is defined to be $+\infty$. Similarly, the value of the $k$-decomposable bound is $O(n \log k)$ if $X$ is $k$-decomposable, and $+\infty$ otherwise. We define two $k$-monotone bounds, a \emph{strong}, and a \emph{weak} bound. Both bounds are set to $+\infty$ if $X$ is not $k$-monotone. In other cases, the value of the strong, respectively weak, $k$-monotone bound is $O(n k \log k)$, respectively $n 2^{O(k^2)}$. 
The traversal and sequential bounds are defined similarly: the value of the bound is $O(n)$ for a permutation sequence $X \in [n]^n$, if $X$ is a preorder traversal, respectively monotone increasing sequence, and $+\infty$ otherwise. \\
2. Boxes with a parameter $k$ indicate families of bounds: there is a different bound defined for each value of $k$. Bounds in the same box for different values of $k$ are not always comparable, i.e.\ the parameterized families of bounds are not necessarily increasing or decreasing with $k$. \\ 
3. In case of the arrows from ($k$-lazy finger $\times O(\log{k})$) to ($k$-monotone), and from the stronger ($k$-monotone) to the weaker ($k$-monotone), it is meant that the former bound with a given fixed value of $k$ dominates the latter bound \emph{with the same value $k$}.\\ 4. The arrows from (lazy finger) to ($k$-decomposable), from ($\OPT$) to ($k$-lazy finger $\times O(\log{k})$), and from ($\OPT$) to ($k$-avoiding) indicate that the former dominates the latter \emph{for all values of $k$.}\\ 5. The arrows from ($k$-monotone) to (sequential), from ($k$-decomposable) to (traversal), and from ($k$-lazy finger $\times O(\log{k})$) to (lazy finger) indicate domination for any constant $k \geq 2$.} 
\end{figure}

\section{Dictionary of BST bounds}
\label{sec:dict} 
In this section we list the BST bounds considered in the paper, marking those that are new with $\star$. Let $S=(s_{1},\dots,s_{m}) \in[n]^{m}$ be an access sequence.
A \emph{weight
	function} $w:[n]\rightarrow\mathbb{R}^{+}$ maps
elements to positive reals. For convenience, denote 
$w[i_{1}:i_{2}]=\sum_{i=\min\{i_{1},i_{2}\}}^{\max\{i_{1},i_{2}\}}w(i)$, let $W=w[1:n]$,
and $w(X)=\sum_{i\in X}w(i)$ for any set $X\subset[n]$. 
The following bounds can be defined for any access sequence $S\in[n]^{m}$.

\smallskip
\noindent{\bf Basic bounds.}

{\em Balance:} The {\em balance bound} is $B(S)=m\log n$. It describes the fact that accesses take amortized $O(\log n)$ time. We can generalize it with weights as follows.

{\em Weighted Balance$^\star$:}
For any weight function $w$, let $\AL_{w}(S)=\sum_{j=1}^{m}\log\frac{W}{w(s_{j})}$. The {\em weighted balance bound} is $\AL(S)=\min_{w}\AL_{w}(S)$. Note that for the uniform weight function $w_i = 1$, $\AL_{w}(S) = B(S)$. The reader might observe the similarity of the weighted balance bound with the \emph{access lemma}, a statement that bounds the amortized cost of a single access. The access lemma has been used to prove properties of Splay and other algorithms~\cite{ST85, Fox11, esa15tick}. 
We observe that matching the bound $\AL$ is a weaker condition than satisfying the access lemma, since here the weights are fixed
throughout the sequence of accesses, whereas the access lemma makes no such assumption.

{\em Entropy:} Let $m_{i}$ be the number of times element $i$ is accessed. The {\em entropy bound}~\cite{ST85} is $H(S)=\sum_{i=1}^{n}m_{i}\log\frac{m}{m_{i}}$.

\smallskip
\noindent{\bf Locality in keyspace.}

{\em Static Finger:} this bound depends on the distances from a fixed key.
For an arbitrary element $f \in [n]$, let $\SF_{f}(S)=\sum_{j=1}^{m}\log(|f-s_{j}|+1)$.
The {\em static finger bound}~\cite{ST85} is $\SF(S)=\min_{f}\SF_{f}(S)$. We define a weighted version as follows.

 For any weight function $w$ and any element $f$, let $\WSF_{w,f}(S)=\sum_{j=1}^{m}\log\frac{w[f:s_{j}]}{\min\{w(f),w(s_{j})\}}$. The {\em weighted static finger$^\star$ bound} is 
$\WSF(S)=\min_{w}\WSF_{w}(S)$.

{\em Dynamic Finger:} these bounds depend on the distances between consecutive accesses. The {\em dynamic finger bound}~\cite{ST85, finger1, finger2} is $\DF(S)=\sum_{j=2}^{m}\log(|s_{j}-s_{j-1}|+1)$.

	For any weight function $w$, let $\WDF_{w}(S)=\sum_{j=2}^{m}\log\frac{w[s_{j-1}:s_{j}]}{\min\{w(s_{j-1}),w(s_{j})\}}$.
	The {\em weighted dynamic finger bound}~\cite{BoseDIL14, LI16} is 
$\WDF(S)=\min_{w}\WDF_{w}(S)$.

\smallskip
\noindent{\bf Locality in time.}

{\em Working set:} 
For any $j\le m$, let the last touch time of the element $s_{j}$
be $\rho_{S}(j)=\max\{k<j\mid s_{k}=s_{j}\}$. If $j$ is the first
time that $s_{j}$ is accessed, then we set $\rho_{S}(j)=0$. 

The working set at time $j$ is defined as $w_S(j) = \{s_i \mid \rho_S{(j)} < i \leq j\}$. In words, it is the set of distinct elements accessed since the last touch time of the current element.

The {\em working set bound}~\cite{ST85} is $\WS(S)=\sum_{j=1}^{m}\log(|w_S(j)|)$.

\smallskip
\noindent{\bf Locality in a reference tree.}

{\em Static Optimality:} 
	For any fixed BST $T$ on $[n]$, let $SO_{T}(S)=\sum_{j=1}^{m}d_{T}(s_{j})$,
	where $d_{T}(s_{j})$ is the depth of $s_{j}$ in $T$ i.e.\ the distance from
	the root to $s_{j}$. The \emph{static optimality bound}~\cite{ST85} is $SO(S)=\min_{T}SO_{T}(S)$.

{\em Fixed Finger$^\star$:} 
	For any fixed BST $T$ and any element $f$, let $\FF_{T,f}(S)=\sum_{j=1}^{m}d_{T}(f,s_{j})$
	where $d_{T}(f,s_{j})$ is the distance from $f$ to $s_{j}$ in $T$.
	The \emph{fixed finger$^\star$ bound} is $\FF(S)=\min_{T,f}\FF_{T,f}(S)$.

{\em Lazy Finger:}
The previous two bounds capture the proximity of an access to the root and to a fixed key $f$ respectively. 
The lazy finger bound~\cite{BoseDIL14} captures the proximity of consecutive accesses in a reference tree.   
	For any fixed BST $T$ on $[n]$, let $\LF_{T}(S)=\sum_{j=2}^{m}d_{T}(s_{j-1},s_{j})$
	where $d_{T}(s_{j-1},s_{j})$ is the distance from $s_{j-1}$ to $s_{j}$
	in $T$. The \emph{lazy finger bound} is $\LF(S)=\min_{T}\LF_{T}(S)$.

{\em $k$-Lazy Finger$^\star$:}
We generalize the lazy finger bound to allow multiple fingers. 
Our definition 
is inspired by~\cite{BoseDIL14,DemaineILO13}. 
Let $k \in {\mathbb N}$ and $T$ be a binary search tree on $[n]$. 
A {\em finger strategy} consists of a sequence $\vec{f} \in [k]^m$ where $f_t \in [k]$ specifies the  finger that will serve the request $s_t$, and an {\em initial vector} $\vec{\ell} \in [n]^k$ where $\ell_i \in [n]$ specifies the initial location of finger $i$.
The cost of strategy $(\vec{f}$, $\vec{\ell})$ is $\LF^k_{T,\vec{f}, \vec{\ell}}(S) = \sum_{t=1}^m (1+d_T(s_t, s_{\sigma(f_t,t)}))$ where $\sigma(i,t)$ is the location of finger $i$ before time $t$, and $\sigma(i,1) = \ell_i$.  
Let $\LF^k_T(S) = \min_{\vec{f}, \vec{\ell}} \LF^k_{T,\vec{f}, \vec{\ell}} (S)$.  
In other words, for a fixed BST $T$ on key set $[n]$, $\LF^k_T(S)$ is the optimal \emph{$k$-server solution} 
that serves access sequence $S$ in tree $T$.  
We define $\LF^k(S) = \min_T \LF^k_{T}(S)$.  
It is clear form the definition that $\LF^1(S) \geq \LF^2(S) \geq \ldots \geq \LF^n(S) = m$.  

{\em Unified Bound:}
The {\em unified bound}~\cite{ST85, ElmasryFI13} computes for every access the minimum among the static finger, static optimality, and working set bounds. It is defined as $\UB = \min_{T,f}\sum_{j=1}^{m}\log(\min\{|f-s_{j}|+1, d_{T}(s_{j}), |w_S(j)|\})$. The bound should not be confused with the \emph{unified conjecture}\cite{unified}, which subsumes the working set and dynamic finger bounds but is not currently known to be achieved by $\OPT$.

The described bounds are summarized in Table~\ref{tab1} in \textsection\,\ref{apptable}. We defer the definition of key-independent bounds to \textsection\,\ref{sec:MTR}.

\smallskip
\noindent{\bf Pattern avoidance.}

Pattern avoidance bounds are, in some sense, different from the other BST bounds; they capture a more ``global'' structure, whereas other bounds all measure a broadly understood ``locality of reference''. 

The {\em pattern avoidance parameter} $p(X)$ is the smallest integer such that $X$ avoids some permutation pattern $\sigma$ of length $p(X)$. If $k \geq p(X)$, we say that $X$ is $k$-avoiding.  
The following are special cases of this parameter.

The {\em monotone pattern parameter} $m(X)$ is the smallest integer such that $X$ avoids one of the patterns $(1,\ldots, m(X))$ or $(m(X),\ldots, 1)$. If $k \geq m(X)$, we say that $X$ is $k$-monotone. The monotone pattern parameter of \emph{sequential access} is $m(X) = 2$.

The {\em decomposability parameter} is defined for permutation access sequences ($X \in [n]^n$). Parameter $d(X)$ is the smallest integer such that $X$ avoids all simple permutations of length $d(X)+1$ and $d(X)+2$. If $k \geq d(X)$, we say that $X$ is $k$-decomposable. There is an equivalent definition of $k$-decomposability in terms of a block decomposition of $X$, see~\textsection\,\ref{sec:lfapp}. For a traversal sequence $X$ (i.e.\ the preorder sequence of some BST) we have $d(X)=2$. 

We refer to~\cite{ChalermsookG0MS15} for more details on pattern-avoiding bounds. The fact that $OPT$ is linear for traversal and for sequential access is well-known. The following relations are shown in~\cite{ChalermsookG0MS15} between pattern avoidance parameters and $\opt$.

\vspace{-0.05in}

\begin{theorem}[\cite{ChalermsookG0MS15}] 
Let $X$ be a permutation input sequence in $[n]^n$, let $p = p(X)$, let $d = d(X)$, and $m = m(X)$. The following relations hold:  
\begin{compactitem} 
\item $\Greedy(X) \leq n 2^{\alpha(n)^{O(p)}}$
\item $\Greedy(X) \leq n 2^{O(d^2)}$ and $\opt(X) \leq O(m \log d)$
\item $\Greedy(X) \leq n 2^{O(m^2)}$
\end{compactitem} 
\end{theorem}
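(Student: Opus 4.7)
The plan is to exploit the arborally satisfied set (ASS) equivalence of [DHIKP09]: viewing $X$ as a set of points in the plane (one per access), a BST execution corresponds to a superset of the input points that is \emph{arborally satisfied} (every non-aligned pair of points has their bounding rectangle hit by a third point), and $\Greedy$ corresponds to the algorithm that adds the minimum number of points per row to achieve this. The goal in each bullet is to bound the number of ``extra'' points $\Greedy$ must add, given the pattern-avoidance structure of $X$.

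For the first bullet (general $p$-avoidance), I would combine the F\"uredi--Hajnal / Marcus--Tardos framework with generalized Davenport--Schinzel bounds. The key idea is to split the ASS output into ``wings'' based on the direction each added point looks; each wing forms a generalized DS sequence (because of monotonicity within the Greedy algorithm's construction) and because $X$ itself avoids a pattern of length $p$, an induced subword of length proportional to $\alpha(n)^{O(p)}$ would force a forbidden pattern in $X$. Careful bookkeeping of the wing structure and a union over rows yields the $n 2^{\alpha(n)^{O(p)}}$ bound; the main obstacle is to argue that the ``points added by Greedy'' inherit enough of the pattern-avoiding structure of the original $X$ to invoke DS-type extremal bounds, which is the technical heart of the proof.

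For the second bullet (decomposability), I would proceed via two independent routes. For $\opt$, build a reference BST $T$ that mirrors the recursive block decomposition of $X$: at each node of the decomposition tree, replace the simple permutation on $\le d{+}2$ blocks by a balanced BST of depth $O(\log d)$ over the blocks, recursively. Consecutive accesses in $X$ then lie either within the same deepest block (cost $O(\log d)$), or at a nearby ancestor, so by charging jumps up and down the decomposition tree one obtains $\LF(X)\le O(n\log d)$. Combined with $\opt(X)=O(\LF(X))$ (equivalently, the bound stated in the theorem), this gives the claim. For the $\Greedy$ bound $n\,2^{O(d^2)}$, the approach is inductive: decompose $X$ at the top level into $\le d{+}2$ blocks, recurse on each block, and bound the number of ASS points that cross block boundaries by a product of the per-block costs times a factor $2^{O(d)}$ for the top-level simple permutation. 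The geometric-recursion factor $2^{O(d^2)}$ accrues from the recursion depth times the per-level overhead.

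For the third bullet ($m$-monotonicity), the point is that an $m$-monotone permutation is trivially $m$-avoiding (it avoids the monotone patterns $(1,\dots,m)$ and $(m,\dots,1)$). One could invoke bullet~1 with $p=m$, but the cleaner approach is to observe that avoiding monotone patterns is a much more restrictive condition (by Erd\H{o}s--Szekeres it implies a partition into few monotone runs), which bypasses the full DS machinery and lets the Greedy analysis go through with an extremal argument yielding the $n\,2^{O(m^2)}$ bound directly. I expect the main obstacle across all three bullets to be the first one: the pattern-avoidance to ASS-size bound requires reconciling a \emph{global} combinatorial property of $X$ with the \emph{local} row-by-row decision rule of $\Greedy$, and the sharpest known reduction appears to inevitably pay the $\alpha(n)^{O(p)}$ factor characteristic of generalized Davenport--Schinzel sequences.
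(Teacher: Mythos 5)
This theorem is cited from [ChalermsookG0MS15] and is not re-proved in the present paper, so the relevant comparison is against that FOCS 2015 source and against the strengthening in this paper's lazy-finger section. Your sketch for the first bullet is in the right ballpark: the technical heart, as you correctly identify, is an ``input-revealing'' lemma showing that the ASS output of Greedy, viewed as a 0-1 matrix, inherits a forbidden-pattern constraint from $X$, after which Marcus--Tardos/F\"uredi--Hajnal together with generalized Davenport--Schinzel extremal bounds supply the $n\,2^{\alpha(n)^{O(p)}}$ factor; a wing/stair decomposition of Greedy's rows is indeed the organizing device used there.

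Two places where the proposal diverges or falls short. First, for $\opt(X)\le O(n\log d)$: building a balanced reference tree over the recursive block decomposition and then invoking lazy-finger soundness is precisely the argument in this paper's $k$-decomposable lemma, \emph{not} the FOCS 2015 proof -- that paper predates the Iacono--Langerman lazy-finger result (SODA 2016) and instead constructs a dynamic offline BST directly along the decomposition tree. Your route is the cleaner and stronger one (it also yields $\Greedy(X)=O(n\log d)$), but it is the later contribution, not the one being cited. Second, the Erd\H{o}s--Szekeres route you propose for the third bullet does not obviously yield the $n\,2^{O(m^2)}$ bound for $\Greedy$. Partitioning into $m-1$ monotone runs is exactly how this paper proves $\LF^m(X)=O(nm)$ and hence $\opt(X)=O(nm\log m)$ (the strong $m$-monotone bound), but that argument concerns a multi-finger strategy on a static tree and says nothing about what Greedy does: the online row-by-row decision rule cannot readily exploit such a global partition. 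The FOCS 2015 bound $\Greedy(X)\le n\,2^{O(m^2)}$ is instead obtained by specializing the bullet-1 input-revealing reduction to monotone forbidden patterns, for which the extremal/recursive estimate sheds the $\alpha(n)$ factor, and the exponent $O(m^2)$ arises from that analysis rather than from Erd\H{o}s--Szekeres.
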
 

\vspace{-0.08in}

\smallskip
\noindent{\bf Known relations between bounds.}  We refer to Figure~\ref{fig:landscape} for illustration. By definition, the weighted bounds are stronger than their unweighted counterparts because of the uniform weight function mapping all elements to $1$, therefore $\AL(S)\le B(S)$, $\WSF(S)\le \SF(S)$, $\WDF(S)\le DF(S)$ for any sequence $S$. In \cite{ChalermsookG0MS15} it is shown that pattern-avoidance bounds are incomparable with the dynamic finger and working set bounds.

\vspace{-0.05in}

\begin{theorem} The following relations hold for any sequence $S$:
\begin{compactitem}
\item \cite{mehlhorn1975nearly} $SO(S)=\Theta(H(S))$.
\item \cite{BoseDIL14} $\LF(S)=\Theta(\WDF(S))$.
\item \cite{ElmasryFI13} $\UB(S)=\Theta(\WS(S))$.
\end{compactitem}
\end{theorem}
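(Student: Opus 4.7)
The plan is to handle the three equivalences separately, since each uses a different technique. For $SO(S) = \Theta(H(S))$, which is a classical result, I would prove the lower bound $H(S) \leq O(SO(S))$ by an information-theoretic argument: any BST $T$ induces a prefix-free encoding of the keys via root-to-leaf paths, so by Kraft's inequality $\sum_j d_T(s_j) \geq \sum_i m_i \log(m/m_i) - O(m) = H(S) - O(m)$. For the upper bound $SO(S) \leq O(H(S))$, I would construct a weight-balanced static BST in which element $i$ has weight $m_i$, using Mehlhorn's cumulative-weight-bisection rule; this achieves $d_T(i) = O(\log(m/m_i))$, and summing gives $O(H(S))$.

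For $\LF(S) = \Theta(\WDF(S))$, the central observation is that distances in a BST are asymptotically captured by weights, and vice versa. To show $\LF(S) \leq O(\WDF(S))$, I would take an optimal weight function $w$ for $\WDF(S)$ and construct a biased BST (Bent--Sleator--Tarjan style) with weight $w(i)$ on element $i$, so that $d_T(a,b) = O(\log(w[a:b]/\min\{w(a), w(b)\}))$ holds for every pair of keys $a, b$. Conversely, given a tree $T$ realizing $\LF(S)$, I would define Sleator--Tarjan-style weights $w(i) = 2^{-d_T(i)}$ and bound $w[a:b]$ by the total weight in the subtree rooted at $\mathrm{lca}(a,b)$, which telescopes to at most $2^{-d_T(\mathrm{lca}(a,b))}$ by Kraft; dividing by $\min\{w(a), w(b)\}$ yields $\log(w[a:b]/\min\{w(a), w(b)\}) \leq d_T(a,b)$, which proves $\WDF(S) \leq \LF(S)$.

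For $\UB(S) = \Theta(\WS(S))$, the direction $\UB(S) \leq \WS(S)$ is immediate from the definition, since for any tree $T$ and finger $f$ the inner minimum in $\UB$ is at most $|w_S(j)|$. The nontrivial direction, and the main obstacle, is $\WS(S) \leq O(\UB(S))$: one must show that no choice of $T$ and $f$ can asymptotically beat the working set. My approach would follow~\cite{ElmasryFI13} via a charging argument: whenever an access $s_j$ exploits the $d_T(s_j)$ or $|f - s_j| + 1$ term in place of $|w_S(j)|$, the key $s_j$ must lie in a small near-root subtree of $T$ or in a narrow keyspace interval around $f$; since only $2^k$ keys lie within depth $k$ of any fixed root (and only $2k+1$ within keyspace distance $k$ of $f$), only few accesses can simultaneously exhibit small depth/static-finger distance together with a working set much larger than that depth, so any savings amount only to constant-factor slack in aggregate.
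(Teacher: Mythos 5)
The paper itself does not prove this theorem: all three equivalences are stated as recalled facts with citations, and no proof appears in the body or appendix. So there is no ``paper's own proof'' to compare against line by line. However, the paper develops the exact weight--tree dictionary you invoke for item two in an adjacent section (\Cref{thm:tree-from-weight} for weight~$\Rightarrow$~tree, and \Cref{lem:subtree-sum} for tree~$\Rightarrow$~weight), so the fair comparison is against those lemmas --- and there you have a genuine bug.

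For $\LF(S)=\Theta(\WDF(S))$ you set $w(i)=2^{-d_T(i)}$ and claim that $w[a:b]$ is bounded by the total weight in the subtree rooted at $\mathrm{lca}(a,b)$, ``which telescopes to at most $2^{-d_T(\mathrm{lca}(a,b))}$ by Kraft.'' This fails. A subtree rooted at a node $c$ at depth $d_T(c)$ contains up to $2^k$ nodes at depth $d_T(c)+k$, so $\sum_{j\in T_c}2^{-d_T(j)}\le\sum_{k\ge 0}2^k\cdot 2^{-d_T(c)-k}=\sum_{k\ge 0}2^{-d_T(c)}$ diverges; nothing telescopes. Kraft's inequality for binary codes also does not apply here, because BST keys sit at internal nodes, not leaves --- to locate a node you need a ternary alphabet (left, right, stop), giving $\sum_i 3^{-d_T(i)}\le 1$, not $\sum_i 2^{-d_T(i)}\le 1$. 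The paper sidesteps this by taking base $4$: with $w(i)=4^{-d_T(i)}$, \Cref{lem:subtree-sum} gives $\sum_{j\in T_c}w(j)\le 2w(c)$, and the rest of your argument then goes through. Any base strictly greater than $2$ works; $2$ does not. The same base-$2$ slip appears (harmlessly, since it only costs a constant) in your Kraft lower bound for $SO(S)=\Theta(H(S))$, where the correct inequality is in base $3$.

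For $\UB(S)=\Theta(\WS(S))$, the easy direction is fine, but your ``charging argument'' for $\WS(S)\le O(\UB(S))$ is not a proof --- it is a restatement of the intuition without a bookkeeping scheme that actually bounds the total savings. The hard content of the Elmasry--Farzan--Iacono result is exactly to make that accounting work, and nothing in your sketch indicates how the per-access savings $\log|w_S(j)|-\min\{\log(|f-s_j|+1),\,d_T(s_j),\,\log|w_S(j)|\}$ sum to $O(\WS(S))$ rather than accumulating. If you want to include this direction, you should either cite it outright or reproduce the actual potential/charging argument; as written it is a gap.
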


\vspace{-0.05in}

\noindent{\bf Known competitiveness of algorithms.}
We only focus on the complexity of the Splay and Greedy algorithms. Other algorithms considered in the literature include Tango trees~\cite{tango} and Multi-splay~\cite{multisplay}. We summarize the bounds known to be achieved by Splay and Greedy in Theorem~\ref{thm:known} deferred to \textsection\,\ref{app_known}. These facts are also illustrated in Figure~\ref{fig:landscape}.

\section{New equivalences between classical bounds}  
In this section we revisit some of the classical BST bounds and establish new equivalences. 
In \textsection\,\ref{sec:access lemma class} we prove that the weighted balance, weighted static finger, static optimality, and fixed finger bounds are equivalent. As an application, a new proof is presented that Splay and Greedy satisfy these properties in \textsection\,\ref{sec:static new}. In \textsection\,\ref{sec:MTR} we show that Move-to-root is the optimal algorithm, when key values are randomly permuted.

\vspace{-0.1in}

\subsection{Static optimality and equivalent bounds}
\label{sec:access lemma class}

The conceptual message of this section is that the weighted version of information-theoretic proximity is equivalent to proximity in a reference tree. We start by stating the technical tools that allow the conversion between the two settings.

\smallskip
\noindent{\bf Weight $\Rightarrow$ Tree:}
We refer to the 
randomized construction of a BST from an arbitrary weight function due to Seidel and Aragon~\cite{SeidelA96}. 

\vspace{-0.09in}
\begin{lemma}[\cite{SeidelA96}]
\label{thm:tree-from-weight}
	Given a weight function $w$,
	there is a randomized construction of a BST $T_{w}$ with the following
	properties:
	\begin{compactitem}
		\item the expected depth of element $i$ is $E[d_{T_{w}}(i)]=\Theta(\log\frac{W}{w(i)})$,
		and 
		\item the expected distance from element $i$ to $j$ is $E[d_{T_{w}}(i,j)]=\Theta(\log\frac{w[i:j]}{\min\{w(i),w(j)\}})$.
	\end{compactitem}
\end{lemma}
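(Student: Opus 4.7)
My plan is to reconstruct the weighted treap argument. Assign to each element $i \in [n]$ an independent random priority $p_i = U_i^{1/w(i)}$ with $U_i \sim \mathrm{Uniform}[0,1]$, and let $T_w$ be the Cartesian tree of these priorities: the unique BST on $[n]$ that is simultaneously a max-heap on the $p_i$. A short calculation shows $\Pr[p_i = \max\{p_k : k \in S\}] = w(i)/w(S)$ for every $S \ni i$, so the priorities realise sampling-without-replacement proportional to weights. The combinatorial heart of the argument is the standard treap ancestor property: element $j$ is an ancestor of $i$ in $T_w$ iff $p_j = \max\{p_k : k \in [\min(i,j):\max(i,j)]\}$. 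This follows by induction on the construction, since the root is the global priority-maximiser and the recursion restricts to each side of the root.

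Given the ancestor property, summing indicators yields
\[
E[d_{T_w}(i)] \;=\; \sum_{j \neq i} \Pr[j \text{ is an ancestor of } i] \;=\; \sum_{j \neq i} \frac{w(j)}{w[\min(i,j):\max(i,j)]}.
\]
Splitting into $j > i$ and $j < i$ and writing partial sums $s_j = w[i:j]$, the elementary inequality $(s_j - s_{j-1})/s_j \le \ln(s_j/s_{j-1})$ telescopes each side to $\ln(W/w(i))$, giving the $O(\log(W/w(i)))$ upper bound. For the distance, I would first observe that the path from $i$ to $j$ in $T_w$ lies entirely in the interval $[\min(i,j):\max(i,j)]$, since the LCA of $i$ and $j$ is exactly the priority-maximiser over that interval. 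For each $k$ strictly between $i$ and $j$, the event ``$k$ lies on the $i$-to-$j$ path'' is the union of the two ancestor events ``$p_k$ is maximal on $[i:k]$'' and ``$p_k$ is maximal on $[k:j]$'', so inclusion--exclusion gives
\[
\Pr[k \in \mathrm{path}(i,j)] \;=\; \frac{w(k)}{w[i:k]} + \frac{w(k)}{w[k:j]} - \frac{w(k)}{w[i:j]}.
\]
Summing and telescoping twice yields $E[d_{T_w}(i,j)] = O\!\left(\log\frac{w[i:j]}{w(i)} + \log\frac{w[i:j]}{w(j)}\right)$, which reduces to $O(\log(w[i:j]/\min\{w(i),w(j)\}))$ via the elementary identity $\log A + \log B \le 2 \log \max(A,B)$.

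The step I expect to be the main obstacle is the matching $\Omega$-direction for both estimates. The upper telescoping bound is clean, but the naive reverse inequality $(s_j - s_{j-1})/s_j \ge c \ln(s_j/s_{j-1})$ fails whenever a single heavy weight dominates its partial sums (e.g., a huge neighbour of $i$ collapses its expected depth to $O(1)$ while $\log(W/w(i))$ remains large, so the $\Theta$-claim can only hold under mild non-degeneracy on $w$). To recover the lower bound in the intended regime, I would coarsen to dyadic bands of the partial sums $s_j$, verify that each non-empty band contributes $\Omega(1)$ to the ancestor sum, and conclude there are $\Theta(\log(W/w(i)))$ such bands; the same coarsening then transfers to the distance bound through the LCA decomposition.
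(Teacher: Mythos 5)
The paper does not prove this lemma itself: it is imported verbatim from Seidel and Aragon (hence the citation), with the in-house appendix~\textsection\,\ref{app1} supplying only a deterministic construction for the first bullet. Your reconstruction of the treap argument --- the priority law $p_i = U_i^{1/w(i)}$, the ancestor characterization, the indicator sum $E[d_{T_w}(i)] = \sum_{j\neq i} w(j)/w[\min(i,j):\max(i,j)]$, and the telescoping upper bound --- is correct and is indeed the Seidel--Aragon argument. Your inclusion--exclusion formula for $\Pr[k\in\mathrm{path}(i,j)]$ is also right \emph{for $k$ strictly between $i$ and $j$}.

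There are two genuine problems. First, the step ``the path from $i$ to $j$ lies entirely in $[\min(i,j):\max(i,j)]$'' is false, and the reason you give (the LCA is the priority-maximiser over the interval) does not imply it: take keys $\{1,2,3,4\}$ with $p_4 > p_1 > p_3 > p_2$, so $4$ is the root, $1$ its left child, $3$ the right child of $1$, $2$ the left child of $3$; the path from $2$ to $4$ visits $1 \notin [2,4]$. For $k$ outside $[i,j]$ (say $k<i<j$), $k$ lies on the path exactly when $p_k$ is maximal on $[k:i]$ but not on $[k:j]$, contributing $w(k)/w[k:i] - w(k)/w[k:j] > 0$, and symmetrically on the other side; these external terms are not covered by your formula and must be bounded separately (they do telescope to $O(\log(w[i:j]/\min\{w(i),w(j)\}))$, but that is precisely the work you skipped). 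Second, you are right that the stated $\Theta$ fails in general --- with $w=(1,M)$, $E[d_{T_w}(1)]\to 1$ while $\log(W/w(1))\to\infty$ --- but your dyadic-band repair does not rescue the $\Omega$ direction: in that very example only $O(1)$ bands are non-empty, so the claimed $\Theta(\log(W/w(i)))$ count of non-empty bands is what breaks, not just the per-term reverse inequality. The honest statement is that only the $O(\cdot)$ direction holds unconditionally. Fortunately this is all the paper ever uses: the applications in \Cref{thm:WBSO} and \textsection\,\ref{app3} invoke Lemma~\ref{thm:tree-from-weight} only to pass from a weight to a tree with small expected cost (the upper bound), and the converse directions of those equivalences are established via the deterministic tree-to-weight assignment of Lemma~\ref{lem:subtree-sum}, not via a lower bound on treap depth.
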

\vspace{-0.05in}
When only the first property is needed, we can use a deterministic construction, see \textsection\,\ref{app1}.

\medskip
\noindent{\bf Tree $\Rightarrow$ Weight:} 
Given a tree $T$, the following assignment of weights is folklore. 

\begin{lemma}
\label{lem:subtree-sum}  Let $T$ be a BST and define $w(i) = 4^{-d_T(i)}$ for all $i \in [n]$. Then for any key
$i \in [n]$, $\sum_{j \in T_i} w(j) = \Theta(w(i))$.  
\end{lemma}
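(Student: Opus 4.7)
The plan is to establish matching upper and lower bounds on $\sum_{j \in T_i} w(j)$, both of which turn out to be straightforward once one groups the descendants of $i$ by depth.

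The lower bound is immediate: since $i$ itself belongs to its own subtree $T_i$, we have
\[
\sum_{j \in T_i} w(j) \;\ge\; w(i),
\]
so the nontrivial direction is the upper bound $\sum_{j \in T_i} w(j) = O(w(i))$.

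For the upper bound, I would partition $T_i$ by the depth of its nodes relative to $i$. Let $T_i^{(d)}$ denote the set of nodes in $T_i$ whose depth in $T$ equals $d_T(i) + d$, i.e.\ the descendants of $i$ at relative depth $d$. Since $T$ is a binary tree, $|T_i^{(d)}| \le 2^d$, and by definition of $w$, every node $j \in T_i^{(d)}$ satisfies $w(j) = 4^{-d_T(i)-d} = w(i)\cdot 4^{-d}$. Hence
\[
\sum_{j \in T_i} w(j) \;=\; \sum_{d \ge 0}\; \sum_{j \in T_i^{(d)}} w(j) \;\le\; \sum_{d \ge 0} 2^d \cdot w(i)\cdot 4^{-d} \;=\; w(i)\sum_{d \ge 0} 2^{-d} \;=\; 2\,w(i).
\]
Combining the two inequalities gives $w(i) \le \sum_{j\in T_i} w(j) \le 2\,w(i)$, which is the desired $\Theta(w(i))$ bound.

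There is no real obstacle here; the only ``design choice'' worth highlighting is the base $4$ in the definition $w(i)=4^{-d_T(i)}$. It is chosen precisely so that the per-level weight contribution $|T_i^{(d)}|\cdot 4^{-d_T(i)-d}$ decays geometrically in $d$ (with ratio $1/2$), making the geometric series convergent and yielding a constant multiplicative blow-up over $w(i)$. Any base strictly greater than $2$ would yield the same qualitative conclusion; base $4$ will simply be convenient in later applications of the lemma.
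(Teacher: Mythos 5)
Your proof is correct and follows exactly the same approach as the paper: bound the number of nodes at relative depth $d$ in $T_i$ by $2^d$, observe each such node has weight $w(i)\cdot 4^{-d}$, and sum the resulting geometric series to get $\sum_{j\in T_i} w(j)\le 2w(i)$, with the lower bound $w(i)$ coming from $i\in T_i$. Your closing remark on why base $4$ (or any base $>2$) is the right design choice is a nice addition, though not part of the paper's argument.
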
 
\vspace{-0.1in}  
\begin{proof} 
From the BST property, there are at most $2^d$ nodes that are at depth $d$ in the subtree $T_i$. 
Therefore, $w(i) \leq \sum_{j \in T_i} w(j) \leq \sum_{d' = 0}^{\infty} 2^{d'} 4^{-d_T(i) - d'} \leq 2 w(i)$.
\end{proof} 

\vspace{-0.05in}  
We show that for any sequence $S$, the following bounds are equivalent: 
weighted balance, static optimality, weighted static finger, and fixed finger. We defer the proofs to \textsection\,\ref{app3}. We remark that all proofs in this section are inspired by the equivalence between the weighted dynamic finger and lazy finger bounds in \cite{BoseDIL14}.

\begin{theorem}
\label{thm:WBSO}
For all sequences $S$, we have $\WB(S) = \SO(S) = \WSF(S) = \FF(S)$ (up to constant factors).
\end{theorem}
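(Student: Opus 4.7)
The plan is to establish a cyclic chain of inequalities $\WB(S) \leq O(\SO(S)) \leq O(\FF(S)) \leq O(\WSF(S)) \leq O(\WB(S))$, which yields the four-way equivalence. Each link is proved by the same template: converting a weight function into a tree via Lemma~\ref{thm:tree-from-weight}, or a tree into a weight function via Lemma~\ref{lem:subtree-sum} (or a dual construction using tree-distance as the exponent). The overall strategy closely mirrors the proof of $\LF = \WDF$ from \cite{BoseDIL14}.

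For $\WB(S) \leq O(\SO(S))$, I would take an optimal tree $T$ for $\SO$ and set $w(i) = 4^{-d_T(i)}$. Applying Lemma~\ref{lem:subtree-sum} at the root yields $W = \sum_i w(i) = \Theta(1)$, so $\log(W / w(s_j)) = \Theta(d_T(s_j))$ up to an additive constant per access, and summing over $j$ gives the bound. For the dual direction $\WSF(S) \leq \WB(S)$, I would take an optimal weight $w$ for $\WB$ and pick $f$ to be any element maximizing $w$. Since then $\min\{w(f), w(s_j)\} = w(s_j)$ and $w[f:s_j] \leq W$, we immediately get $\WSF_{w,f}(s_j) \leq \WB_w(s_j)$ access by access.

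For $\FF(S) \leq O(\WSF(S))$, given optimal $w$ and $f$ for $\WSF$, I would invoke the Seidel--Aragon construction of Lemma~\ref{thm:tree-from-weight} to build a treap $T_w$ satisfying $\mathbb{E}[d_{T_w}(f, s_j)] = \Theta(\log(w[f:s_j] / \min\{w(f), w(s_j)\}))$; the probabilistic method then provides a deterministic tree with $\FF_{T, f}(S) = O(\WSF_{w, f}(S))$.

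The main obstacle is the remaining link $\SO(S) \leq O(\FF(S))$, since here the exponential weighting must be centered at a point inside a tree that has not yet been constructed. Given optimal $T^*$ and $f^*$ for $\FF$, I would define $w(i) = 4^{-d_{T^*}(i, f^*)}$, using the tree distance to $f^*$ in $T^*$ as the exponent rather than the depth. The key observation is that $T^*$, viewed as an undirected graph, has maximum degree at most $3$ (parent plus two children), so the number of vertices at distance exactly $d$ from $f^*$ is at most $3 \cdot 2^{d-1}$, giving $W \leq 1 + \sum_{d \geq 1} 3 \cdot 2^{d-1} \cdot 4^{-d} = O(1)$; since $w(f^*) = 1$ we also have $W = \Theta(1)$. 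Applying Lemma~\ref{thm:tree-from-weight} to this $w$ then produces a tree with $\mathbb{E}[d_{T_w}(s_j)] = \Theta(\log(W / w(s_j))) = \Theta(d_{T^*}(f^*, s_j))$, closing the cycle.
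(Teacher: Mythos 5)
Your cyclic decomposition $\WB \le O(\SO) \le O(\FF) \le O(\WSF) \le O(\WB)$ is correct and uses exactly the two core tools the paper relies on (Seidel--Aragon treaps for the weight-to-tree direction, and $4^{-\text{depth/distance}}$ weightings plus a bounded-degree/subtree-sum argument for the tree-to-weight direction), so the substance is the same. The organization differs in a minor but pleasant way: the paper proves six one-sided implications (the two bidirectional pairs $\WB \Leftrightarrow \SO$ and $\WSF \Leftrightarrow \FF$, plus the links $\FF \le O(\SO)$ and $\WB \le O(\FF)$), whereas you close a single minimal four-link cycle. Two of your links coincide with the paper's ($\WB \le O(\SO)$ via $w(i)=4^{-d_T(i)}$ and root subtree sum; $\FF \le O(\WSF)$ via the treap's pairwise-distance property). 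The other two are not in the paper verbatim: your $\WSF \le O(\WB)$ via placing the finger at the maximum-weight key is an elementary observation with no direct analogue there (the paper instead uses the trivial $\FF \le O(\SO)$ by placing the finger at the root of the static tree), and your $\SO \le O(\FF)$ via $w(i)=4^{-d_{T^*}(i,f^*)}$ followed by Seidel--Aragon is effectively the composition of the paper's $\WB \le O(\FF)$ with $\SO = O(\WB)$, replacing the crude $3^d$ volume bound with the sharper $3\cdot 2^{d-1}$; either constant works. The net effect is a slightly leaner proof that requires one fewer appeal to the subtree-sum lemma.
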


\noindent{\bf Discussion.} We can interpret the theorem as follows: (i) Any algorithm satisfying the access lemma~\cite{ST85} obviously satisfies static optimality, because $\SO = \AL$, and $\AL$ is equivalent to the access lemma with the restriction that the weight function is fixed throughout the sequence. (ii) In static BSTs, fixing the finger at the root is the best choice up to a constant factor, because $\SO = \FF$. (iii) $\SO$ is now obviously stronger than $\SF$ because $\SO = \WSF$, and $\WSF$ is the weighted version of $\SF$.

\vspace{-0.05in}
\subsection{New proofs of static optimality}
\label{sec:static new}
In this section we give a simple direct proof that Splay and Greedy
achieve static optimality. By Theorem~\ref{thm:WBSO}, this implies that the other bounds are also achieved. These facts are well-known~\cite{ST85,Fox11}, but we find the new proofs to provide additional insight. 
We present the proof for Splay and defer the proof for Greedy to \textsection\,\ref{app_greedy}. 

We use a potential function with a a clear combinatorial interpretation.

\smallskip
\noindent{\bf Min-depth potential function.}
Fix a BST $R$, called a \emph{reference tree}. Let $T$ be
the \emph{current tree} maintained by our BST algorithm (either Splay or Greedy).
Let $T(i)$ denote the set of elements in the subtree rooted at $i$. 
For each
element $i$, the potential of $i$ with respect to $R$ is $\phi_{R}(i)=-2\cdot\min_{j\in T(i)}d_{R}(j)$.
The \emph{min-depth potential} of $T$ with respect to $R$ is $\phi_{R}(T)=\sum_{i=1}^{n}\phi_{R}(i)$. We will drop the subscript $R$ for convenience. We present an easy but crucial fact.

\begin{fact}
\label{fact:unique}For any interval $[a,b]$ and any BST $R$, there
is a unique element $c\in[a,b]$ with smallest depth in $R$.\end{fact}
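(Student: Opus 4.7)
The plan is to exhibit the unique minimum-depth element $c$ via a deterministic root-to-descendant walk that mirrors the standard BST search procedure. Starting at the root of $R$, at each visited node $v$: if $v \in [a,b]$, stop and return $v$; if $v < a$, descend to the right child; if $v > b$, descend to the left child. Assuming $[a,b]$ contains at least one key of $R$ (the only nontrivial case), this walk must eventually halt at some node $c \in [a,b]$, because otherwise the walk leaves the tree while still ``searching for'' elements of $[a,b]$, contradicting the fact that such elements exist in $R$.

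Next I would establish the key structural claim: every element $c' \in [a,b]$ present in $R$ is either equal to $c$ or a proper descendant of $c$. To prove this, consider the root-to-$c'$ path in $R$ and compare it step by step with the deterministic walk above. At any node $v$ on the root-to-$c'$ path, if $v \notin [a,b]$ then by the BST property the next step of the path is forced: if $v < a \le c'$ the path goes to the right child, and if $v > b \ge c'$ the path goes to the left child. These are exactly the moves made by the deterministic walk. Hence the two walks agree until the first node belonging to $[a,b]$ is encountered; by construction that first node is $c$, so $c$ lies on the root-to-$c'$ path, i.e.\ $c$ is an ancestor of $c'$.

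It follows that $d_R(c) \le d_R(c')$ for every $c' \in [a,b]$, with equality iff $c = c'$, which proves both the existence of a minimum-depth element and its uniqueness. The proof is essentially a one-line observation once the deterministic walk is in place, so no real obstacle is anticipated; the only point requiring care is ensuring that the walk actually terminates at a node of $[a,b]$, which is handled by the assumption that $[a,b]$ is non-empty as a subset of the key set of $R$.
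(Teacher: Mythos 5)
Your proof is correct, but it takes a genuinely different route from the paper's. The paper argues by contradiction in one line: if $c \ne c'$ both lay in $[a,b]$ at minimum depth, their lowest common ancestor would have a key in $[\min(c,c'),\max(c,c')] \subseteq [a,b]$ (by the BST property) and strictly smaller depth, a contradiction. You instead give a constructive argument: you run the deterministic BST search walk for the interval and show, by comparing that walk against the root-to-$c'$ path for an arbitrary $c' \in [a,b]$, that the walk's stopping node $c$ is an ancestor of every key of $R$ lying in $[a,b]$. Both are sound; the paper's is shorter, while yours proves the strictly stronger structural fact that all of $[a,b] \cap R$ is contained in the subtree rooted at $c$ (and that $c$ is exactly the node returned by the standard BST predecessor/successor-style search for the interval), which makes the uniqueness of the min-depth element an immediate corollary rather than a contradiction. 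You also correctly flag the degenerate case where $[a,b]$ contains no keys of $R$, which the paper leaves implicit.
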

\vspace{-0.15in}
\begin{proof}
Suppose there are at least two elements $c$ and $c'$ with smallest
depth. Then the lowest common ancestor $LCA(c,c')$ would have smaller
depth, which is a contradiction.
\end{proof}

\vspace{-0.15in}
\begin{theorem}
The amortized cost of splay for accessing element $i$ is $O(d_{R}(i))$.\end{theorem}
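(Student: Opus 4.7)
The plan is to carry out the standard amortized analysis of Splay from~\cite{ST85}, but with the min-depth potential $\Phi_R = \sum_i \phi_R(i)$ replacing the usual weighted-rank potential. I will prove the access-lemma-style per-step inequality: for each local Splay operation on the accessed element $x$, the amortized cost is at most $3(\phi_R'(x) - \phi_R(x))$, with an additive $+1$ only in the terminal zig case. Summing telescopically across all Splay steps gives total amortized cost at most $3(\phi_R^{\mathrm{end}}(i) - \phi_R^{\mathrm{start}}(i)) + O(1)$. At the end of the access $i$ sits at the root, so $T(i) = [n]$ and $\phi_R^{\mathrm{end}}(i) = -2\min_j d_R(j) = 0$; initially $i \in T(i)$ forces $\phi_R^{\mathrm{start}}(i) \geq -2 d_R(i)$. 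Hence the total amortized cost is at most $6 d_R(i) + O(1) = O(d_R(i))$, which is the claimed bound.

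The heart of the argument is the per-step inequality, which I will verify by a short case analysis. Consider a zig-zig on $x, y, z$ along the left spine. The rotation preserves the set of keys in this region, so $T'(x) = T(z)$, and therefore $\phi_R'(x) = \phi_R(z)$. Substituting $\phi_R = -2\psi$, where $\psi(v) = \min_{j \in T(v)} d_R(j)$, and rearranging reduces the desired bound to an integer inequality in $\psi(x), \psi(y), \psi(z), \psi'(y)$, and $\psi'(z)$. By Fact~\ref{fact:unique} the unique min-depth element $c$ of $T(z)$ lies in exactly one of the three disjoint regions $T(x)$, $T(y)\setminus T(x)$, or $T(z)\setminus T(y)$. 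In every case one exhibits a strict integer gap of $1$: either $\psi(x) \geq \psi(z)+1$ when $c \notin T(x)$ (by uniqueness, the minimum strictly increases when shrinking from $T(z)$ to $T(x)$), or $\psi'(z) \geq \psi(z)+1$ when $c \in T(x)$ (the rotation ejects $c$ from $T'(z)$). This integer slack of $1$ absorbs the two units of actual rotation cost. The zig-zag and zig cases admit entirely analogous, and slightly simpler, case splits.

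The main obstacle is the per-step case analysis on the location of the unique min-depth element of $T(z)$: the strict integer gap of $+1$ that pays for the rotations arises precisely from the uniqueness guaranteed by Fact~\ref{fact:unique}. Once the per-step inequality is in place, the telescoping across splay steps and the boundary estimates on $\phi_R^{\mathrm{start}}(i)$ and $\phi_R^{\mathrm{end}}(i)$ are routine, and yield the $O(d_R(i))$ amortized bound.
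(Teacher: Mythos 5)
Your proposal is correct and follows essentially the same route as the paper's proof: the same min-depth potential, the same per-step inequality $\leq 3(\phi'_R(x) - \phi_R(x))$ obtained via the uniqueness Fact (case-splitting on where the min-depth element of $T'(x)$ sits among the disjoint subtrees), the same telescoping, and the same boundary estimates $\phi_R^{\mathrm{end}}(i) = 0$ and $\phi_R^{\mathrm{start}}(i) \geq -2d_R(i)$.
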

\vspace{-0.15in}
\begin{proof}
\let\qed\relax
Let $\phi^{b}(i)$ and $\phi^{a}(i)$ be the potential of $i$ before
and after splaying $i$. We have $\phi^{a}(i)=0$ because $i$ is
the root, and $\phi^{b}(i)\ge-2d_{R}(i)$. 

For each zigzig or zigzag step (see~\cite{ST85} for the description of the Splay algorithm), let $x,y,z$ be the elements in the
step where 
$d_{T}(x) > d_{T}(y) > d_{T}(z)$. Let $\phi(i)$ and $\phi'(i)$
be the potential before and after the step, and let $T$ and $T'$ be the tree before and after the step. 
It suffices to prove that the cost is at most $3(\phi'(x)-\phi(x))$.
This is because by telescoping, the total cost for splaying
$i$ will be $O(\phi^{a}(i)-\phi^{b}(i))=O(d_{R}(i))$, and the  %
amortized cost in the final zig step is trivially at most $1+2(\phi'(x)-\phi(x))$. %
We analyze the two cases. 

Zigzig: we have that $T(x)$ and $T'(z)$ are such that $T(x)\cap T'(z)=\emptyset$
and $T(x),T'(z)\subset T'(x)$. By Fact~\ref{fact:unique}, either $\phi(x)+2\le\phi'(x)$
or $\phi'(z)+2\le\phi'(x)$, so we have $2\le2\phi'(x)-\phi(x)-\phi'(z)$.
Therefore, the amortized cost is
\begin{eqnarray*}
2+\phi'(x)+\phi'(y)+\phi'(z)-\phi(x)-\phi(y)-\phi(z) & = & 2+\phi'(y)+\phi'(z)-\phi(x)-\phi(y)\\
 & \le & 2+\phi'(x)+\phi'(z)-2\phi(x)\\
 & \le & (2\phi'(x)-\phi(x)-\phi'(z))+\phi'(x)+\phi'(z)-2\phi(x)\\
 & = & 3(\phi'(x)-\phi(x)).
\end{eqnarray*}

Zigzag: we have that $T'(y)$ and $T'(z)$ are such that $T'(y)\cap T'(z)=\emptyset$
and $T'(y),T'(z)\subset T'(x)$. By Fact~\ref{fact:unique}, we have $2\le2\phi'(x)-\phi'(y)-\phi'(z)$. Therefore, the amortized cost is

\vspace{-0.15in}

\begin{eqnarray*}
2+\phi'(x)+\phi'(y)+\phi'(z)-\phi(x)-\phi(y)-\phi(z) & = & 2+\phi'(y)+\phi'(z)-\phi(x)-\phi(y)\\
 & \le & (2\phi'(x)-\phi'(y)-\phi'(z))+\phi'(y)+\phi'(z)-\phi(x)-\phi(y)\\
 & \le & 2(\phi'(x)-\phi(x)). 
\end{eqnarray*}

\end{proof}
\vspace{-0.15in}
It is instructive to observe the similarity between the min-depth potential
and the \emph{sum-of-logs} potential~\cite{ST85}, which is essentially the ``soft-min'' version of min-depth, if the weights are set to $4^{-d_{R}(i)}$. Such a weight assignment is used in proving static optimality e.g.\ in~\cite{Eppstein_blog}.

\subsection{Move-to-root is optimal when elements are randomly permuted}
\label{sec:MTR} 
Let $\pi:[n]\rightarrow[n]$ be a permutation. 
For any sequence $S\in[n]^{m}$, we denote by $\pi(S)=(\pi(s_{1}),\dots,\pi(s_{m}))$ the permuted sequence of $S$ by $\pi$. The \emph{key-independent optimality bound}
is defined as $\kiopt(S)=E_{\pi}[OPT(\pi(S)]$, where the expectation is over the uniform random distribution of permutations of size $n$. 

\vspace{-0.02in}
\begin{theorem}
	[\cite{Iacono05}]\label{thm:KIOPT=WS} $\kiopt(S)=\Theta(\WS(S)+n\log n)$
	for any sequence $S$%
	.\footnote{The term $n\log n$ is missing in \cite{Iacono05}.%
	}
\end{theorem}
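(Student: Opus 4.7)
The plan is to show matching upper and lower bounds on $\kiopt(S) = E_\pi[\OPT(\pi(S))]$. The crucial observation driving both directions is that the working set quantity $|w_S(j)|$ depends only on the access pattern---more precisely, on the sequence of last-touch times $\rho_S(j)$---and is invariant under permutation of key values. Formally, $\WS(\pi(S)) = \WS(S)$ for every permutation $\pi$.

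For the upper bound $\kiopt(S) \le O(\WS(S) + n\log n)$, I would bound $\OPT$ from above by a concrete algorithm, e.g., Splay, which achieves the working set property $\mathrm{Splay}(T) \le O(\WS(T) + |T|)$ for any sequence $T$. Applying this with $T = \pi(S)$ and using the invariance yields
$$\kiopt(S) \;\le\; E_\pi[\mathrm{Splay}(\pi(S))] \;\le\; O(\WS(S) + |S|).$$
The $|S|$ term is then absorbed into $\WS(S) + n\log n$: the constant cost per access is effectively counted in the working set under the standard $+1$ convention, while $\WS(S) \ge \Omega(n\log n)$ whenever the sequence touches all $n$ elements, since the first access to the $i$-th distinct element contributes $\log i$ and these sum to $\Theta(n\log n)$.

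For the lower bound $\kiopt(S) \ge \Omega(\WS(S) + n\log n)$, I would run an averaging argument exploiting the randomness of $\pi$. The idea is that for a uniformly random $\pi$, at time $j$ the requested element $\pi(s_j)$ is, conditional on any information the algorithm can have extracted from the pattern $S$ alone, uniformly distributed within the working set $w_S(j)$ (any internal relabeling of those $|w_S(j)|$ elements is equally likely). Any BST algorithm, deterministic or not, must pay expected cost $\Omega(\log|w_S(j)|)$ to access a uniformly random element from a set of size $|w_S(j)|$ in its current tree, since at most $2^d$ elements can sit at depth $\le d$. Summing over $j$ gives $\kiopt(S) \ge \Omega(\WS(S))$, and the $+n\log n$ term is already absorbed by $\WS(S)$ as above.

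The main obstacle will be making the ``uniformly random inside the working set'' argument rigorous for an arbitrary \emph{offline} $\OPT$, which is allowed to depend on the entire sequence $\pi(S)$ and can therefore pre-shape its tree. The cleanest route is to push the expectation over $\pi$ to the outside and apply a Yao/averaging style argument: even though each fixed $\OPT(\pi(S))$ knows $\pi$, averaging the single-access cost over all $\pi$ that are consistent with the pattern seen so far forces an expected cost of at least $\Omega(\log|w_S(j)|)$ per step. An alternative, perhaps cleaner route would be to exhibit a structural lower bound (for instance an interleave-style quantity computed from the pattern $S$) that is itself $\Omega(\WS(S))$ in expectation over $\pi$, sidestepping the need to reason directly about $\OPT$'s state.
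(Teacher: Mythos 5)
The paper does not prove this theorem---it is cited from Iacono (Algorithmica~2005), with the footnote flagging the missing $n\log n$ term---so there is no in-paper proof to compare against; the statement is used only by the corollary in \textsection\,\ref{app_MTR}, via the direction $\WS(S) = O(\kiopt(S))$. Your upper bound is essentially right: $\WS$ is permutation-invariant, and the working-set theorem for Splay gives a cost of $O(m + \WS(\pi(S)) + n\log n)$ for every $\pi$, where $n\log n$ is the initial-potential term. So the additive $n\log n$ in the statement comes for free; it is not an absorption into ``$\WS(S) = \Omega(n\log n)$'' as you argue, which also does not dispose of the $+m$ when $S$ has many repeat accesses to few keys.

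The lower bound has a genuine gap, and it is more fundamental than the obstacle you flag at the end. The conditioning ``$\pi(s_j)$ is uniform over $w_S(j)$ given what the algorithm can extract from the pattern $S$'' does not match what any BST algorithm knows: the algorithm has observed the actual keys $\pi(s_1),\dots,\pi(s_{j-1})$, so for every repeat access the value $\pi(s_j)$ is \emph{determined} by the past, not random. Since repeat accesses carry essentially all of the working-set weight once $m \gg n\log n$, no per-access expected-depth bound of $\Omega(\log|w_S(j)|)$ can be derived this way even for online algorithms, and certainly not for offline $\OPT$, whose tree at time $j$ may depend on the whole sequence $\pi(S)$. The alternative you gesture at in your last sentence is in fact the correct route and is what Iacono's proof does: apply Wilber's first (interleave) lower bound, which lower-bounds $\OPT$ for \emph{every} fixed input as a purely combinatorial function of the point set, and then show by a counting argument that its expectation over a uniform $\pi$ is $\Omega(\WS(S) + n\log n)$. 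That calculation should be the heart of the proof, not a fallback.
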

\vspace{-0.02in}

This shows that the expected cost over a random order of the elements,
of the optimal algorithm is equivalent to the working set bound up
to a constant factor if the length of the sequence is $m\ge n\log n$. 
In this section we show that in the key-independent setting even the very simple heuristic that just rotates the accessed element to the root, is optimal. This algorithm is called Move-to-root~\cite{Allen-Munro}, and we denote its total cost for accessing a sequence $S$ from initial tree $T$ by $MTR_{T}(S)$.

\vspace{-0.02in}
\begin{definition}
	For any sequence $S$ and any initial tree $T$, the \emph{key-independent
		move-to-root bound} is $\kimtr_{T}(S)=E_{\pi}[MTR_{T}(\pi(S)]$ where
	$\pi$ is a random permutation.\end{definition}
\vspace{-0.02in}

The following theorem shows that key-independent move-to-root (starting from a balanced
tree), key-independent optimality, and working set bounds are all
equivalent when the length of the sequence is $m\ge n\log n$. The proof is deferred to \textsection\,\ref{app_MTR}. 
\vspace{-0.02in}
\begin{theorem}
	Let $T$ be a BST of logarithmic depth. Then, $\kimtr_{T}(S)=\Theta(\kiopt(S)+n\log n)=\Theta(\WS(S)+n\log n)$
	for any sequence $S$.\end{theorem}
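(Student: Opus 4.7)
The two asymptotic equalities in the statement follow directly from Theorem~\ref{thm:KIOPT=WS}, and the lower bound $\kimtr_{T}(S) \geq \kiopt(S) = \Theta(\WS(S) + n\log n)$ is immediate because Move-to-root is a specific online algorithm, so $MTR_T(\pi(S)) \geq \opt(\pi(S))$ for every $\pi$. The remaining task is the upper bound $\kimtr_{T}(S) \leq O(\WS(S) + n\log n)$, which the rest of the plan addresses.

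The first step is a structural lemma: for any initial tree $T$ and any prefix of MTR operations, the current tree equals the unique treap on $[n]$ in which the priority of a touched key $k$ is the time of its most recent access, and the priority of an untouched key is the initial priority inherited from $T$ (chosen so that $T$ itself is a treap in these initial priorities). This is verified inductively: an MTR step moves the accessed key $k$ to the root, which preserves the BST property and, after bumping $\tau(k)$ to the current time, makes $k$ the unique highest-priority element, thus maintaining the treap heap condition. From the standard characterization of treap ancestors (element $y$ is an ancestor of $x$ iff $y$ has the maximum priority among elements whose key lies in the closed interval between $x$ and $y$), one derives a crucial restriction identity: the depth of any key $x$ in the full treap equals its depth in the sub-treap obtained by keeping only the elements of priority at least $\tau(x)$, since lower-priority elements are inserted later in the decreasing-priority order and can never become ancestors of $x$.

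The second step bounds the expected depth of $\pi(s_j)$ just before the $j$-th access, in two cases. If $\rho_S(j) \geq 1$, every element of $w_S(j) \setminus \{s_j\}$ was accessed strictly after $\rho_S(j)$, hence has priority $> \tau(s_j) = \rho_S(j)$, whereas every element outside $w_S(j)$ was accessed no later than $\rho_S(j)-1$ or never, hence has priority $< \tau(s_j)$. The restriction identity thus reduces to the sub-treap on $\pi(w_S(j))$, of size $r_j = |w_S(j)|$, in which $\pi(s_j)$ is the unique minimum-priority element. Since $\pi$ is a uniformly random permutation, its restriction to $w_S(j)$ is a uniformly random bijection onto a uniformly random $r_j$-subset of $[n]$, so the keys in the sub-treap are in uniformly random relative order. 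The standard random-BST analysis (depth of the last-inserted element in a size-$r_j$ BST with uniformly random insertion keys) then gives expected depth $O(\log r_j) = O(\log |w_S(j)|)$. If $\rho_S(j) = 0$, the same restriction identity yields a sub-treap of size at most $n$ and expected depth $O(\log n)$.

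Summing, the at-most-$n$ first-touch accesses contribute $O(n \log n)$, while the remaining accesses contribute $\sum_{j \,:\, \rho_S(j) \geq 1} O(\log |w_S(j)|) = O(\WS(S))$, establishing the upper bound. The main technical obstacle is the clean separation of priorities across $w_S(j)$, which lets the restriction identity apply; once this is in place, the remainder is a direct random-BST depth estimate in the setting of fixed priorities and uniformly random keys.
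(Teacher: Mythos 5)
Your proof is correct, and it follows the same underlying route as the paper's, re-expressed in treap language. The paper's argument rests on two ingredients: the Allen--Munro ancestor characterization for Move-to-root (Lemma~\ref{lem:AM}) and the observation that recently touched keys form a connected top component (Fact~\ref{prop:move-to-root structure}); these are precisely your "treap with priorities $=$ last access times" invariant and its restriction identity, so the combinatorics is identical. What the treap phrasing buys you is a cleaner interface to the standard random-BST depth bound, which sidesteps the paper's explicit per-ancestor probability computation (the paper claims the probability that $\pi(s_\ell)$ is an ancestor is exactly $1/|J\cap[\ell,j]|$, which is slightly off as stated — the correct ancestor condition involves only the keys on the same side of $\pi(s_j)$ — but still $\Theta(1/|J\cap[\ell,j]|)$, so the harmonic sum is unaffected). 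The one place your write-up is a bit loose is the first-access case: there the restricted sub-treap contains both touched elements (random keys) and untouched ones (keys of bounded $d_T$, \emph{not} in uniformly random relative order), so its depth is not literally a random-BST quantity. The fix is short and matches the paper's accounting: split the ancestors of $\pi(s_j)$ into touched and untouched. The touched ones contribute expected $O(\log n)$ by exactly your random-BST argument (untouched keys never block a touched ancestor, having strictly lower priority), while the untouched ones are all ancestors of $\pi(s_j)$ in the reference tree $T$ and hence number at most $d_T(\pi(s_j)) = O(\log n)$ — this is where the "logarithmic depth'' hypothesis on $T$ is actually used, and it corresponds to the paper's $f(n)=\sum_i d_T(i)$ term.
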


We remark that Move-to-root is known to have another property related to randomized inputs: If accesses are drawn independently from some distribution, then Move-to-root achieves static optimality for that distribution~\cite{Allen-Munro}.

\section{A new landscape via $k$-lazy fingers}
\label{sec:lazy-finger}
In this section we study the lazy finger bound and its generalization to multiple fingers. 
In \textsection\,\ref{sec:simulation lazy fingers} we argue that $\opt(X) \leq O(\log k) \cdot \LF^k(X)$ for all sequences $X$, refining the result of \cite{DemaineILO13}, which had an overhead factor of $O(k)$ instead of $O(\log k)$. This bound is essentially tight: We show in \textsection\,\ref{sec:separation} that there is a sequence $X$ for which $\opt(X) = \Theta(\log k) \cdot \LF^k(X)$.  

\subsection{Applications of lazy fingers}
\label{sec:lfapp}
\paragraph*{Application 1: Lazy fingers and decomposability.}
\vspace{-0.1in}
First, we give necessary definitions. 
Let $\sigma = (\sigma(1),\ldots, \sigma(n))$ be a permutation. 
For $a,b: 1 \leq a< b \leq n$, we say that $[a,b]$ is a {\em block} of $\sigma$ if $\set{\sigma(a),\ldots, \sigma(b)} = \set{c,\ldots, d}$ for some integer $c,d \in [n]$.
A {\em block partition} of $\sigma$ is a partition of $[n]$ into $k$ blocks $[a_i, b_i]$ such that $(\bigcup_i [a_i, b_i]) \cap \N = [n]$.   
For such a partition, for each $i=1,\ldots, k$, consider a permutation $\sigma_i\in S_{b_i -a_i+1}$ obtained as an order-isomorphic permutation when restricting $\sigma$ on $[a_i, b_i]$.
For each $i$, let $q_i \in [a_i, b_i]$ be a representative element of $i$.  
The permutation $\tilde \sigma \in [k]^k$ that is order-isomorphic to $\set{\sigma(q_1),\ldots, \sigma(q_k)}$ is called a {\em skeleton}  of the block partition. 
We may view $\sigma$ as a {\em deflation} $\tilde \sigma[\sigma_1, \ldots, \sigma_k]$.  

Now we provide a recursive definition of $d$-decomposable permutations. We refer to~\cite{ChalermsookG0MS15} for more details.
A permutation $\sigma$ is $d$-decomposable if 
$\sigma = (1)$, or  
$\sigma = \tilde \sigma[\sigma_1,\ldots, \sigma_{d'}]$ for some $d' \leq d$ and each permutation $\sigma_i$ is $d$-decomposable.  

\vspace{-0.05in}
\begin{lemma} Let $S$ be a $k$-decomposable permutation of length $n$. Then $\LF(S) \le 4(\abs{S} - 1) \ceil{\log k}$.
\end{lemma}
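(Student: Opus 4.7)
The plan is to construct, for a $k$-decomposable permutation $S$ on $[n]$, a reference BST $T$ satisfying $\LF_T(S) \le 4(|S|-1)\lceil\log k\rceil$; since $\LF(S) = \min_T \LF_T(S)$, this proves the lemma. The construction and analysis proceed by induction on the depth of the decomposition of $S$.

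For the inductive step, write $S = \tilde\sigma[\sigma_1, \ldots, \sigma_{d'}]$ with $d' \le k$, and let $I_1 < I_2 < \cdots < I_{d'}$ be the blocks, which partition $[n]$ into contiguous key-ranges. First I would recursively build $T_i$ on $I_i$ (satisfying the inductive bound). Then I would combine the $T_i$'s under a balanced binary \emph{skeleton} of depth $\lceil\log d'\rceil$: the skeleton is a balanced BST whose $d'$ leaves correspond to the blocks (the $i$-th leaf being replaced by the recursive $T_i$), and whose $d'-1$ internal separator positions are filled with ``stolen'' keys --- for instance, the maximum $d_j$ of the rightmost block in each separator's left half. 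Since $d_j$ is the maximum of its block and hence a rightmost leaf in $T_j$, removing it does not lengthen any distance between the remaining keys of $I_j$, so $\LF$ restricted to the modified $T_j$ is at most $\LF_{T_j}(\sigma_j)$, preserving the inductive guarantee.

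For the analysis, I would split $\LF_T(S)$ into within-block pairs (both $s_{j-1}, s_j \in I_i$ for some $i$) and cross-block pairs (transitions from $I_i$ to $I_{i+1}$). Within-block pairs contribute as in the (modified) $T_i$, except that the at-most-two pairs per block involving a stolen $d_j$ incur an extra $O(\lceil\log d'\rceil)$ from $d_j$'s skeleton position relative to $T_j$; each of the $d'-1$ cross-block pairs satisfies $d_T(s_{b_i}, s_{a_{i+1}}) \le d_{T_i}(s_{b_i}) + d_{T_{i+1}}(s_{a_{i+1}}) + O(\lceil\log d'\rceil)$ by routing through the skeleton. The depth-of-endpoint terms for $T_i$ appearing in the cross-block costs do not fit directly into a recursion on $\LF_{T_i}(\sigma_i)$, so I would strengthen the inductive invariant and bound $\widetilde F(S) = \LF_T(S) + d_T(s_1) + d_T(s_{|S|})$. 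The endpoint-depth terms in each $T_i$ then charge naturally to the parent level's cross-block cost, yielding a recursion of the form $\widetilde F(S) \le \sum_i \widetilde F(\sigma_i) + O(d'\lceil\log k\rceil)$. Telescoping over the decomposition tree (which has $|S|$ leaves and branching at least $2$ at each of its at most $|S|-1$ internal nodes) then gives $\widetilde F(S) = O(|S|\lceil\log k\rceil)$, from which $\LF(S) \le \LF_T(S) \le \widetilde F(S)$ inherits the bound.

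The main obstacles will be: (i)~verifying the BST correctness of the combination step under the stolen-key device and that each modified $T_i$ remains a connected subtree in $T$ so that within-block distances carry over cleanly; and (ii)~tightening the book-keeping to obtain the precise constant $4$, since a naive execution of the above analysis yields a larger $O(1)$ factor, and reaching $4$ likely requires either a more careful skeleton (for instance, one in which adjacent block-leaves sit at distance $\lceil\log d'\rceil$ rather than $2\lceil\log d'\rceil$) or a sharper amortized charge. An alternative route would be to invoke the equivalence $\LF(S) = \Theta(\WDF(S))$ of Bose et al.\ and exhibit a recursive weight function with $\WDF_w(S) = O(|S|\log k)$, but the explicit tree construction appears more transparent.
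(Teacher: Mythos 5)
Your proposal is essentially the same recursive construction the paper uses (balanced skeleton of depth $\lceil\log d'\rceil$ combining the recursively built block-trees, with the recursion strengthened to carry the depths $d_T(s_1)$ and $d_T(s_{|S|})$ of the endpoints), and it is correct. The one genuine difference is how you populate the $d'-1$ internal nodes of the skeleton. The paper simply explains, ``We remark that the tree will have auxiliary elements,'' i.e., it places \emph{fractional} keys at the separator positions, and relies on the equivalence $\LF(S)=\Theta(\widehat{\LF}(S))$ proved elsewhere in the paper to translate back to a tree on $[n]$ alone. You instead propose to steal, for each separator, the maximum key of the rightmost block in its left half. That avoids invoking the auxiliary-element machinery and bounds $\LF$ directly, but it costs you two things. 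First, the ``$d_j$ is a rightmost leaf'' claim is not literally true --- the maximum element of a BST need not be a leaf --- though your conclusion survives because the maximum is never on the path between two other keys of its block, so its removal preserves all within-block distances; you should argue it that way. Second, the stolen key $d_j$ is still accessed by $\sigma_j$, so you have to track its new depth in the skeleton and the two altered transitions involving it, whereas with fractional separators nothing of the sort arises. Your worry about hitting the constant $4$ is, incidentally, unfounded even with the coarse $2\lceil\log k\rceil$ skeleton distance between adjacent leaves: the paper's bookkeeping gives $2j\lceil\log k\rceil + \sum_\ell 4(|S_\ell|-1)\lceil\log k\rceil = (4|S|-2j)\lceil\log k\rceil \le 4(|S|-1)\lceil\log k\rceil$ using only $j\ge 2$, so the amortization via the $-4j$ term from the recursion does the work, not a sharper skeleton. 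Your alternative route via $\WDF$ and a recursively defined weight function would also work, essentially by transporting the same recursion through the Bose et al.\ equivalence, but as you say it is less transparent.
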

\vspace{-0.1in}
\begin{proof} 
It is sufficient to define a reference tree $T$ for which $LF_T(S)$ achieves such bound. 
We remark that the tree will have auxiliary elements.  
  We construct $T$ recursively. 
If $S$ has length one, $T$ has a single node and this node is labeled by the key in $S$. 
Clearly, $\LF_T(S) = 0$.

Otherwise, let $S = \tilde S[S_1,\ldots,S_j]$ with $j \in [k]$ the outermost partition of $S$. Denote by $T_i$ the tree for $S_i$ that has been inductively constructed.  
Let $T_0$ be a BST of depth at most $\ceil{\log j}$ and with $j$ leaves.
Identify the $i$-th leaf with the root of $T_i$ and assign keys to the internal nodes of $T_0$ such that the resulting tree is a valid BST. 
Let $r_i$ be the root of $T_i$, $0 \le i \le j$ and let $r = r_0$ be the root of $T$. 
Then

\vspace{-0.15in}
\begin{align*}
d_T(r,s_1) & \le \ceil{\log k} + d_{T_1}(r_1,s_1) \\
d_T(r,s_n) & \le  \ceil{\log k} + d_{T_j}(r_j,s_n) \\
d_T(s_{i-1},s_i) & \le  \begin{cases} d_{S_\ell}(s_{i-1},s_i)    & \text{if $s_{i-1},s_i \in S_\ell$}\\
                    2\ceil{\log k} + d_{T_\ell}(r_\ell,s_{i-1}) + d_{T_{\ell + 1}}(r_{\ell+1},s_i)  &\text{if $s_{i-1} \in S_\ell$ and $s_i \in S_{\ell + 1}$,}\end{cases}
\end{align*}
and hence 
\begin{align*}
\LF_T(S) & = d_T(r,s_0) + \sum_{i \ge 2} d_T(s_{i-1},s_i) + d_T(s_n,r) \\
&\le 2j \ceil{\log k}+ \sum_{1 \le \ell \le j} LF_{T_{\ell}}(S_{\ell}) \quad
\le \quad 2j  \ceil{\log k} + \sum_{1 \le \ell \le j} 4(\abs{S_\ell} - 1) \ceil{\log k}\\
&\le (2j - 4j + 4\sum_{1 \le \ell \le j} \abs{S_i})\ceil{\log k}\ \  \le \ \ 4(\abs{S} - 1),
\end{align*}
where the last inequality uses $j \ge 2$. 
\end{proof}

Combining this with the Iacono-Langerman result~\cite{LI16}, we conclude that 
for any $k$-decomposable sequence $S$, we have $\Greedy(S) \leq O(|S| \log k)$.  
This strengthens our earlier result~\cite{ChalermsookG0MS15} that $\Greedy(S) \leq n2^{O(k^{2})}$.

\paragraph*{Application 2: Improved relation between $\opt(S)$ and $m(S)$.}
Let $k = m(S)$, i.e.\ the smallest integer such that $S$ is $k$-monotone. 
In \cite{ChalermsookG0MS15} we show that $OPT(S)\leq \Greedy(S) \leq |S|2^{O(k^{2})}$.
Here we show the substantially stronger bound $\opt(S) \leq O(|S| k \log k)$, raising the obvious open question of whether any online BST can match this bound. 

\begin{lemma}
Let $S$ be a $k$-monotone sequence of length $n$. Then $LF^k(S)=O(n k)$.\end{lemma}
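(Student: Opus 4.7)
The plan is to exploit the structural consequence of $k$-monotonicity via a decomposition of $S$ into monotone subsequences, each handled by a dedicated finger on a fixed BST. Without loss of generality, assume $S$ avoids the pattern $(1,2,\ldots,k)$ (the decreasing case is symmetric). Since $S$ then contains no increasing subsequence of length $k$, the dual of Dilworth's theorem (Mirsky's theorem) lets us partition the positions of $S$ into at most $k-1$ subsequences $D_1, \ldots, D_{k-1}$, each of which accesses keys in (weakly) decreasing order.

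Next, I will take the reference tree $T$ to be any BST on $[n]$ (a balanced one works), and assign one of the $k$ available fingers to each $D_j$, initially placed at the first element of $D_j$. Finger $j$ is then used for every access of $D_j$, always moving from the previously accessed element of $D_j$ to the next one.

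The crux is the claim that the total movement cost of a single finger $j$ is $O(n)$, regardless of $|D_j|$. Because the elements of $D_j$ are visited in sorted key order, the finger's walk in $T$ corresponds to a (reverse) in-order traversal of the Steiner subtree $T_j \subseteq T$ spanned by $D_j$. Since any subtree of a BST is itself a BST whose in-order traversal visits nodes in sorted key order, each edge of $T_j$ is traversed at most twice by the walk; if $D_j = (a_1,\ldots,a_{n_j})$ in access order, then $\sum_{i=1}^{n_j-1} d_T(a_i, a_{i+1}) \leq 2|E(T_j)| \leq 2(n-1)$. Adding the $n_j$ unit-cost terms from the $\LF^k$ definition yields a cost of $O(n + n_j)$ per subsequence.

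Summing over the $k-1$ subsequences then gives $\LF^k(S) \leq \sum_{j=1}^{k-1} O(n + n_j) = O(nk)$, as claimed. The only step that needs genuine care is the in-order traversal / Steiner subtree argument, but it is essentially the same fact that underlies the classical linear-cost bound for sequential access; the rest is a direct consequence of Mirsky's theorem and the definition of $\LF^k$.
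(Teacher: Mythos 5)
Your proof is correct and takes essentially the same approach as the paper: partition $S$ into $O(k)$ monotone subsequences via Dilworth/Mirsky, dedicate one finger to each, and observe that a finger tracing a monotone subsequence performs an in-order (or reverse in-order) walk in the reference tree at total cost $O(n)$. The only cosmetic differences are that you handle the "avoids $(1,\ldots,k)$" case with decreasing subsequences whereas the paper handles the symmetric "avoids $(k+1,\ldots,1)$" case with increasing ones, and that you spell out the Steiner-subtree/Euler-tour bound $\sum_i d_T(a_i,a_{i+1}) \le 2(n-1)$, which the paper leaves as "in-order traversal ... takes at most $O(n)$ steps."
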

\begin{proof}
	$S$ avoids $(k+1,\ldots, 1)$ or $(1,\ldots, k+1)$. Assume that the first case holds (the argument for the other case is symmetric). 
Then, $S$ can be partitioned into $k$ subsequences  
	$S_{1},\dots,S_{k}$, each of them increasing, furthermore, such a partition can be computed online. 
We argue that $LF^k_T(S)  = O(n k)$ for any BST $T$.

Let $T$ be any binary search tree containing $[n]$ as elements. 
Consider $k$ lazy fingers $f_{1},\dots,f_{k}$
	in $T$.
We define the strategy for the fingers as follows: When $s_j$ is accessed, if $s_j \in S_i$, then move finger $f_i$ to serve this request.  
	Observe that $f_{i}$ only needs to do in-order traversal
	in $T$ (since the subsequence $S_i$ is increasing), which takes at most $O(n)$ steps. 
Thus, $LF^k_{T}(S)=O(nk)$.
\end{proof}
By \Cref{thm:mainlf}, we can simulate $k$-finger with an overhead factor of 
$O(\log k)$, concluding with the following theorem.
\begin{theorem}
Let $S$ be a $k$-monotone sequence. Then 
$OPT(S)=O(|S| \cdot k\log k)$.
\end{theorem}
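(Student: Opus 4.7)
The proof is essentially a direct composition of the two preceding results, so the plan is short. First I would invoke the immediately preceding lemma, which asserts that a $k$-monotone sequence $S$ satisfies $LF^k(S) = O(|S| \cdot k)$. The intuition behind that lemma is already laid out: since $S$ avoids either $(1,\dots,k+1)$ or $(k+1,\dots,1)$, by the Erd\H{o}s--Szekeres style argument it decomposes into $k$ monotone subsequences, and each of the $k$ lazy fingers in the reference tree can serve one subsequence by an in-order traversal, costing a total of $O(|S|\cdot k)$ pointer moves.

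Next I would apply Theorem~\ref{thm:mainlf}, which provides the simulation bound $OPT(X) \leq O(\log k) \cdot LF^k(X)$ for an arbitrary access sequence $X$ and parameter $k \in \mathbb{N}$. Plugging the preceding lemma into this inequality yields
\[
OPT(S) \;\leq\; O(\log k) \cdot LF^k(S) \;\leq\; O(\log k) \cdot O(|S|\cdot k) \;=\; O(|S| \cdot k \log k),
\]
which is exactly the desired bound.

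The only genuine content lies in the two results being combined, which are stated/proved earlier in the paper; there is no remaining obstacle in this final step itself. The reason to state it as a separate theorem is merely to highlight the improvement over the previous $|S| \cdot 2^{O(k^2)}$ bound from \cite{ChalermsookG0MS15}, and to emphasize that the improvement is obtained \emph{offline}: the simulation in Theorem~\ref{thm:mainlf} is not known to be realized by an online algorithm for $k \geq 2$, so this bound on $OPT$ does not immediately transfer to Splay or Greedy, which explains the open question raised just before the theorem.
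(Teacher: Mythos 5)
Your proposal is correct and matches the paper's own argument exactly: the paper also obtains the theorem by combining the preceding lemma ($LF^k(S)=O(|S|\cdot k)$ for $k$-monotone $S$) with Theorem~\ref{thm:mainlf}'s $O(\log k)$-overhead simulation. Your added remarks on the offline nature of the bound and the resulting open question are consistent with the paper's surrounding discussion.
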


\section{Combining easy sequences}
\label{sec:interleaving}
Recall that for any sequence $S\in[n]^{m}$, $\opt(S)$ denotes
the optimal cost for executing $S$ on a BST which contains $[n]$
as elements. Let $P=([a_{1},b_{1}],\dots,[a_{k},b_{k}])$ be a partitioning
of $[n]$ into $k$ intervals. That is, $a_{1}=1,b_{k}=n$ and $b_{i}=a_{i+1}-1$.
Given $P$, we can define $S_{1},\dots,S_{k}$ and $\tilde{S}$ as
follows. For each $1\le i\le k$, $S_{i}$ is obtained from $S$ by
restriction to $[a_{i},b_{i}]$. That is, for each $s_{j}\in[a_{i},b_{i}]$
starting from $j=1$ to $m$, we append $s_{j}-a_{i}+1$ to the sequence
$S_{i}$. Let $m_{i}$ be the length of $S_{i}$ and hence $\sum_{i=1}^{k}m_{i}=m$.
Next, we define $\tilde{S}=(\tilde{s}_{1},\dots,\tilde{s}_{m})\in[k]^{m}$.
For each $j\le m$, if $s_{j}\in[a_{i},b_{i}]$, then $\tilde{s}_{j}=i$. 

The main theorem of this section is the following. See \Cref{sec:interleaving_appendix} for the proof.
\begin{theorem}
[Time-interleaving Bound]\label{thm:interleaving}For any sequence
$S$ and a partition $P$ of $[n]$ into $k$ intervals, $\opt(S)\le\sum_{i=1}^{k}\opt(S_{i})+3\opt(\tilde{S})$.
\end{theorem}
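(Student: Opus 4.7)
My plan is to prove the bound through the arborally satisfied set (geometric) view of Demaine, Harmon, Iacono, Kane, and Patrascu, under which $\opt(X)$ equals, up to a constant factor, the smallest size of an arborally satisfied superset of the access-point set $P_X = \{(x_j, j) : j \in [m]\} \subseteq [n] \times [m]$ (i.e., a set in which every axis-aligned rectangle spanned by two of its points contains at least one additional point of the set). The task then is to assemble minimum satisfied sets for the components $\tilde S$ and $S_1, \ldots, S_k$ into a satisfied set for $S$ whose size realizes the claimed bound.

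Concretely, I would first fix minimum satisfied sets $\sigma_i^*$ for each $S_i$ and $\tilde \sigma^*$ for $\tilde S$, of sizes $\Theta(\opt(S_i))$ and $\Theta(\opt(\tilde S))$ respectively. I would then build a candidate set $\sigma \subseteq [n] \times [m]$ for $P_S$ from two contributions. \emph{Inner embeddings:} for each $i$, embed $\sigma_i^*$ into the strip $[a_i, b_i] \times \{j : \tilde s_j = i\}$ by the monotone map that shifts $x$ by $a_i - 1$ and renames the $t$-th row of $\sigma_i^*$ to the time in $S$ of the $t$-th access of interval $i$. \emph{Outer representatives:} for each outer point $(i, j) \in \tilde \sigma^*$, insert up to three representatives in $[a_i, b_i] \times \{j\}$, namely the interval endpoints $(a_i, j)$ and $(b_i, j)$, together with the access column $(s_j, j)$ in the case $\tilde s_j = i$. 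By construction $P_S \subseteq \sigma$ via the inner embeddings (which contain the image of $P_{S_i}$ for each $i$), and $|\sigma| \le \sum_i |\sigma_i^*| + 3|\tilde \sigma^*|$, so the theorem follows the moment $\sigma$ is shown to be arborally satisfied.

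The heart of the proof is the satisfied-ness check. For two points $p, q \in \sigma$ whose $x$-coordinates both lie in the same interval $i$, satisfaction follows from a combination of the monotone pullback of the rectangle through the strip embedding into $\sigma_i^*$'s domain (handling all points whose columns correspond to access columns of $S_i$) and the presence of partner outer representatives at the same interval-row (handling points at $a_i$ or $b_i$, whose opposite corners in the rectangle are themselves representatives of $(i, \cdot) \in \tilde \sigma^*$). For two points in distinct intervals $i_p \ne i_q$, I would canonically associate to each an outer point of $\tilde \sigma^*$ --- inner embeddings project into $P_{\tilde S} \subseteq \tilde \sigma^*$ since the associated row is an access time of the relevant interval, while outer representatives project to their defining outer point --- and invoke arboreal satisfied-ness of $\tilde \sigma^*$ to produce a third outer witness $(i_*, t_*)$. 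One of the three representatives of $(i_*, t_*)$ is then guaranteed to lie inside the $[n] \times [m]$ rectangle spanned by $p, q$: an interval endpoint $a_{i_*}$ or $b_{i_*}$ when $i_*$ is strictly between $i_p, i_q$, and either $b_{i_p}$ or $a_{i_q}$ in the degenerate cases $i_* = i_p$ or $i_* = i_q$.

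The main obstacle is executing this case analysis cleanly: in particular, handling mixed pairs (one inner embedding, one outer representative) and verifying that the chosen witness is strictly distinct from $p$ and $q$ and lies in the closed rectangle. The factor three in the statement is exactly the cost of covering all these subcases: two interval-endpoint representatives handle the generic case of an outer witness in a strictly intermediate interval, while the third ``access column'' representative patches degenerate configurations where the witness interval coincides with $i_p$ or $i_q$ at an access time of $S$.
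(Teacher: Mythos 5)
Your proposal is correct, but it takes a genuinely different route from the paper. The paper's proof is entirely tree-based: it first shows (via an auxiliary-elements lemma) that $\opt$ is unaffected by inserting extra keys into the key space, then proves a ``leaves'' lemma stating that any sequence can be served at cost $3\opt(S)$ on a tree in which all real keys are leaves of an auxiliary scaffold (replace each node $i$ by a chain $i_L, i_R, i$); the final construction glues the optimal trees/algorithms for the $S_i$ under the leaves of such a scaffold built for $\tilde S$, yielding the constant $3$ directly. Your proof instead works in the arborally-satisfied-set model of Demaine et al.: inner monotone embeddings of the $\sigma_i^*$ into disjoint strips, plus (up to) three representatives per outer point of $\tilde \sigma^*$, with a case analysis to verify satisfaction. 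I checked the cases and the argument goes through: the crucial observation (which you use implicitly and is worth making explicit) is that for \emph{any} point of $\sigma$ with $x$-coordinate in interval $i$ and row $j$, the pair $(i,j)$ lies in $\tilde\sigma^*$ — for inner points because $(i,j)\in P_{\tilde S}\subseteq\tilde\sigma^*$, for outer representatives by construction — so the opposite-corner witnesses $(a_i,j)$, $(b_i,j)$ are always available, and the same-interval cases reduce cleanly; the cross-interval cases discharge to a witness of $\tilde\sigma^*$ whose representative $b_{i_p}$ or $a_{i_q}$ lies strictly between the two $x$-coordinates (or at the nearer one, in which case distinctness follows from $t_* \ne j_p$ or $t_* \ne j_q$). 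The trade-offs: the paper's algorithmic proof is shorter, needs no case analysis, and delivers the exact constant $3$ in the stated cost model; your geometric proof is self-contained in the ASS world and avoids the auxiliary-elements and leaves lemmas, but inherits the constant factor hidden in the BST-vs.-geometric equivalence, so it yields $\opt(S) = O(\sum_i \opt(S_i) + \opt(\tilde S))$ rather than the literal inequality with coefficient $3$ — which is all that the paper's applications (the tilted-grid corollary and Theorem~\ref{thm:hierarchy}) actually use.
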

This theorem bounds the optimal cost of any sequence $S$
that can be obtained by ``interleaving'' $k$ sequences $S_{1},\dots,S_{k}$
according to $\tilde{S}$. To illustrate the power of this result,
we give a bound for the optimal cost of the ``tilted grid'' sequence. 
Let $\ell = \sqrt{n}$ and consider the point set $\pset = \set{(i \ell + (j-1), j \ell + i-1): i,j \in [\ell] }.$
It is easy to check that there are no two points aligning on $x$ or $y$ coordinates, therefore $\pset$ corresponds to a permutation $S \in [n]^n$. In~\cite{ChalermsookG0MS15} it is shown that none of the known bounds imply $\opt(S)=O(n)$. We observe that the tilted grid sequence can be seen as a special case of a broad family of ``perturbed grid''-type sequences, amenable to similar analysis.

\begin{corollary}
Let $S$ be the tilted grid sequence. Then $\opt(S)=O(n)$.\end{corollary}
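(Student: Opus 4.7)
The plan is to apply the interleaving theorem (Theorem~\ref{thm:interleaving}) to the tilted grid sequence $S$, partitioning the keyspace $[n]$ into $\ell = \sqrt{n}$ consecutive intervals of size $\ell$ each. Concretely, let $I_j$ denote the $j$-th such interval, containing the keys of the form $\{j\ell, j\ell + 1, \ldots, (j+1)\ell - 1\}$.

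First, I would verify the structure of each restricted subsequence $S_j$. From the explicit formula defining $\pset$, a key of the form $j\ell + (i-1)$ (which belongs to $I_j$) is accessed at time $i\ell + (j-1)$. Hence, as $i$ ranges over $[\ell]$, the subsequence $S_j$ visits the $\ell$ keys of $I_j$ in strictly increasing order. After rescaling, $S_j$ is the sequential access $(1, 2, \ldots, \ell)$, and the classical sequential access bound (a sound BST bound of linear cost) gives $\opt(S_j) = O(\ell)$. Summing over the $\ell$ intervals yields $\sum_{j=1}^{\ell} \opt(S_j) = O(\ell^2) = O(n)$.

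Next, I would analyze the template $\tilde S$. Within each block of $\ell$ consecutive time steps (corresponding to a fixed value of $i$), exactly one key is accessed from each $I_j$, visited in increasing order of $j$. Therefore $\tilde S$ is the sequence $(1, 2, \ldots, \ell)$ concatenated $\ell$ times. The crucial and most delicate step is bounding $\opt(\tilde S) = O(n)$: the sequential access bound alone is not enough here, since its argument breaks when each pass jumps from $\ell$ back to $1$. I would use the lazy finger bound with a balanced BST $T$ on $[\ell]$ as the reference tree. Within one pass, $\sum_{i=1}^{\ell-1} d_T(i, i+1)$ is at most the number of edges traversed by an in-order walk from $1$ to $\ell$, which is $O(\ell)$. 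Between passes, each jump $\ell \to 1$ costs $d_T(\ell, 1) = O(\log \ell)$ in a balanced tree. Summing over the $\ell$ passes gives $\LF_T(\tilde S) = O(\ell^2 + \ell \log \ell) = O(n)$. Since the lazy finger bound is sound (by the Iacono--Langerman result that $\Greedy(X) = O(\LF(X))$), we conclude $\opt(\tilde S) = O(n)$.

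Plugging these into Theorem~\ref{thm:interleaving} yields $\opt(S) \leq \sum_{j=1}^{\ell} \opt(S_j) + 3\opt(\tilde S) = O(n) + O(n) = O(n)$, as required. The main technical subtlety is the analysis of $\tilde S$: a direct appeal to the $k$-monotone bound would be too weak, since $\tilde S$ has monotone parameter $m(\tilde S) = \ell + 1$ (its longest strictly increasing subsequence has length $\ell$), which would yield only $\opt(\tilde S) = O(n^{3/2} \log n)$. Choosing the lazy finger bound on a balanced reference tree is what allows the periodic ``wrap-around'' jumps to be absorbed into the linear cost.
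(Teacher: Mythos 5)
Your proof is correct and follows the same strategy as the paper: apply the time-interleaving theorem to the partition of $[n]$ into $\sqrt{n}$ consecutive intervals of length $\sqrt{n}$, identify each restricted subsequence $S_j$ as a sequential access, and identify the template $\tilde S$ as a sequential access repeated $\sqrt{n}$ times. The paper simply asserts $\opt(\tilde S) = O(n)$ without spelling out why, whereas you correctly observe that the bare sequential-access theorem does not cover the wrap-around jumps and supply a justification via the lazy finger bound on a balanced reference tree. That step is fine; an even shorter route is the dynamic finger bound, under which a pass contributes $O(\ell)$ (consecutive differences of $1$) and each of the $\ell$ wrap-arounds contributes $O(\log\ell)$, giving $O(n)$ directly, and you correctly note that the $k$-monotone bound (with $m(\tilde S) = \ell+1$) would be far too weak here. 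The minor indexing of the intervals as $\{j\ell,\dots,(j+1)\ell-1\}$ is slightly shifted relative to the paper's $[1+(i-1)\ell, i\ell]$, but this off-by-one has no effect on the asymptotics.
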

\begin{proof}
For $1\le i\le\sqrt{n}$, let $[a_{i},b_{i}]=[1+(i-1)\sqrt{n},i\sqrt{n}]$.
Then $S_{i}$ is a sequential access of length $\sqrt{n}$, and $\tilde{S}$
is a sequential access of length $\sqrt{n}$ repeated $\sqrt{n}$ times. So $\opt(S_{i})=O(\sqrt{n})$ and $\opt(\tilde{S})=O(n)$.
By \Cref{thm:interleaving}, $\opt(S)\le\sum_{i=1}^{\sqrt{n}}\opt(S_{i})+3\opt(\tilde{S})=O(n)$.
\end{proof}

\section{Separations between BST bounds}
\label{sec:separation} 
In this section we show examples that separate the bounds to the largest extent possible. For lack of space, we defer the proofs to \textsection\,\ref{sec:app_ex}. 

We first discuss the gap between $\opt$ and other BST bounds.  
When we say that a BST bound $f$ is {\em tight}, we mean that there are infinitely many sequences $X$ for which $\opt(X) \geq c f(X)$ for some constant $c$ (not depending on $X$). 
It is in this sense that many classical bounds (working set, static optimality, dynamic finger) are tight. 
We emphasize that many bounds, such as $k$-lazy finger, $k$-monotone, and $k$-avoiding, are in fact families of bounds (parameterized by $k \in {\mathbb N}$), e.g.\ the $k$-monotone bounds are given by $\set{f_k}$ where  $f_k(X)  = |X| \cdot k \log k$ if $X$ avoids $(k,\ldots, 1)$ or $(1,\ldots, k)$.
Therefore, the concept of tightness for these bounds is somewhat different. 
Our results are summarized in the following theorem.  

\begin{theorem}
\label{thm:tightness}  

For each $k$ (possibly a function that depends on $n$), there are infinitely many sequences $S_1$, $S_2$, $S_3$, for which the following holds:
\begin{compactitem}  
\item  $\LF^k(S_1) \cdot \log k\le \opt(S_1)$, 

\item 
$m(S_2) = k$, and $\opt(S_2) = \Omega (|S_2| \cdot \log k)$,

\item 
$p(S_3) = k$, and $\opt(S_3) = \Omega (|S_3| \cdot \sqrt{k})$. 
\end{compactitem}  
\end{theorem}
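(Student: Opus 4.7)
The plan is to establish each of the three separations with an explicit construction, reusing essentially the same idea in parts one and two and a more careful argument for part three.

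For Part 1, I would take $S_1$ to be a ``hard'' access sequence on a universe of size \emph{exactly} $k$ --- for instance, the bit-reversal permutation of $[k]$ repeated $m/k$ times --- for which Wilber's interleave bound yields $\opt(S_1) = \Omega(m\log k)$. The upper bound $\LF^k(S_1) = O(m)$ is immediate: choose any BST on $[k]$ as the reference tree and place one of the $k$ fingers at each of its $k$ nodes; every access is then served in $O(1)$ steps by the finger already sitting at the requested key. Combining the two bounds yields $\opt(S_1) \geq \Omega(\log k) \cdot \LF^k(S_1)$, matching Theorem~\ref{thm:mainlf} up to constants.

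For Part 2, I would reuse essentially the same family. Let $S_2$ be a long repetition of a hard permutation on $[k]$; since its universe has only $k$ distinct values, every strictly monotone subsequence has length at most $k$, so $m(S_2) \leq k$. Equality is forced by prepending a single copy of $(1,2,\ldots,k)$, which does not affect the asymptotic cost. The same Wilber-style argument as before gives $\opt(S_2) = \Omega(|S_2| \log k)$, matching the $O(|S|\log k)$ upper bound of the preceding section.

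For Part 3, the construction must be more delicate: since $p(S) \leq m(S)$ always, the constraint $p(S_3) = k$ is a \emph{weaker} structural restriction than $m(S_3) = k$, and the $\sqrt{k}$ separation (which is stronger than $\log k$ whenever $k \geq 16$) does not follow from a sequence on a universe of size $k$ alone. I would construct $S_3$ as a permutation of $[n]$ with $n$ polynomial in $k$, assembled by composing roughly $\sqrt{k}$ hard sub-permutations on disjoint coordinate ranges, in such a way that (i) by Dilworth's theorem applied to the block structure, $S_3$ avoids at least one monotone pattern of length $k$, and (ii) the Wilber interleave count, summed across all recursion levels and blocks, totals $\Omega(n\sqrt{k})$. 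To upgrade $p(S_3) \leq k$ to $p(S_3) = k$ I would prepend a short prefix containing every pattern of length $k-1$, which can be done within $o(n)$ accesses.

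The main obstacle is Part 3: a careful geometric balancing is needed between the block structure (which controls pattern avoidance) and the interleave count (which controls the lower bound on $\opt$). Parts 1 and 2 are essentially tautological once one observes that a sequence on a universe of size $k$ is automatically $k$-monotone and admits a trivial $k$-finger strategy. If the recursive block construction does not directly yield the $\sqrt{k}$ bound, a backup approach is via counting: the number of permutations with $p = k$ is at most $k! \cdot c_k^n$ by Marcus--Tardos, and a pigeonhole over a suitably chosen family of high-$\opt$ permutations produces a member realising $\opt = \Omega(n\sqrt{k})$ with the desired pattern-avoidance profile.
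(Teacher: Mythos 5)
Your plan for Parts~1 and~2 takes a genuinely different route from the paper and is essentially workable. The paper derives all three lower bounds uniformly via Kolmogorov complexity: it cites Lemma~\ref{prop:K less than OPT} ($K(S)=O(\opt(S))$), picks the set $\sset$ of relevant sequences, and then finds an incompressible member. For Part~1 it picks $S\in J^m$ with $|J|=k$; for Part~2 it invokes Regev's theorem that permutations of $[n]$ avoiding $(1,\ldots,k)$ number at least $k^{\Omega(n)}$. Your approach instead fixes a concrete hard instance (bit-reversal on a universe of size $k$) and appeals to Wilber's interleave bound. Both are sound, and your small-universe argument for the $\LF^k$ upper bound is exactly what the paper does. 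The one thing to tidy up: a sequence on universe $[k]$ has longest strictly monotone subsequence $\le k$, which only gives $m(S_2)\le k+1$, not $\le k$; and once you prepend $(1,\ldots,k)$ or repeat the bit-reversal, you must verify the sequence still avoids a monotone pattern of length exactly $k$ (repetition of a permutation can in general produce long increasing subsequences). These are repairable bookkeeping details, but they are not as ``tautological'' as you suggest.

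Part~3 is where the gap is real. The $\sqrt{k}$ exponent does not come from a direct construction; it comes from a nontrivial extremal-combinatorics fact. The paper's proof relies on the chain $c(\pi)=L(\pi)^{\Theta(1)}$ (Cibulka, Fox) together with Geneson--Tian's theorem that there exists a pattern $\pi$ of length $k$ with $c(\pi)=2^{\Omega(\sqrt{k})}$, yielding $L(\pi)=2^{\Omega(\sqrt{k})}$ --- that is, $2^{\Omega(n\sqrt{k})}$ permutations of $[n]$ avoiding $\pi$, hence one with $K(S)=\Omega(n\sqrt{k})$. Your recursive ``block composition'' sketch does not obviously yield this: composing $\sqrt{k}$ hard sub-permutations on disjoint ranges controls monotone patterns via Dilworth, but that governs $m(\cdot)$, not $p(\cdot)$, and more importantly Wilber's count of the resulting permutation would need to be $\Omega(n\sqrt{k})$, which you have not argued. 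Your backup via Marcus--Tardos is inverted: Marcus--Tardos gives an \emph{upper} bound $L(\pi)=2^{O(k\log k)}$ for every $\pi$, which bounds the number of $\pi$-avoiders from above and hence caps the Kolmogorov complexity of such permutations; it cannot produce the \emph{lower} bound on the count that the pigeonhole requires. Without the Geneson--Tian ingredient (or an equivalent existence result for a pattern with large Stanley--Wilf limit), the $\sqrt{k}$ separation does not follow from your sketch.
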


The results are derived using information-theoretic arguments.
For $k$-monotone bounds, this technique cannot prove $\opt(S) \geq |S| \cdot g(k)$ for a super-logarithmic function $g(k)$. 
For $k$-avoiding bounds, the information-theoretic limit is $|S| \cdot k$.

Next, we show a strong separation in the hierarchy of lazy finger bounds.
The results are summarized in the following theorem. 

\begin{theorem}
\label{thm:hierarchy}  
For any $k$ and infinitely many $n$, there is a sequence $S_k$ of length $n$, such that: 
\begin{itemize} 
\item $\LF^{k-1}(S_k) = \Omega(\frac{n}{k} \log (n/k))$ 

\item $\LF^{k}(S_k) = O(n)$ (independent of $k$) 

\item $S_k$ avoids $(k+1, k, \ldots, 1)$  

\item $\opt(S_k) = O(n)$ (independent of $k$)  
\end{itemize} 
\end{theorem}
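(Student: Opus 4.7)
The plan is to take $S_k$ to be the round-robin interleaving of $k$ sequential scans. For $n$ divisible by $k$ (which supplies infinitely many valid $n$), partition $[n]$ into $k$ contiguous intervals $I_1, \ldots, I_k$ of size $n/k$ each, and set $s_t := (i-1)(n/k) + \lceil t/k \rceil$ where $i := ((t-1) \bmod k) + 1$. Then $S_k$ cycles through the $k$ intervals round-robin while scanning each one in order. Since $S_k$ is the union of $k$ monotone-increasing subsequences (one per interval), Dilworth's theorem bounds its longest decreasing subsequence by $k$, so $S_k$ avoids the pattern $(k+1, k, \ldots, 1)$.

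For $\opt(S_k) = O(n)$, I would invoke Theorem~\ref{thm:interleaving} with the partition $P = (I_1, \ldots, I_k)$: each restriction $S_i$ is a sequential scan of length $n/k$ with $\opt(S_i) = O(n/k)$, and the template $\tilde S = (1, 2, \ldots, k, 1, 2, \ldots)$ is the length-$n$ repeated sequential scan on $[k]$ with $\opt(\tilde S) = O(n)$, hence $\opt(S_k) \leq \sum_i \opt(S_i) + 3 \opt(\tilde S) = O(n)$ with constants independent of $k$. For $\LF^k(S_k) = O(n)$, I would pick a reference tree $T$ in which each $I_i$ sits in a balanced subtree $T[I_i]$ linked by a balanced skeleton over the $k$ roots, place one finger at the root of each $T[I_i]$, and assign finger $i$ exclusively to $I_i$-accesses; each finger then performs a plain in-order traversal of its subtree, giving $O(n/k)$ per finger and $O(n)$ total.

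The hard part is the lower bound $\LF^{k-1}(S_k) = \Omega((n/k)\log(n/k))$. Fix any reference tree $T$ and any $(k-1)$-finger strategy. The plan is to identify, via pigeonhole plus a per-round counting argument (every round of $k$ consecutive accesses covers all $k$ intervals, so with only $k-1$ fingers at least one round-level inter-interval move is forced per round), a distinguished interval $I_{i^*}$ for which at least $\Omega(n/k)$ \emph{re-entry events} occur: moments where the finger about to serve an access $v \in I_{i^*}$ is positioned outside the induced subtree $T[I_{i^*}]$. Because $I_{i^*} = [a,b]$ is in-order contiguous, $T[I_{i^*}]$ is a connected subtree attached to the rest of $T$ via a single edge at its root $r_{i^*}$, so any path entering $T[I_{i^*}]$ from outside must pass through $r_{i^*}$, giving re-entry cost at least $\operatorname{depth}_{T[I_{i^*}]}(v) + 1$. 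Since the accesses to $I_{i^*}$ are in sorted order and exhaust all $n/k$ elements, summing over the forced re-entries yields the folklore depth-sum bound $\sum_{v \in T[I_{i^*}]} \operatorname{depth}_{T[I_{i^*}]}(v) = \Omega((n/k)\log(n/k))$ for any binary tree on $n/k$ nodes.

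The main obstacle I foresee is rigorously establishing the $\Omega(n/k)$ re-entry count: a finger might try to stay inside $T[I_{i^*}]$ for long stretches while other fingers handle the intermediate requests of each round. I would handle this by a case analysis on the finger-to-interval assignment. Either $I_{i^*}$ is split across two or more fingers, in which case re-entries at the split points are immediate; or $I_{i^*}$ is served by a single finger that must also serve some other interval $I_j$, in which case the finger's subsequence interleaves two intervals and I would apply the weighted dynamic finger lower bound via the Iacono--Langerman equivalence $\LF^1 = \Theta(\WDF)$ to conclude the same $\Omega((n/k)\log(n/k))$ bound. Tightening this dichotomy to a single clean $\Omega((n/k)\log(n/k))$ lower bound under every possible $(k-1)$-finger strategy is the principal technical difficulty.
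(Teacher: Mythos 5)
The construction, the $\LF^{k}(S_k)=O(n)$ argument, the pattern-avoidance, and the $\opt(S_k)=O(n)$ argument all match the paper's proof. The gap is in the lower bound, and it is a genuine one: you assert that because $I_{i^*}$ is an in-order contiguous key range, the induced subgraph $T[I_{i^*}]$ is ``attached to the rest of $T$ via a single edge at its root $r_{i^*}$,'' so any path entering from outside must cross $r_{i^*}$. That claim is false. A contiguous key range does induce a connected subgraph of a BST, and it does have a unique shallowest element (the paper's Fact~\ref{fact:unique}), but that subgraph may have many edges to keys outside the interval. For example, in the balanced BST on $[1,7]$ with root $4$, the interval $[2,4]$ induces the connected subgraph $\{2,3,4\}$, which has boundary edges to $1$ (child of $2$) and to $5,6,7$ (via $4$); a finger at key $1$ moving to key $3$ enters the interval at $2$, never touching the interval's shallowest element $4$. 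So the per-re-entry cost is not a depth in $T[I_{i^*}]$, and the folklore depth-sum bound does not apply. The closing dichotomy has similar problems: fingers can serve an arbitrary, time-varying mix of intervals, so ``split across several fingers'' versus ``one finger serves two intervals'' does not exhaust the cases, and neither branch is shown to yield $\Omega\!\left(\frac{n}{k}\log\frac{n}{k}\right)$.

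The paper's lower bound argument takes a different, more global route. It does not try to isolate a single distinguished interval. Instead, working in phases of $k$ consecutive accesses, it notes that with only $k-1$ fingers some finger must move rightward across blocks at least once per phase. It then classifies each such forced crossing by the triple (finger $j$, source block $i$, destination block $i'$), obtaining counts $c(j,i,i')$ that sum to $n/k$. The key lemma charges $\Omega(c\log c)$ cost to each triple with $c=c(j,i,i')\geq 16$: ordering the source keys $a_{t_1}<\cdots<a_{t_C}$ in block $i$ and taking $R$ to be the LCA of the upper half, one shows that for each $r\le C/2$ the move serving either $s_{t_r}$ or $s_{t_{C-r}}$ must pass through or near $R$, and summing over distinct destinations (using that a ball of radius $d$ in a BST has at most $3^d$ nodes) gives $\Omega(C\log C)$. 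Jensen's inequality over the at most $k^3$ triples then converts $\sum c(j,i,i')=n/k$ into $\Omega\!\left(\frac{n}{k}\log\frac{n}{k}\right)$. If you want to repair your proof you would need a replacement for the single-attachment-point step that is robust against arbitrary reference trees, and a bookkeeping scheme that doesn't lose crossings to fingers that roam between many blocks; the paper's triple-accounting plus LCA-of-the-upper-half device is precisely engineered for that.
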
 

This theorem also implies a weak separation in the class of lazy finger bounds and monotone pattern bounds: For constant $k$, a sequence $S_k$ is linear when applying the monotone BST bound, but the lazy finger bound with $k-O(1)$ fingers would not give better than $\Omega(n \log n)$. 

Finally, we show that lazy fingers are not strong enough to subsume the classical bounds.  
\begin{theorem}
For any $k$ and infinitely many $n$, there are sequences $S_1$ and $S_2$ of length $n$, such that: 
\begin{compactitem}
\item $\WS(S_1) = o(\LF^{k}(S_1))$, and
\item $\LF^{k}(S_2) = o(\WS(S_2))$.
\end{compactitem}
\end{theorem}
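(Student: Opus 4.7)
The plan is to exhibit two very simple sequences, each of length $n$, achieving the two desired separations. Neither construction depends on $k$: the first exploits the trivial lower bound $\LF^k(S)\ge m$ that comes from the ``$+1$'' term in the definition of the $k$-lazy-finger bound, and the second uses the classical fact that sequential access has $O(n)$ lazy-finger cost. Throughout I would let $n$ range over perfect squares (giving infinitely many $n$).

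For the first separation, I would take $S_1$ to be the concatenation $a_1^{\sqrt n}\, a_2^{\sqrt n}\cdots a_{\sqrt n}^{\sqrt n}$, where $a_1,\dots,a_{\sqrt n}$ are any $\sqrt n$ distinct keys in $[n]$. The working-set bound is easy to evaluate: every access other than the first in each run repeats the previous key, so $|w_{S_1}(j)|=1$ and its contribution is $0$; the first access of the $j$-th run contributes at most $\log(\sqrt n)$. Summing over the $\sqrt n$ runs gives $\WS(S_1) = O(\sqrt n\, \log n)$. On the other hand, the definition $\LF^k_{T,\vec f,\vec\ell}(S) = \sum_t (1+d_T(\,\cdot\,,\,\cdot\,))$ charges at least $1$ per access, so $\LF^k(S_1) \ge |S_1|= n$ regardless of $k$, the tree, and the finger strategy. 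Therefore $\WS(S_1)/\LF^k(S_1) = O(\log n/\sqrt n) = o(1)$.

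For the second separation, I would take $S_2 = (1,2,\dots,n)$, the sequential-access sequence. Because $S_2$ is a permutation, the first (and only) access to each key $s_j$ has $|w_{S_2}(j)|=j$, so $\WS(S_2) = \sum_{j=1}^n \log j = \Theta(n\log n)$. To bound $\LF^k(S_2)$ from above it suffices to take $T$ the balanced BST on $[n]$ and a single finger starting at $1$: then the finger's total travel is precisely the length of the in-order walk, which traverses each edge of $T$ at most twice, giving $\LF(S_2) \le 2(n-1) = O(n)$. Since $\LF^k\le \LF^1=\LF$ for every $k\ge 1$, we conclude $\LF^k(S_2) = O(n)$, and hence $\LF^k(S_2)/\WS(S_2) = O(1/\log n) = o(1)$.

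There is no substantive obstacle here: the first direction is driven entirely by the observation that $\LF^k$ pays at least one unit per access while a heavily repeated short-alphabet sequence defeats $\WS$; the second direction is the familiar fact that sequential access, easy for every reasonable ``spatial'' bound, is almost-linear for lazy fingers even though working set cannot exploit it. The only care needed is to pick $n$ (e.g.\ perfect squares) so that $S_1$ is well defined.
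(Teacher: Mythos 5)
Your second separation is essentially the paper's: sequential access over $[n]$, a fixed reference tree, and the observation that a single lazy finger walking the in-order sequence costs $O(n)$ while $\WS$ of a permutation is $\Theta(n\log n)$. (The paper repeats the pass $m/n$ times to get $\WS=\Omega(m\log n)$; a single pass works just as well, as you note.) Your first separation, however, is not the one the paper proves, and I think it misses the substantive content of the statement. You rely entirely on the trivial lower bound $\LF^{k}(S)\ge |S|$ coming from the ``$+1$'' per access in the definition of $\LF^{k}$, and then pick a sequence on which $\WS$ is sublinear. For your $S_1=a_1^{\sqrt n}\cdots a_{\sqrt n}^{\sqrt n}$, taking $a_i=i$ and any balanced reference tree, a single lazy finger travels only $O(\sqrt n)$ in total, so $\LF^{k}(S_1)=n+O(\sqrt n)=\Theta(|S_1|)$; the gap to $\WS(S_1)=O(\sqrt n\log n)$ disappears as soon as one adds the $+|X|$ term that every sound BST bound implicitly carries, and it would in fact \emph{reverse} if the ``$+1$'' were dropped from the $\LF^{k}$ definition. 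In other words, on your $S_1$ neither bound is asymptotically smarter than the other once both are normalized against the unavoidable linear baseline. What the paper proves (its Theorem~\ref{wslessklf}) is a separation that survives this normalization: a sequence with $\WS(S)=O(m\log k)$ while $\LF^{k}(S)=\Omega(m\log(n/k))$, so the $k$-lazy-finger cost is genuinely superlinear although $\WS(S)+|S|$ stays linear. This is obtained via a probabilistic construction (random phases of $2k$ keys, the $d$-good notion, and a union bound over all $4^{n}$ reference trees) and has no one-line shortcut. So your argument establishes the theorem only under a literal reading of the statement; it does not yield the intended evidence that $\LF^{k}$ fails to capture structure that $\WS$ does.
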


\newpage 
\bibliographystyle{plain}
\bibliography{ref}

\newpage 
\appendix 
\clearpage
\section{Bounds known to be achieved by Splay and Greedy}
\label{app_known}

\begin{theorem} \label{thm:known} The following relations hold for any sequence $S$:
\begin{compactitem}
\item \cite{ST85} Splay achieves the $\WB$, $\SO$, $\SF$, $\WS$ bounds.
\item \cite{tarjan_sequential} Splay achieves the sequential bound.
\item \cite{finger1, finger2} Splay achieves the $\DF$ bound.
\item \cite{Fox11} Greedy achieves the $\WB$, $\SO$, $\SF$, $\WS$, and sequential bounds.
\item \cite{LI16} Greedy achieves the $\WDF$ bound.
\item \cite{ChalermsookG0MS15} Greedy achieves the traversal, $k$-avoiding, and weak $k$-monotone bounds.
\end{compactitem}
\end{theorem}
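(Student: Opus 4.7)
The plan is to leverage the reductions already established in the paper to minimize the number of bounds we must argue from scratch, and then to point to (or sketch) the potential-function analyses for the core cases. First I would note that by \Cref{thm:WBSO}, the bounds $\WB$, $\SO$, $\WSF$, $\FF$ are all asymptotically equivalent, so it suffices to verify that Splay and Greedy each achieve any one of them, say $\SO$. Since $\SF$ is dominated by $\SO$ (it corresponds to $\FF$ with the finger fixed at the root, but the minimum over all trees and fingers is smaller), $\SF$ follows for free. Similarly, $\WS$ and the unified bound $\UB$ are equivalent, so I would work with $\WS$ directly.

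For the core bounds ($\SO$ and $\WS$) achieved by both Splay and Greedy, I would reuse the min-depth potential argument given in \textsection\,\ref{sec:static new} for Splay, and its Greedy counterpart deferred to \textsection\,\ref{app_greedy}. The $\SO$ statement is proved there directly. For $\WS$, I would rerun the same telescoping analysis with the reference tree chosen dynamically: after every access the touched element is reinserted near the root, mimicking the classical Sleator--Tarjan weighting $w(i) = 1/(\text{time since last access})^2$. This essentially reproduces the working-set proofs of \cite{ST85} for Splay and \cite{Fox11} for Greedy, but within our cleaner framework.

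The sequential access bound requires a separate amortized analysis: for Splay this is the classical Tarjan proof \cite{tarjan_sequential}, and for Greedy the analogous argument was given by Fox \cite{Fox11}; both rely on a potential capturing how ``right-leaning'' the current tree is. The dynamic finger bound for Splay is the substantially harder result of Cole et al.~\cite{finger1,finger2}, which uses an intricate multi-level potential; I would simply invoke it as a black box. The weighted dynamic finger bound $\WDF$ for Greedy is proved by Iacono and Langerman \cite{LI16} via their lazy-finger analysis, which can also be invoked as a black box. Finally, the pattern-avoidance bounds (traversal, $k$-avoiding, weak $k$-monotone) for Greedy are established in our earlier paper \cite{ChalermsookG0MS15} through a decomposition-and-simulation argument that is specific to the GreedyFuture view; these would also be cited directly.

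The genuinely difficult step, were one to reprove everything from scratch, would be the $\DF$ bound for Splay: the Cole et al.\ potential function is notoriously delicate, and no simpler proof is known. All other items either reduce to the min-depth analysis of \textsection\,\ref{sec:static new}, to small adaptations of classical potentials, or to black-box invocations of published proofs; the main engineering is to organize the equivalences so that each analysis need be carried out only once.
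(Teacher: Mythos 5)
Your proposal is correct and matches the paper's treatment: Theorem~\ref{thm:known} is stated in \textsection\,\ref{app_known} purely as a compendium of known results with citations (\cite{ST85,tarjan_sequential,finger1,finger2,Fox11,LI16,ChalermsookG0MS15}), and the only part the paper re-derives itself is exactly what you propose — static optimality for Splay and Greedy via the min-depth potential of \textsection\,\ref{sec:static new} and \textsection\,\ref{app_greedy}, with the remaining bounds ($\SF$, $\WB$, etc.) following from the equivalences of Theorem~\ref{thm:WBSO}. The only soft spot is your sketched dynamic-reference-tree argument for $\WS$, which is not carried out in the paper either and would need the careful weight-update bookkeeping of \cite{ST85,Fox11}; since you also invoke those works as black boxes, nothing essential is missing.
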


\newpage

\section{Table of BST bounds}
\label{apptable}

\begin{table}[h]
	\begin{tabular}{|c|c|c|}
		\hline 
		Bounds & Acronym & Formula\tabularnewline
		\hline 
		\hline 
		Balance & $B$ & $m\log n$\tabularnewline
		\hline 
		Weighted Balance & $\AL$ & $\min_{w}\sum_{j=1}^{m}\log\frac{W}{w(s_{j})}$\tabularnewline
		\hline 
		Entropy  & $H$ & $\sum_{i=1}^{n}m_{i}\log\frac{m}{m_{i}}$\tabularnewline
		\hline 
		Static Finger & $\SF$ & $\min_{f}\sum_{j=1}^{m}\log(|f-s_{j}|+1)$\tabularnewline
		\hline 
		Weighted Static Finger$\star$ & $\WSF$ & $\min_{w,f}\sum_{j=1}^{m}\log\frac{w[f:s_{j}]}{\min\{w(f),w(s_{j})\}}$\tabularnewline
		\hline
		Unified Bound & $\UB$ & $\min_{T,f}\sum_{j=1}^{m}\log(\min\{|f-s_{j}|+1, d_{T}(s_{j}), |w_S(j)|\})$ \tabularnewline
		\hline 
		Dynamic Finger & $\DF$ & $\sum_{j=2}^{m}\log(|s_{j}-s_{j-1}|+1)$\tabularnewline
		\hline 
		Weighted Dynamic Finger & $\WDF$ & $\min_{w}\sum_{j=2}^{m}\log\frac{w[s_{j-1}:s_{j}]}{\min\{w(s_{j-1}),w(s_{j})\}}$\tabularnewline
		\hline 
		Working Set & $\WS$ & $\sum_{j=1}^{m}\log(|w_S(j)|)$\tabularnewline
		\hline 
		Static Optimality & $SO$ & $\min_{T}\sum_{j=1}^{m}d_{T}(s_{j})$\tabularnewline
		\hline 
		Fixed Finger$\star$ & $\FF$ & $\min_{T,f}\sum_{j=1}^{m}d_{T}(f,s_{j})$\tabularnewline
		\hline 
		Lazy Finger & $\LF$ & $\min_{T}\sum_{j=2}^{m}d_{T}(s_{j-1},s_{j})$\tabularnewline
		\hline 
	\end{tabular}\protect\caption{\label{tab1}Summary of BST bound definitions}

\end{table}

\newpage

\section{Deterministic construction of a BST for given weights}
\label{app1}

\begin{theorem}
Given a weight function $W$, there is a deterministic construction of a BST $T_w$ such that the depth of every key $i \in [n]$ is $d_{T_w}(i) = O(\log \frac{W}{w(i)})$.  
\end{theorem}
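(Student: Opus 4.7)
The plan is to invoke Mehlhorn's classical weight-balanced BST construction. We build $T_w$ top-down by recursion on a contiguous range of keys $[\ell, r] \subseteq [n]$. Given such a range with total weight $W' = w[\ell:r]$, pick as the root the unique ``weight median'' $m \in [\ell, r]$ satisfying $w[\ell : m-1] \leq W'/2$ and $w[\ell : m] > W'/2$, then recursively build the left subtree on $[\ell, m-1]$ and the right subtree on $[m+1, r]$; an empty range produces an empty subtree. The top-level call is on $[1, n]$, with total weight $W$.

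The single structural fact driving the analysis is that this choice of $m$ forces each child subtree to carry at most half of the parent's total weight. By definition of $m$, the left subtree has weight $w[\ell : m-1] \leq W'/2$; and the right subtree has weight $W' - w[\ell : m] < W' - W'/2 = W'/2$. Thus, walking down any root-to-node path, the subtree weight at least halves at every step.

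Now fix any key $i \in [n]$, and let $T^{(0)} \supseteq T^{(1)} \supseteq \cdots \supseteq T^{(d(i))}$ be the sequence of subtrees on the root-to-$i$ path in $T_w$, where $T^{(0)} = T_w$ and $i$ is the root of $T^{(d(i))}$. Iterating the halving observation gives $w(T^{(d)}) \leq W / 2^d$ for every $d$ on this path. Since $w(i) \leq w(T^{(d(i))})$, we conclude $w(i) \leq W / 2^{d(i)}$, and taking logarithms yields $d_{T_w}(i) \leq \log(W / w(i)) = O(\log(W/w(i)))$, as required.

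There is essentially no obstacle: the median $m$ is well defined whenever $W' > 0$ (and can be found by a simple prefix-sum scan on $w$, giving a polynomial-time deterministic algorithm), the recursion terminates because the range strictly shrinks at each level, and the halving invariant — the entire content of the bound — falls out of the median choice. The only minor care is to note that the claim $w(i) \leq w(T^{(d(i))})$ uses that $i$ is itself a node of $T^{(d(i))}$, which is true by construction.
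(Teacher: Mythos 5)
Your proof is correct and takes a genuinely different route from the paper's. You use the classical weight-balanced construction (root at the weighted median of the range, recurse on the two sides), and the halving invariant — each child subtree carries at most half its parent's total weight — directly yields the sharp pointwise bound $d_{T_w}(i) \le \log_2(W/w(i))$; the edge cases you flag (the median exists and lies in $[\ell,r]$ whenever $W'>0$, and $i$ is itself a node of $T^{(d(i))}$) are handled correctly. The paper instead builds a spine-based tree: $w_1$ sits at the root, and subtrees $T_i$ covering the prefix-weight ranges $[2^{i-1}w_1, 2^i w_1]$ hang to the left off a right spine, which by recursion gives depth $O\big(\log \frac{w[1:\ell]}{\min(w_1,w_\ell)}\big)$ for each $\ell$. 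That is really a weighted \emph{finger} depth from key $1$ — the guarantee the paper later reuses in its weighted-static-finger equivalences — and it implies the stated $O(\log W/w(i))$ only after a small normalization (for instance prepending a dummy key of weight comparable to $W$), which the paper leaves implicit. Your median argument is simpler and yields exactly the stated root-depth bound with an explicit constant of $1$; the paper's construction buys the additional finger-from-$1$ structure it exploits elsewhere.
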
  

\begin{proof}
Let $w_1$, \ldots $w_n$ be a sequence of weights. We show how to construct a tree in which the depth of element $\ell$ is $O(\log w[1:\ell]/\min(w_1,w_\ell))$. 

For $i \ge 1$, let $j_i$ be minimal such that $w[1:j_i] \ge 2^i w_1$. Then $w[1:j_i - 1] < 2^i w_1$ and 
$w[j_{i-1} + 1 : j_i] \le 2^{i-1} w_1 + w_{j_i}$. 

Let $T_i$ be the following tree. The right child of the root is the element $j_i$. The left subtree is a tree in which element $\ell$ has depth $O(\log 2^{i-1} w_1/w_\ell)$. 

The entire tree has $w_1$ in the root and then a long right spine. The trees $T_i$ hang off the spine to the left. In this way the depth of the root of $T_i$ is $O(i)$. 

Consider now an element $\ell$ in $T_i$. Assume first that $\ell \not= j_i$. The depth is
\[ O\left(i + \log \frac{2^{i-1} w_1}{w_\ell}\right) = O\left(i + \log \frac{2^{i-1} w_1}{\min(w_1,w_\ell)}\right)= 
O\left(\log \frac{2^{i-1} w_1}{\min(w_1,w_\ell)}\right)= O\left(\frac{w[1:\ell]}{\min(w_1,w_\ell)}\right).\]

For $\ell = j_i$, the depth is 
\[ O\left(i\right) = O\left( \log \frac{2^i w_1}{w_1} \right) = O\left(\log \frac{w[1:j_i]}{\min(w_1,w_{j_i})}\right) .\]

\end{proof}

\newpage
\section{Missing proofs from \textsection\,\ref{sec:access lemma class}}
\label{app3}

\begin{theorem}
\label{thm:WBSO2}
	$\WB(S)=\Theta(\SO(S))$ for any
	sequence $S$.

\end{theorem}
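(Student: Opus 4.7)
The plan is to prove the two inequalities separately, using the two standard conversions between weight functions and reference trees stated earlier in the excerpt (Lemma~\ref{thm:tree-from-weight} for weight-to-tree and Lemma~\ref{lem:subtree-sum} for tree-to-weight).

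For the direction $\SO(S) \le O(\WB(S))$, I would take the weight function $w^{*}$ that achieves the minimum in the definition of $\WB(S)$, and feed it into Lemma~\ref{thm:tree-from-weight} (or its deterministic counterpart in \textsection\,\ref{app1}, which suffices since only the single-key depth bound is needed here). This produces a BST $T_{w^{*}}$ in which every element $i$ has depth $O(\log(W/w^{*}(i)))$. Then
\[
\SO(S) \;\le\; \SO_{T_{w^{*}}}(S) \;=\; \sum_{j=1}^{m} d_{T_{w^{*}}}(s_{j}) \;=\; O\!\left(\sum_{j=1}^{m}\log\frac{W}{w^{*}(s_{j})}\right) \;=\; O(\WB(S)).
\]

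For the direction $\WB(S) \le O(\SO(S))$, I would start from an optimal reference tree $T^{*}$ achieving $\SO(S)$ and define the weight function $w(i) := 4^{-d_{T^{*}}(i)}$, exactly as in Lemma~\ref{lem:subtree-sum}. Applying that lemma at the root of $T^{*}$ gives $W = \sum_{i} w(i) = \Theta(w(\mathrm{root})) = \Theta(1)$, so $\log W = O(1)$. Consequently,
\[
\log\frac{W}{w(s_{j})} \;=\; \log W + d_{T^{*}}(s_{j})\log 4 \;=\; O(1) + O(d_{T^{*}}(s_{j})),
\]
and summing over $j$ yields $\WB(S) \le \sum_{j} \log(W/w(s_{j})) = O(\SO(S) + |S|)$, which is $O(\SO(S))$ up to the standard additive linear term absorbed into the $\Theta(\cdot)$ statement (equivalently, one can assume w.l.o.g.\ $d_{T^{*}}(s_{j})\ge 1$ by charging the root access).

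The two directions are symmetric in flavour and neither step involves a serious obstacle: the conversions are off-the-shelf, and the only minor point to be careful about is ensuring the constant $\log W = O(1)$ when going from a tree to weights, which is exactly what Lemma~\ref{lem:subtree-sum} guarantees. The conceptual content is simply that the ``entropy'' formulation $\log(W/w(s_{j}))$ and the ``depth in a tree'' formulation $d_{T}(s_{j})$ are two faces of the same quantity, and the two lemmas let us move freely between them.
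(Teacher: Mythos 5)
Your proof is correct and follows essentially the same route as the paper's: both directions hinge on exactly the two conversion lemmas (Lemma~\ref{thm:tree-from-weight} for weight $\Rightarrow$ tree and the $w(i)=4^{-d_T(i)}$ assignment with Lemma~\ref{lem:subtree-sum} for tree $\Rightarrow$ weight). The only cosmetic differences are that you explicitly invoke the deterministic tree construction (which indeed suffices since only the depth bound is needed) where the paper argues via the expectation of the randomized construction, and you spell out the handling of the additive $O(|S|)$ term a bit more carefully.
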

\begin{proof}
	$\WB(S)=O(\SO(S))$: (note\footnote{In \cite{ST85} it is shown that $\WB(S)=O(H(S))$ and since $H(S)=\Theta(\SO(S))$, this implies that $\WB(S)=O(\SO(S))$. We include a proof for completeness.}) 
	Fix any BST $T$. It suffices to show the existence of a weight function $w$
	such that $\WB_{w}(S)=O(\SO_{T}(S))$. 
Define the weight function $w(i) = 4^{-d_T(i)}$. 
	Observe first that $W\leq 2$ from Lemma~\ref{lem:subtree-sum} applied at the root $r \in [n]$.  
This implies that for each key $i \in [n]$, we have $\log\frac{W}{w(i)}=\Theta(d_{T}(i))$.  

$\SO(S)=O(\WB(S))$:
	Fix any weight function $w$. 
It suffices to show the existence of a BST $T$ such
	that $\SO_{T}(S)=O(\WB_{w}(S))$. 
By choosing $T_w$ according to the distribution in~Lemma~\ref{thm:tree-from-weight},
	we have ${\mathbb E}[\SO_{T_{w}}(S)]={\mathbb E}[\sum_{j=1}^{m}d_{T_{w}}(s_{j})]=\sum_{j=1}^{m} {\mathbb E}[d_{T_{w}}(s_{j})]=\Theta(\sum_{j=1}^{m}\log\frac{W}{w(s_{j})})=\Theta(\WB_{w}(S))$.
	Since the expectation satisfies the bound, 
	there exists a tree $T$ satisfying the bound. 
\end{proof}
\begin{theorem}
	$\WSF(S)=O(\FF(S))$ for
	any sequence $S$.\end{theorem}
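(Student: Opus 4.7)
The plan is to mirror the proof of $\WB(S)=O(\SO(S))$ from Theorem~\ref{thm:WBSO2}, but adapted to take fingers into account. I will fix an arbitrary reference tree $T$ and finger $f$ that witness $\FF_{T,f}(S)=\FF(S)$, and exhibit a weight function $w$ such that $\WSF_{w,f}(S) = O(\FF_{T,f}(S))$; the same $f$ will serve both roles. The natural choice, borrowing from the previous proof, is $w(i) = 4^{-d_T(i)}$, where $d_T(i)$ is the depth of $i$ in $T$ measured from the root.

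The key observation that makes the weighted-static-finger quantity small is geometric. For each access $s_j$, let $u_j$ be the lowest common ancestor of $f$ and $s_j$ in $T$. By the BST property (or equivalently, Fact~\ref{fact:unique} applied to the interval $[\min(f,s_j),\max(f,s_j)]$), the element $u_j$ is the unique minimum-depth element in that interval, and the whole key-interval $[f:s_j]$ is contained in the subtree $T_{u_j}$. Applying Lemma~\ref{lem:subtree-sum},
\[
w[f:s_j] \;\le\; \sum_{i \in T_{u_j}} w(i) \;=\; O(w(u_j)) \;=\; O(4^{-d_T(u_j)}).
\]
On the other hand, $\min\{w(f),w(s_j)\} = 4^{-\max\{d_T(f),\,d_T(s_j)\}}$, so taking logs gives
\[
\log\frac{w[f:s_j]}{\min\{w(f),w(s_j)\}}
\;\le\; O\!\left(\max\{d_T(f),d_T(s_j)\} - d_T(u_j)\right).
\]

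To close the loop, I will argue that this last quantity is at most $d_T(f,s_j)$ up to constants. Since $u_j$ is an ancestor of both $f$ and $s_j$, both $d_T(f)-d_T(u_j)$ and $d_T(s_j)-d_T(u_j)$ are nonnegative, so
\[
\max\{d_T(f),d_T(s_j)\} - d_T(u_j) \;\le\; \bigl(d_T(f)-d_T(u_j)\bigr) + \bigl(d_T(s_j)-d_T(u_j)\bigr) \;=\; d_T(f,s_j).
\]
Summing over $j$ yields $\WSF_{w,f}(S) = O(\FF_{T,f}(S))$, and since $T$ and $f$ were chosen to witness $\FF(S)$, we obtain $\WSF(S) = O(\FF(S))$.

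I do not foresee a serious obstacle: the one subtle point is noticing that the key-interval $[f:s_j]$ embeds into the subtree of $\mathrm{LCA}(f,s_j)$, which is exactly what allows the subtree-sum lemma to be invoked along a \emph{path} rather than merely at the root. Everything else is an arithmetic comparison between $\max\{d_T(f),d_T(s_j)\}-d_T(u_j)$ and the tree-distance $d_T(f,s_j)$, which is tight up to a constant.
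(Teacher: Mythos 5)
Your proof is correct and is essentially the same as the paper's: both fix $T$ and $f$, use the weight function $w(i)=4^{-d_T(i)}$, apply Lemma~\ref{lem:subtree-sum} at the LCA $u_j=\mathrm{LCA}(f,s_j)$ to bound $w[f:s_j]=O(w(u_j))$, and then compare the resulting depth differences to $d_T(f,s_j)$ by elementary arithmetic. The only cosmetic difference is that you bound $\max\{d_T(f)-d_T(u_j),\,d_T(s_j)-d_T(u_j)\}$ by the sum of the two terms, whereas the paper bounds $2\max\{d_T(f,u_j),\,d_T(s_j,u_j)\}$ directly by $2\,d_T(f,s_j)$; both are valid and give the same constant-factor conclusion.
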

\begin{proof}
	Fix any BST $T$ and any element $f$. It suffices to show the existence of a
	weight function $w$ such that $\WSF_{w,f}(S)=O(\FF_{T,f}(S))$. In
	particular, for all $i$, we show that $\log\frac{w[f:i]}{\min\{w(f),w(i)\}}=\Theta(d_{T}(f,i))$
	when we set $w(i)=4^{-d_{T}(i)}$. Let $a=LCA(f,i)$ and hence $d_{T}(f,i)=d_{T}(f,a)+d_{T}(a,i)$.
	
	From Lemma~\ref{lem:subtree-sum} applied at node $a$, we have $w[f:i] \leq \sum_{j \in T_a} w(j) \leq 2 w(a)$. 
Therefore,  
	\begin{eqnarray*}
		\log\frac{w[f:i]}{\min\{w(f),w(i)\}} & \leq & \log\frac{2 w(a)}{\min\{w(f),w(i)\}}\\
		& \leq & \log\frac{2 \cdot 4^{-d_{T}(a)}}{\min\{4^{-d_{T}(f)},4^{-d_{T}(i)}\}}\\
		& \leq  & O(1) + 2 \cdot \max\{d_{T}(f)-d_{T}(a),d_{T}(i)-d_{T}(a)\}\\
		& = & O(1) + 2\cdot \max\{d_{T}(f,a),d_{T}(i,a)\}\\
		& \leq & O(1) + 2 \cdot d_{T}(f,i).
	\end{eqnarray*}
\end{proof}
\begin{theorem}
	$\FF(S)=O(\WSF(S))$ for
	any sequence $S$.\end{theorem}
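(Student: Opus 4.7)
The plan is to convert the optimal weight function for $\WSF$ into a reference tree via the randomized Seidel--Aragon construction (Lemma~\ref{thm:tree-from-weight}), using the \emph{same} finger $f$ on both sides. Fix any weight function $w$ and any element $f \in [n]$. I will show that there exists a BST $T$ with
\[
  \FF_{T,f}(S) = O(\WSF_{w,f}(S)),
\]
which, taking the minimum over $w$ and $f$, yields the desired inequality $\FF(S) = O(\WSF(S))$.

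First, I would sample a random BST $T_w$ according to the distribution guaranteed by Lemma~\ref{thm:tree-from-weight}, which ensures that for every pair of keys $i, j$,
\[
  \mathbb{E}\left[d_{T_w}(i,j)\right] = \Theta\!\left(\log \tfrac{w[i:j]}{\min\{w(i),w(j)\}}\right).
\]
Applying this pair-distance bound with $i = f$ and $j = s_t$ for each access and summing by linearity of expectation,
\[
  \mathbb{E}\left[\FF_{T_w,f}(S)\right] \;=\; \sum_{t=1}^{m} \mathbb{E}\left[d_{T_w}(f, s_t)\right] \;=\; \Theta\!\left(\sum_{t=1}^{m} \log \tfrac{w[f:s_t]}{\min\{w(f),w(s_t)\}}\right) \;=\; \Theta(\WSF_{w,f}(S)).
\]

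Since the expected value of $\FF_{T_w,f}(S)$ is $O(\WSF_{w,f}(S))$, there must exist a particular BST $T$ in the support of this distribution achieving $\FF_{T,f}(S) \le O(\WSF_{w,f}(S))$. Hence $\FF(S) \le \FF_{T,f}(S) = O(\WSF_{w,f}(S))$, and minimizing over $w$ and $f$ yields $\FF(S) = O(\WSF(S))$, as claimed.

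There is essentially no obstacle here: the whole argument is a one-line application of Lemma~\ref{thm:tree-from-weight}, entirely parallel to the $\SO = O(\WB)$ direction of Theorem~\ref{thm:WBSO2}. The only subtle point is that Lemma~\ref{thm:tree-from-weight} gives a bound on expected pairwise distances (not just depths), which is exactly the quantity we need; this is the one place where the full strength of the randomized construction (rather than the deterministic construction in Appendix~\ref{app1}) is used.
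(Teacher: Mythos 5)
Your proof is correct and follows exactly the same route as the paper: invoke the Seidel--Aragon randomized construction (Lemma~\ref{thm:tree-from-weight}) with the optimal $\WSF$ weights, apply linearity of expectation to the pairwise-distance guarantee to bound $\mathbb{E}[\FF_{T_w,f}(S)]$ by $O(\WSF_{w,f}(S))$, and conclude via the probabilistic method that some tree in the support achieves the bound. This matches the paper's proof line for line, differing only in presentation.
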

\begin{proof}
	Fix any weight function $w$ and any element $f$. It suffices to
	show a BST $T$ such that $\FF_{T,f}(S)=O(\WSF_{w,f}(S))$. 
By invoking Lemma~\ref{thm:tree-from-weight} with weight $w$,  we have a distribution of BST $T_w$ that satisfies ${\mathbb E}[\FF_{T_{w},f}(S)]={\mathbb E}[\sum_{j=1}^{m}d_{T_{w}}(f,s_{j})]=\sum_{j=1}^{m}{\mathbb E}[d_{T_{w}}(f,s_{j})]= O (\sum_{j=1}^{m}\log\frac{w[f:s_{j}]}{\min\{w(f),w(s_{j})\}})=\Theta(\WSF_{w,f}(S))$.
	Since the expectation satisfies the bound, there must exist a tree
	$T$ satisfying the bound. 
\end{proof}

\begin{theorem}
	$\WSF(S)=\Theta(\FF(S))$ for
	any sequence $S$.\end{theorem}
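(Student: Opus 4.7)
The statement $\WSF(S) = \Theta(\FF(S))$ is an immediate consequence of the two preceding theorems in \textsection\,\ref{app3}, which establish the two directions $\WSF(S) = O(\FF(S))$ and $\FF(S) = O(\WSF(S))$ separately. The plan is therefore simply to combine them.

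Conceptually, the two directions correspond to the two sides of the weight-tree duality that underpins this whole section. For the direction $\WSF = O(\FF)$, I would fix an optimal tree $T$ and finger $f$ achieving $\FF_{T,f}(S)$, and exhibit the weight assignment $w(i) = 4^{-d_T(i)}$; the key fact is Lemma~\ref{lem:subtree-sum}, applied at $a = LCA(f,i)$, which yields $w[f:i] = \Theta(w(a))$, so that $\log \frac{w[f:i]}{\min\{w(f),w(i)\}} = O(d_T(f,i))$, summing over all accesses. For the direction $\FF = O(\WSF)$, I would fix an optimal weight function $w$ and finger $f$ achieving $\WSF_{w,f}(S)$, and invoke the Seidel-Aragon randomized construction of Lemma~\ref{thm:tree-from-weight} to obtain a distribution over BSTs $T_w$ with $E[d_{T_w}(f,i)] = \Theta(\log \frac{w[f:i]}{\min\{w(f),w(i)\}})$; linearity of expectation then gives a tree $T$ achieving the bound deterministically.

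Since both directions are already stated and proved as individual theorems just above, the proof of $\WSF(S) = \Theta(\FF(S))$ is a one-line assembly: by the previous two theorems, $\WSF(S) \leq O(\FF(S))$ and $\FF(S) \leq O(\WSF(S))$, hence $\WSF(S) = \Theta(\FF(S))$. There is no real obstacle here; the actual technical work (the subtree-sum argument on one side, and the randomized treap construction on the other) is done in the two preceding proofs.

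If instead one wanted a self-contained presentation, the plan would be to merge the two arguments into a single statement about the correspondence $T \leftrightarrow w$ via $w(i) = 4^{-d_T(i)}$ and the Seidel-Aragon treap, and then note that under this correspondence the quantities $d_T(f,i)$ and $\log \frac{w[f:i]}{\min\{w(f),w(i)\}}$ agree up to a constant factor for every pair $(f,i)$. Summing over accesses yields the asymptotic equivalence in both directions simultaneously.
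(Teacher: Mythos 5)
Correct, and this is exactly what the paper does: the theorem is stated immediately after the two one-directional results $\WSF(S)=O(\FF(S))$ and $\FF(S)=O(\WSF(S))$, with no separate proof, as the $\Theta$-equivalence follows trivially by combining them. Your summaries of the two underlying arguments (the $w(i)=4^{-d_T(i)}$ assignment with Lemma~\ref{lem:subtree-sum} in one direction, and the Seidel--Aragon treap with linearity of expectation in the other) also match the paper's proofs of those two lemmas.
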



So far, we have shown the equivalences between two pairs $\WB \Leftrightarrow \SO$ and $FF \Leftrightarrow \WSF$. 
The following two (very simple) connections complete the proof. 
\begin{theorem}
	$\FF(S)=O(\SO(S))$ for any
	sequence $S$.\end{theorem}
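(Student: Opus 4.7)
The plan is to observe that this direction is essentially immediate from the definitions. The fixed finger bound minimizes over \emph{both} a tree $T$ and a finger $f \in [n]$, whereas the static optimality bound only minimizes over a tree $T$ and uses depths $d_T(s_j)$, i.e.\ distances from the root. So the strategy is simply to use the $\SO$-optimal tree and place the finger at its root.

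Concretely, let $T^*$ be a BST on $[n]$ achieving $\SO(S) = \sum_{j=1}^m d_{T^*}(s_j)$, and let $r$ denote its root. Since the depth of a node equals its distance from the root in a rooted tree, we have $d_{T^*}(r, s_j) = d_{T^*}(s_j)$ for every $j$. Therefore
\[
\FF(S) \;=\; \min_{T, f} \sum_{j=1}^{m} d_{T}(f, s_j) \;\le\; \sum_{j=1}^{m} d_{T^*}(r, s_j) \;=\; \sum_{j=1}^{m} d_{T^*}(s_j) \;=\; \SO(S),
\]
which gives the desired bound (in fact with constant $1$, even better than $O(\SO(S))$).

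There is no real obstacle here; the statement is just recording the trivial direction of the pair $\FF \Leftrightarrow \SO$, complementing the already established $\WB \Leftrightarrow \SO$ and $\FF \Leftrightarrow \WSF$ equivalences. The nontrivial content of Theorem~\ref{thm:WBSO} lies in the reverse direction $\SO(S) = O(\FF(S))$ (and in the weighted $\Leftrightarrow$ tree conversions handled by Lemmas~\ref{thm:tree-from-weight} and~\ref{lem:subtree-sum}); once those are in hand, chaining $\SO \le \WB \le O(\WSF) \le O(\FF) \le \SO$ closes the loop.
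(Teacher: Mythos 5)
Your proof is correct and takes the same approach as the paper: take the tree optimizing $\SO$ and set the finger $f$ to be its root, so that $d_T(f, s_j) = d_T(s_j)$ and the bound follows immediately with constant $1$. The paper's proof is identical in content, just stated more tersely.
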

\begin{proof}
	Fix any tree $T$. We set $f$ to be the root of $T$, then $\FF_{T,f}(S)=\SO_{T}(S)$.\end{proof}
\begin{theorem}
$\WB(S)=O(\FF(S))$ for any sequence
	$S$. \end{theorem}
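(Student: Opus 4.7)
The plan is to mirror the strategy used in the previous theorems of Section~\ref{sec:access lemma class}: starting from an optimal choice of tree and finger for the fixed finger bound, I will construct a weight function whose weighted balance cost upper-bounds the fixed finger cost. Concretely, fix a BST $T$ and a node $f$ achieving $\FF_{T,f}(S) = \FF(S)$, and define the weight function
\[ w(i) \;=\; 4^{-d_T(f,i)}\quad\text{for every }i\in[n]. \]
It then suffices to show $\WB_w(S) = O(\FF_{T,f}(S))$, i.e.\ that $\log\frac{W}{w(s_j)} = O(d_T(f,s_j))$ for each $j$.

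The key estimate is that $W = \sum_i w(i) = O(1)$. This follows from viewing $T$ as an undirected tree of maximum degree three (two children plus one parent per node), so that the number of vertices at distance exactly $d$ from $f$ is at most $3 \cdot 2^{d-1}$. Consequently
\[ W \;\le\; 1 + \sum_{d\ge 1} 3\cdot 2^{d-1} \cdot 4^{-d} \;=\; O(1). \]
Since $W\ge w(s_j) > 0$ the quantity $\log(W/w(s_j))$ is nonnegative, and we obtain
\[ \log\frac{W}{w(s_j)} \;=\; \log\frac{O(1)}{4^{-d_T(f,s_j)}} \;=\; O(1) + 2\,d_T(f,s_j) \;=\; O(1+d_T(f,s_j)). \]
Summing over $j=1,\dots,m$ yields $\WB(S) \le \WB_w(S) = O(\FF_{T,f}(S)) = O(\FF(S))$, as required.

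There is no real obstacle here; the only subtlety is remembering that in the fixed finger bound $d_T(f,\cdot)$ denotes \emph{undirected} distance in $T$, so the bound ``at most $3 \cdot 2^{d-1}$ nodes at distance $d$'' must be justified via the degree bound rather than via the usual rooted-subtree argument from Lemma~\ref{lem:subtree-sum}. Once this is in place the calculation is a one-liner, and combined with Theorems~\ref{thm:WBSO2} and the equivalence $\WSF \Leftrightarrow \FF$ already established, it closes the chain of asymptotic equivalences $\WB \asymp \SO \asymp \WSF \asymp \FF$ claimed in Theorem~\ref{thm:WBSO}.
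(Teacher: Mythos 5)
Your proposal is correct and matches the paper's own argument: both fix $T,f$ achieving $\FF(S)$, set $w(i)=4^{-d_T(f,i)}$, bound $W=O(1)$ by counting vertices at each distance from $f$, and conclude $\log(W/w(s_j)) = O(d_T(f,s_j))$. The only cosmetic difference is that you use the slightly tighter count of $3\cdot 2^{d-1}$ vertices at distance $d$, where the paper uses the cruder $3^d$; both suffice, so the approaches are essentially identical.
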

\begin{proof}
	Fix any weight function $w$ associated with the weighted static finger
	bound. It suffices to show a weight function $w$ such that $\WB_{w}(S)=O(\FF_{T,f}(S))$.
	In particular, for all $i$, we show that $\log\frac{W}{w(i)}=\Theta(d_{T}(f,i))$
	when we set $w(i)=4^{-d_{T}(f,i)}$. This follows because the number
	of elements with distance $d$ from $i$ is at most $3^{d}$, and
	so $W=\sum_{i=1}^{n}w(i)\le\sum_{d<\infty}3^{d}\cdot4^{-d}=O(1)$.
\end{proof}

\newpage
\section{The static optimality of Greedy}
\label{app_greedy}

In the geometric view where Greedy is defined~\cite{DHIKP09}, the role of the subtree $T(i)$ rooted at $i$ is played be the \emph{neighborhood} $T(i)$ of element $i$. The definition of neighborhood can be found in \cite{esa15tick,Fox11}.

\begin{theorem}
The amortized cost of Greedy for accessing element $i$ is $O(d_{R}(i))$.\end{theorem}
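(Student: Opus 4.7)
The plan is to mirror the Splay proof above, replacing the subtree $T(i)$ by the geometric neighborhood $T(i)$ used in Fox's analysis of \Greedy, and keeping the same min-depth potential $\phi_R(i)=-2\min_{j\in T(i)} d_R(j)$. Two facts about \Greedy will drive the argument: (i) the access cost for request $i$ equals the number of points $j$ touched at the access time of $i$, and (ii) (following \cite{esa15tick,Fox11}) after the access, $i$ becomes the ``root'' of the collection of touched elements, so the corresponding neighborhoods in the updated configuration satisfy $T^{\text{after}}(i)\supseteq T^{\text{before}}(i)$ and the potential $\phi^{\text{after}}(i)$ evaluates to $0$ just as in the Splay case. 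Initial and final telescoping then give $\phi^{a}(i)-\phi^{b}(i)\le 2 d_R(i)$ as before.

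Next, I would decompose the cost of a single \Greedy access into local ``steps'' analogous to the zigzig and zigzag cases of Splay, again following the neighborhood decomposition in~\cite{Fox11}. For each step involving three elements $x,y,z$ with $T(x)$ contained properly inside some enlarged neighborhood $T'(x)$, I would verify the two key neighborhood relations that drove the Splay calculation: in the zigzig case that $T(x)$ and $T'(z)$ are disjoint subsets of $T'(x)$, and in the zigzag case that $T'(y)$ and $T'(z)$ are disjoint subsets of $T'(x)$. These two inclusions together with Fact~\ref{fact:unique} (unique minimum-depth key in any interval of $R$) give $2\le 2\phi'(x)-\phi(x)-\phi'(z)$ and $2\le 2\phi'(x)-\phi'(y)-\phi'(z)$ respectively, which are exactly the inequalities used in the Splay analysis.

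With the same two chains of algebraic manipulations as in the zigzig and zigzag cases above, each local step of \Greedy is charged at most $3(\phi'(x)-\phi(x))$ units of amortized cost. Summing over all steps of the access of $i$ and telescoping yields an amortized cost of $O(\phi^a(i)-\phi^b(i))=O(d_R(i))$, as required.

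The main obstacle is purely notational: making precise the ``neighborhood'' $T(i)$ in the geometric view so that the analogues of $T(x)\subset T'(x)$, $T(x)\cap T'(z)=\emptyset$ (and similarly for zigzag) truly hold after one \Greedy step. Once that dictionary between subtree-in-$T$ (Splay) and neighborhood-in-geometric-view (\Greedy) is set up using the framework of~\cite{Fox11,esa15tick}, the rest of the proof is a verbatim repetition of the Splay calculation and requires no further ideas; in particular Fact~\ref{fact:unique} is a property of the reference tree $R$ alone and is therefore shared by both analyses.
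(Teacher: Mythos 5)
Your proposal misidentifies the structure of the \Greedy\ analysis in two important ways, and the gaps are conceptual rather than ``purely notational'' as you suggest.

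First, the claim that $\phi^{\text{after}}(i)=0$ is false for \Greedy. In the Splay argument that equality holds because after splaying the accessed key is the root of the \emph{entire} tree, so its subtree is all of $[n]$ and in particular contains the root of $R$. In the geometric view of \Greedy, after the access the neighborhood $T(i)$ of the accessed element becomes the set of elements touched at that time step -- a proper subset of $[n]$ -- and there is no reason for the minimum $R$-depth over that set to be zero. The paper's proof does not use $\phi^{a}(i)=0$ at all; instead it arrives at a final bound of $-\phi(x_0)-\phi(x_1)-\phi(x_2)\le -3\phi(x_0)=O(d_R(x_0))$ using only the pre-access potentials and the fact that $x_0\in T(x_0)$ implies $\phi(x_0)\ge -2d_R(x_0)$.

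Second, \Greedy\ does not perform rotations, so there is no natural decomposition of a single access into zigzig/zigzag-like local steps to telescope over, and trying to simulate one is not a viable route. What the paper does instead is treat the access as a single batch operation touching $x_{-l},\dots,x_0=i,\dots,x_r$, write the amortized cost as $\sum_{0\le i\le r}\bigl(1+\phi'(x_i)-\phi(x_i)\bigr)$ (the left side is handled symmetrically), and group the touched elements in blocks of four. The key inequality that replaces your zigzig/zigzag estimates is
\[
2+\phi'(x_i)+\phi'(x_{i+2})\le 2\phi(x_{i+3}),
\]
which follows from Fact~\ref{fact:unique} together with the nesting structure $T'(x_i)\cap T'(x_{i+2})=\emptyset$ and $T'(x_i),T'(x_{i+2})\subset T'(x_{i+3})$ of the post-access neighborhoods. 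Two applications of this inequality per block of four, followed by the monotonicity $\phi(j)\le\phi(i)$ for $j\in T(i)$ and $\phi,\phi'\le 0$, collapse the sum to the claimed $O(d_R(x_0))$. Your proposal as written supplies neither the correct key inequality nor a legitimate step decomposition, so it does not constitute a proof; you would need to abandon the step-by-step telescoping template and adopt the batch-sum-with-groups-of-four structure instead.
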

\begin{proof}
Suppose that Greedy, for accessing $i$, touches $x_{-l},x_{-l+1},\dots,x_{0}=i,x_{1},\dots,x_{r}$.
We only bound the amortized cost for touching $x_{0},x_{1},\dots,x_{r}$.
The analysis for the other elements is symmetric. Let $\phi(x)$ and $\phi'(x)$
be the potential of an element $x$ before, respectively after accessing $i$. Define $\phi(x_i) = \phi'(x_i) = 0$ for $i > r$. 
Note that $\phi(i),\phi'(i)\le0$
for all $i$. Also, for any $j\in T(i)$, $\phi(j)\le\phi(i)$.
The
crucial observation is that, for any $i\le r-3$, 
\[
2+\phi'(x_{i})+\phi'(x_{i+2})\le2\phi(x_{i+3}).
\]
This follows from Fact~\ref{fact:unique} and the facts that $T'(x_{i})\cap T'(x_{i+2})=\emptyset$
and $T'(x_{i}),T'(x_{i+2})\subset T'(x_{i+3})$. Therefore, the amortized
cost is
\begin{eqnarray*}
\sum_{0\le i\text{\ensuremath{\le}}r}1+\phi'(x_{i})-\phi(x_{i}) & = & \sum_{0\le i\text{\ensuremath{\le}}r-3,i\bmod4=0}4+\phi'(x_{i})+\phi'(x_{i+1})+\phi'(x_{i+2})+\phi'(x_{i+3})-\sum_{0\le i\text{\ensuremath{\le}}r}\phi(x_{i})\\
 & = & \sum_{0\le i\text{\ensuremath{\le}}r-3,i\bmod4=0}(2+\phi'(x_{i})+\phi'(x_{i+2}))+(2+\phi'(x_{i+1})+\phi'(x_{i+3}))-\sum_{0\le i\text{\ensuremath{\le}}r}\phi(x_{i})\\
 & \le & \sum_{0\le i\text{\ensuremath{\le}}r-3,i\bmod4=0}2\phi(x_{i+3})+2\phi(x_{i+4})-\sum_{0\le i\text{\ensuremath{\le}}r}\phi(x_{i})\\
 & \le & \sum_{0\le i\text{\ensuremath{\le}}r-3,i\bmod4=0}\phi(x_{i+3})+\phi(x_{i+4})+\phi(x_{i+5})+\phi(x_{i+6})-\sum_{0\le i\text{\ensuremath{\le}}r}\phi(x_{i})\\
 & \le & -\phi(x_{0})-\phi(x_{1})-\phi(x_{2})\\
 & \le & -3\phi(x_{0})\\
 & = & O(d_{R}(x_{0})).
\end{eqnarray*}
\end{proof}

\newpage

\section{Missing proofs from \textsection\,\ref{sec:MTR}}
\label{app_MTR}

\begin{lemma}[\cite{Allen-Munro}] Let $i < k$. Then $i$ is an ancestor of $k$ after serving $S$ if either $i$ was accessed and there was no access to a key in $[i+1,\ldots,k]$ after the last access to $i$ or $i$ and $k$ were not accessed and $i$ is an ancestor of $k$ in the initial tree.
\label{lem:AM} 
\end{lemma}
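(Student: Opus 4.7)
The plan is to proceed by induction on the length $t = |S|$, supported by a structural lemma that describes how a single move-to-root operation affects ancestor relationships. The key structural claim I would establish first is: after moving key $j$ to the root, for any two keys $a, b$ that both lie strictly less than $j$ or both strictly greater than $j$, $a$ is an ancestor of $b$ in the resulting tree if and only if $a$ was an ancestor of $b$ before. This I would verify one rotation at a time: in a single rotation between $j$ and its parent $p$, the BST property forces the swap of $j$'s inner subtree with $p$'s opposite subtree, and same-side elements (relative to $j$) stay grouped within the local subtree of the rotation, preserving their internal topology. The claim then extends to a full MTR step by composing the rotations along the root-to-$j$ path.

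For the base case $t = 0$, the current tree equals the initial tree and no access has been made, so the second disjunct exactly characterises the ancestor relation. For the inductive step I would let $j = s_t$ and split into cases on the position of $j$ relative to $[i, k]$. If $j = i$, then $i$ becomes the root after the MTR step, so it is an ancestor of $k$, and the first disjunct holds (the last access to $i$ is at time $t$, with no access in $[i+1, k]$ afterwards). If $j = k$, then $k$ becomes the root and the fresh access to $k$ invalidates both disjuncts. If $j < i$ or $j > k$, then $i$ and $k$ lie on the same side of $j$; by the structural lemma their ancestor relation is preserved across this step, and appending $j$ to $S$ affects neither disjunct (since $j \notin [i+1, k]$ and $j \notin \{i, k\}$), so the induction hypothesis closes the case. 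If $i < j < k$, then $j$ becomes the root with $i$ in its left subtree and $k$ in its right subtree, so $i$ is not an ancestor of $k$, and one verifies that neither disjunct applies to $S$.

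The main technical obstacle is stating and proving the structural lemma cleanly: it requires checking each of the four rotation configurations (left vs.\ right child, and where the moved subtree reattaches) and then arguing that the composition of these local preservations yields a single global invariant across the full move-to-root. A secondary subtlety worth flagging is the reading of the second disjunct: if interpreted literally as ``$i$ and $k$ were not accessed'' without any restriction on accesses to $(i, k)$, then the case $i < j < k$ above yields a counterexample, so the induction naturally requires (and the original Allen--Munro argument implicitly uses) the stronger reading ``no key in $[i, k]$ was accessed''. With this reading in place, the inductive step goes through uniformly across the four cases above.
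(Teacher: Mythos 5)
The paper does not prove this lemma; it is cited to Allen and Munro, so there is no in-paper argument to compare against. Your proof --- induction on $|S|$, supported by the invariant that a single move-to-root step leaves the ancestor relation among keys lying entirely to one side of the accessed key unchanged --- is correct and is the natural route. The structural invariant holds already at the level of one rotation (the two same-side descendant sets remain disjoint subtrees with unchanged interiors and the same ancestors above them), hence composes along the root path, and the case split on the position of the last access $j=s_t$ relative to $\{i,k\}$ is exhaustive and handled properly in each case.

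Your flagged subtlety is in fact a genuine correction of the lemma's statement, not just a reading choice. As written, the second disjunct ``$i$ and $k$ were not accessed'' is false: with the initial tree the right chain $1 \to 2 \to 3$, sequence $S=(2)$, $i=1$, $k=3$, neither endpoint is accessed and $1$ is an ancestor of $3$ initially, yet after moving $2$ to the root $1$ is no longer an ancestor of $3$. The disjunct must require that no key in the closed interval $[i,k]$ was accessed --- exactly what emerges from your case $i<j<k$. The paper's downstream use of the lemma (in bounding $\kimtr_T$) relies on the converse direction of the characterisation, so its result is unaffected, but the statement of Lemma~\ref{lem:AM} as printed should be tightened.
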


This implies that we obtain the same final tree if we delete all but the last access to each element from the access sequence.

We use the following property of the Move-to-root algorithm. 

\begin{fact}
Let $S = (s_1,\ldots, s_m)$ be an access sequence. 
Before accessing $s_j$, for any $j' <j$, the keys $s_{j'},s_{j'+1},\ldots, s_{j}$ form a connected component containing the root of $T'$. 
\label{prop:move-to-root structure}  
\end{fact}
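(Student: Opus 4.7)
My plan is to prove the fact by induction on $j$, via a stronger monotonicity invariant about the tree maintained by Move-to-root, from which the fact falls out as a corollary. The invariant I would maintain is the following: after serving $s_1, \ldots, s_{j-1}$, along every root-to-leaf path of $T'$, the times of the most recent accesses of the nodes are non-increasing from the root downward (with the convention that an unaccessed node has access time $0$). Equivalently, for every threshold $t$, the set of nodes whose most recent access time is $\ge t$ is closed under taking ancestors in $T'$, hence forms a connected subtree containing the root. This invariant can be viewed as the positive counterpart of Lemma~\ref{lem:AM}.

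I would establish the invariant by induction on the number of completed accesses. The base case is vacuous (no accesses have occurred). For the inductive step, I analyze the Move-to-root of $s_{j-1}$ as a sequence of single rotations lifting $s_{j-1}$ to the root. Each rotation only rearranges three local participants (the rotated child, its former parent, and one regrafted subtree); subtrees not on the access path are untouched and inherit their monotonicity directly from the inductive hypothesis. A short case analysis on the direction of rotation checks that monotonicity along every new root-to-leaf path is preserved. Upon completion, $s_{j-1}$ sits at the root with the newest access time $j-1$, so monotonicity at the top is ensured globally.

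From the invariant, the fact follows almost immediately: the distinct keys in $\{s_{j'}, \ldots, s_j\}$ coincide with the upper level set of nodes whose most recent access time is at least $j'$ (with $s_j$ counted as part of the current recent window), and by the invariant this set is ancestor-closed in $T'$, hence is a connected subtree containing the root of $T'$; the root is itself in the set since $s_{j-1} \in \{s_{j'}, \ldots, s_j\}$ for any $j' \le j-1$. The main obstacle will be the rotation analysis in the inductive step: although each single rotation is locally trivial, one must carefully track how the regrafted subtree (the one swapped between child and parent) is attached to its new parent and verify that the access-time monotonicity on each new root-to-leaf path passing through the regrafted subtree is indeed maintained. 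This is a detail-heavy but routine case analysis on the two possible rotation directions.
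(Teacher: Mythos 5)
The paper states this Fact without an explicit proof; it is stated immediately after the cited Allen--Munro characterization (Lemma~\ref{lem:AM}) of the ancestor relation under Move-to-root, whose ``only if'' direction forces any ancestor of a key last accessed at time $\ge j'$ to itself have last access time $\ge j'$, which is exactly the ancestor-closedness one needs. Your monotonicity invariant (last access times non-increasing along every root-to-leaf path) is a correct and self-contained alternative, and as an argument about the dynamic tree it is arguably more illuminating than treating Lemma~\ref{lem:AM} as a black box.

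Two caveats. First, the invariant is \emph{not} preserved after each single rotation, contrary to the impression your sketch gives: once $s_{j-1}$ is charged with the new maximal access time $j-1$, the edge from $s_{j-1}$ to its remaining ancestor violates monotonicity until $s_{j-1}$ actually reaches the root. You should either (a) argue about the final tree directly --- after Move-to-root, the access-path keys smaller (resp.\ larger) than $s_{j-1}$ form a right (resp.\ left) spine below the new root, in their access-path order (hence preserving the old monotone ordering of access times), and every regrafted subtree hangs below an \emph{old ancestor} of its old parent, so the inductive hypothesis still controls those new edges --- or (b) maintain the weakened running invariant ``monotone except that $s_{j-1}$ may exceed its current ancestors,'' which does survive each rotation and collapses to the full invariant once $s_{j-1}$ is the root. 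Second, the Fact as printed lists $s_j$, but before accessing $s_j$ the key $s_j$ need not lie in the ancestor-closed level set $\{v : \mathrm{last}(v) \ge j'\}$ when $s_j$ was last accessed before time $j'$ or never; the statement actually used in the paper concerns $\{s_{j'},\ldots,s_{j-1}\}$ (or the case $j'=\rho_S(j)$, where $s_{j'}=s_j$), so you should not assimilate $s_j$ into ``the current recent window'' --- state the conclusion for $\{s_{j'},\ldots,s_{j-1}\}$.
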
  

Now we are ready to prove the main theorem of the section. 

\begin{theorem}
	\label{thm:KIMTR=WS} For any sequence $S$ and any initial tree
	$T$, $\kimtr_{T}(S)=O(\WS(S)+f(n))$ where $f(n)=\sum_{i=1}^{n}d_{T}(i)$
	depends only on $n$ (and not on $m$).\end{theorem}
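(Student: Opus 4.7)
The plan is to express $\kimtr_T(S) = \sum_{j=1}^{m} \mathbb{E}_\pi[\delta_j]$, where $\delta_j$ is the depth of $\pi(s_j)$ in the MTR tree immediately before serving access $j$, and then to bound each $\mathbb{E}_\pi[\delta_j]$ by a careful decomposition of its ancestor set. Writing $\delta_j = 1 + A_j$ where $A_j$ counts the ancestors of $\pi(s_j)$, I apply \Cref{lem:AM} together with \Cref{prop:move-to-root structure} to split $A_j$ into two groups: (a) \emph{active} ancestors, which are keys $k \in w_S(j)\setminus\set{s_j}$ such that no key strictly between $k$ and $s_j$ in key order was touched after $k$'s most recent access; and (b) \emph{passive} ancestors, which are keys never accessed in $s_1,\ldots,s_{j-1}$ whose $\pi$-image lies on the root-to-$\pi(s_j)$ path in the initial tree $T$. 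Case (b) occurs only when $\rho_S(j)=0$ (a first access to $s_j$).

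The passive contribution is bounded deterministically. At each first access, the number of passive ancestors is at most $d_T(\pi(s_j))$, and because the first-access keys are distinct and $\pi$ is a bijection, summing over all first accesses yields $\sum_{s_j \text{ first}} d_T(\pi(s_j)) \leq \sum_{v\in[n]} d_T(v) = f(n)$, a bound that holds pointwise for every $\pi$.

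For the active ancestors I enumerate the keys of $w_S(j)\setminus\set{s_j}$ in decreasing order of their last access time as $k_1,\ldots,k_t$ with $t=|w_S(j)|-1$. The Allen--Munro separation condition translates to: $k_l$ is an active ancestor iff none of $\pi(k_1),\ldots,\pi(k_{l-1})$ lies strictly between $\pi(k_l)$ and $\pi(s_j)$. Equivalently, scanning $\pi(k_1),\pi(k_2),\ldots$ in order, $k_l$ is ancestral precisely when $\pi(k_l)$ sets a \emph{new proximity record} to $\pi(s_j)$ on its side (above or below $\pi(s_j)$). Since $\pi$ induces a uniformly random total order on $\set{s_j,k_1,\ldots,k_t}$, the expected number of records on each side is at most $H_t$, so $\mathbb{E}[|\text{active ancestors}|] \leq 2H_t = O(\log |w_S(j)|)$.

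Combining the two contributions gives
\begin{equation*}
\kimtr_T(S) \;\leq\; \sum_{j=1}^{m}\bigl(1 + O(\log|w_S(j)|)\bigr) + f(n) \;=\; O\bigl(\WS(S) + m + f(n)\bigr) \;=\; O(\WS(S)+f(n)),
\end{equation*}
where in the last step the additive $m$ is absorbed using the convention that each per-step contribution to $\WS$ is at least a constant (equivalently, the trivial lower bound $\WS(S)\geq m$ under the $\log(1+|w_S(j)|)$ convention). The main obstacle is the reduction of the active-ancestor condition to the classical records-in-a-random-permutation problem: one must carefully argue that after conditioning on which $\pi(k_l)$ fall above versus below $\pi(s_j)$, the induced relative orders on each side remain uniform, which is precisely what delivers the tight $H_t$ bound instead of a trivial $O(t)$ one.
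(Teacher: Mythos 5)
Your proof is correct and takes essentially the same route as the paper's: both rely on the Allen--Munro ancestor characterization under Move-to-root, bound the expected number of touched-key ancestors via the random relative order induced by $\pi$ on the working set (your records-in-a-random-permutation framing is just a cleaner phrasing of the paper's adjacency-probability sum $\sum_{\ell\in J}\Theta(1/|J\cap[\ell,j]|)=O(\log|w_S(j)|)$), and bound the untouched-key ancestors at first accesses pointwise by $f(n)=\sum_i d_T(i)$ via injectivity of $\pi$. The deviations are cosmetic: you unify first and non-first accesses under a single active/passive split, and you explicitly absorb the additive $+m$ into $\WS(S)$, a point the paper leaves implicit.
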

\begin{proof}
We first ananalyze the cost of non-first accesses $s_{\pi(j)}$. 
Consider the accesses $s_{\rho_{S}(j)},s_{\rho_{S}(j)+1},\dots,s_{j-1}$ 
where $s_{\rho_S(j)} = s_j$, and $s_{\ell} \neq s_{\ell'}$ for all $\ell, \ell' \in \{\rho_S(j), \ldots, j-1\}$: These elements are stored in a connected subtree $T'$ containing the root and $T'$ is also formed if we deleted all but the last access to any element from this sequence.  
More formally, let $J \subseteq \set{\rho_S(j), \ldots, j-1}$ be the set indices $i$ for which $s_i$ is the last appearance of its key in $w_S(j)$.
So we have $|J| = |w_S(j)|$.   
The expected access cost of $\pi(s_{j})$ is exactly the expected number of its ancestors in $T'$:  
\[\sum_{\ell \in J} Pr_\pi[\mbox{ $\pi(s_{\ell})$ is ancestor of $\pi(s_{\rho_S(j)})$}] \]

If $\pi(s_\ell) > \pi(s_j)$, then the key $\pi(s_{\ell})$ is an ancestor of $\pi(s_{j})$ only if $\pi(s_{\ell}) = \min \set{\pi(s_{\ell'})}_{\ell' \in J, \ell' \geq \ell}$ (due to Lemma~\ref{lem:AM}). 
Otherwise, if $\pi(s_{\ell}) < \pi(s_j)$, it is an ancestor only if $\pi(s_\ell) = \max \set{\pi(s_{\ell'})}_{\ell' \in J, \ell' \geq \ell}$.  
In any case, this probability is exactly $\displaystyle\frac{1}{|J \cap [\ell,j]|}$. 
Therefore, the expected access cost of $\pi(s_j)$ is at most $\sum_{\ell\in J} \displaystyle\frac{1}{|J \cap [\ell, j]|} = O(\log |J|) = O(\log |w_S(j)|)$.  

Consider next a first access $s_{j}$. By~\Cref{prop:move-to-root structure}, the elements $s_{1},s_{2},\dots,s_{j-1}$ form a connected subtree $T'$ containing the root. An argument similar to the one above shows that the expected depth of $T'$ is $O(\log |w_S(j)|)$. 
So the expected length of the search path of $s_{j}$
	is at most $O(\log |w_S(j)|)+d_{T}(s_{j}))$.

Thus, the total cost is $\sum_{j=1}^{m}O(\log |w_S(j)|)+O(\sum_{i=1}^{n}d_{T}(i))=O(\WS(S)+f(n))$.
\end{proof}
It follows that key-independent Move-to-root (starting from a balanced
tree), key-independent optimality, and working set bounds are all
equivalent when the length of the sequence is $m\ge n\log n$.
\begin{corollary}
	Let $T$ be a BST of logarithmic depth. Then, $\kimtr_{T}(S)=\Theta(\kiopt(S)+n\log n)=\Theta(\WS(S)+n\log n)$
	for any sequence $S$.\end{corollary}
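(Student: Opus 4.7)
The plan is to combine the two immediately preceding theorems, Theorem~\ref{thm:KIMTR=WS} and Theorem~\ref{thm:KIOPT=WS}, via a sandwich argument. The second equality in the statement, $\Theta(\kiopt(S)+n\log n)=\Theta(\WS(S)+n\log n)$, is a direct restatement of Theorem~\ref{thm:KIOPT=WS}, so the only real work is proving $\kimtr_T(S)=\Theta(\WS(S)+n\log n)$ under the hypothesis that $T$ has logarithmic depth.

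For the upper bound, I would apply Theorem~\ref{thm:KIMTR=WS} directly: it gives $\kimtr_T(S)=O(\WS(S)+f(n))$, where $f(n)=\sum_{i=1}^{n}d_T(i)$. Since $T$ has depth $O(\log n)$, we bound $f(n)\le n\cdot O(\log n)=O(n\log n)$, and the upper bound $\kimtr_T(S)=O(\WS(S)+n\log n)$ follows immediately.

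For the lower bound, I would argue that Move-to-root is just one concrete online BST algorithm, so for every fixed permutation $\pi$ we have $MTR_T(\pi(S))\ge OPT(\pi(S))$ (this uses only the fact that $OPT$ is the offline minimum over valid BST executions, with the initial tree either fixed or optimized away using at most an additive $O(n\log n)$ term that will be absorbed into the final bound). Taking expectations over a uniformly random permutation $\pi$ preserves the inequality, giving $\kimtr_T(S)\ge\kiopt(S)$. Plugging in Theorem~\ref{thm:KIOPT=WS}, we obtain $\kimtr_T(S)\ge\Omega(\WS(S)+n\log n)$.

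Combining the two bounds yields $\kimtr_T(S)=\Theta(\WS(S)+n\log n)$, and rewriting the right-hand side using Theorem~\ref{thm:KIOPT=WS} gives the stated chain of equalities. There is essentially no hard step here; the two preceding theorems do all the heavy lifting. The only subtle point is the comparison $\kimtr_T(S)\ge\kiopt(S)$: one must check that the definition of $\opt$ used in $\kiopt$ is compatible with Move-to-root starting from $T$, i.e.\ either that $\opt$ is defined for an arbitrary initial tree, or that the discrepancy from switching initial trees is $O(n\log n)$ and thus harmless given the $n\log n$ additive term already present in the bound.
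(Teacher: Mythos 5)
Your proof is correct and follows essentially the same sandwich argument as the paper: the paper likewise observes $\kiopt(S)\le\kimtr_T(S)$ trivially (Move-to-root is a valid BST algorithm) and then applies Theorems~\ref{thm:KIMTR=WS} and~\ref{thm:KIOPT=WS} to close the gap. Your explicit bound $f(n)=\sum_i d_T(i)\le O(n\log n)$ using the logarithmic-depth hypothesis and your remark about reconciling initial trees are details the paper leaves implicit, but the structure of the argument is identical.
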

\begin{proof}
	We trivially have $\kiopt(S)\le\kimtr_{T}(S)$. By \Cref{thm:KIMTR=WS,thm:KIOPT=WS}, we have $\kimtr_{T}(S)=O(\WS(S)+n\log n)=O(\kiopt(S)+n\log n)$.
	This sandwiches the quantities.
\end{proof}

\newpage 

\section{Proofs from \Cref{sec:lazy-finger}}
\subsection{Simulating $k$-Lazy fingers}
\label{sec:simulation lazy fingers} 
\begin{theorem}
	\label{thm:simulate k finger}For any sequence $S$, $OPT(S)\le O(\log k) \cdot LF^k(S)$.\end{theorem}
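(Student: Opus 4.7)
My plan is to exhibit, for each $S$, a concrete BST execution whose total cost is $O(\log k) \cdot LF^k(S)$, which upper-bounds $OPT(S)$. Fix a reference tree $T$ and an optimal $k$-server strategy $(\vec{f}, \vec{\ell})$ achieving $LF^k_{T,\vec{f},\vec{\ell}}(S) = LF^k(S)$, and write $D_t = d_T(s_t, \sigma(f_t, t-1))$, so that $LF^k(S) = \sum_{t=1}^m (1+D_t)$.

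The execution maintains a dynamic BST $T^*$ on $[n]$ equipped with a ``top zone'' of depth $O(\log k)$ containing the current $k$ finger positions. The top zone is a balanced BST over the sorted finger positions, and $T^*$ preserves enough of the structure of $T$ below the top zone that for any current finger $p$ and any target $x$, the walk from $p$ to $x$ inside $T^*$ costs $O(\log k + d_T(p, x))$. Initially $T^*$ is built from $\vec{\ell}$ by placing the initial fingers in a balanced top zone and hanging below them subtrees that represent the appropriate pieces of $T$.

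To serve $s_t$ using the designated finger $f_t$ currently at $p$, the execution descends from the root of $T^*$ to $p$ through the top zone (cost $O(\log k)$), then navigates along the $T$-geodesic from $p$ to $s_t$ (cost $O(D_t)$), for a per-access cost of $O(\log k + D_t)$. Summing over $t$ gives the target bound
\[
\sum_{t=1}^m O(\log k + D_t) \;=\; O\!\left(m \log k + {\textstyle\sum_t} D_t\right) \;\le\; O(\log k) \cdot LF^k(S).
\]

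The hard part will be twofold: $(i)$ designing the data structure below the top zone so that following a $T$-geodesic from a finger $p$ stays cheap even when the geodesic crosses the territory boundary of a neighboring finger (since $T$ restricted to an inter-finger key interval is not in general a subtree of $T$); and $(ii)$ amortizing the rotation cost of restructuring $T^*$ after each access so that $s_t$ becomes the new finger in the top zone and the old finger $p$ is re-absorbed, without disturbing the invariant for the remaining $k-1$ fingers. For $(i)$ I plan to overlay a tango-tree-style heavy-path decomposition of $T$ beneath the top zone, so that traversals along any $T$-geodesic pay only $O(1)$ per traversed edge plus $O(\log k)$ overhead for path switches; for $(ii)$, a potential-function argument with the top zone realized by splay or weight-balanced trees should give finger promotion at amortized $O(\log k)$ rotations, while the local subtree reorganization around $p$ and $s_t$ is covered by the traversal cost $O(\log k + D_t)$ already charged to the current access.
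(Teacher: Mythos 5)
Your high-level shape matches the paper's: both reduce $OPT(S)$ to a simulation of a $k$-finger traversal of a fixed reference tree $T$, keeping the current finger positions at depth $O(\log k)$ in the simulating BST, and paying $O(\log k + D_t)$ per access. That framing is correct, and the final summation is also correct because $LF^k(S) = \sum_t (1 + D_t) \ge m$.

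The genuine gap is in how you propose to resolve your own ``hard part $(i)$.'' You suggest a tango-tree-style heavy-path decomposition of $T$ beneath the top zone, with $O(1)$ per traversed edge and $O(\log k)$ per path switch. This does not work as stated: a heavy-path decomposition of an arbitrary $T$ can incur $\Theta(\log n)$ path switches along a single geodesic, independent of $k$, so the overhead becomes $\Omega(\log n)$, not $O(\log k)$. There is no reason the number of heavy-path changes crossed by a finger move would scale with $k$. Moreover, even within a single heavy path, you still need a mechanism for advancing a finger by one $T$-edge at amortized $O(1)$ BST cost while simultaneously keeping the $k-1$ other fingers shallow --- the heavy-path overlay alone does not provide that.

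The paper's proof (Theorem~\ref{thm:overhead log k}, via the extended-hand machinery) supplies exactly the missing piece. It partitions $T$ into the Steiner tree $S(T,F)$ on the $k$ fingers (plus its degree-$3$ branch nodes, the \emph{pseudofingers}) and the remaining \emph{knuckles}. The crucial objects are the \emph{tendons} --- paths of $T$ between adjacent pseudofingers --- which are split into two monotone \emph{half-tendons} and each stored as a BST-realized deque (Lemma~\ref{prop:DequeBST}), so that a finger can be advanced one $T$-edge by popping/pushing an end element in $O(1)$ amortized rotations. Because the resulting \emph{extended hand} has only $O(k)$ parts (Lemma~\ref{lem:hand size}) that are pairwise disjoint intervals (Lemma~\ref{lem:hand ordered}), a balanced BST over these parts gives every pseudofinger depth $O(\log k)$ (Lemma~\ref{lem:depth pseudofinger}), and every finger step changes the extended hand in only $O(1)$ ``active'' elements, each reachable and reinsertable at amortized $O(\log k)$ (Lemma~\ref{lem:active set}, Lemma~\ref{lem:update cost}). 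Your proposal identifies the need for such a structure but substitutes a mechanism (heavy paths) that neither caps the overhead at $O(\log k)$ nor supplies the constant-amortized single-edge finger advance; without something equivalent to the knuckle/tendon/dequeBST decomposition, the per-access cost bound $O(\log k + D_t)$ is unjustified.
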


To prove this theorem, we refine the result in~\cite{DemaineILO13}
which shows how to simulate a $k$-finger BST using a standard BST.
In a $k$-finger BST
there are $k$ pointers, and each of these can move to
its parent or to one of its children. Moreover,
the position of a finger is maintained while performing rotations. 

When we execute an access sequence using $k$-finger BST, each finger is
initially at the root. 
We specify an initial tree. 
For each access, one of the $k$ fingers
must move to the accessed element. After each access the fingers remain in their position,
instead of moving to the root, as in the standard BST model.

The cost for accessing a sequence is the
total number of finger moves and rotations. 

Given an online $k$-finger algorithm $A_{kBST}$, we define an online BST algorithm $A_{sim}(A_{kBST})$ that \emph{simulates} $A_{kBST}$. In \cite{DemaineILO13} it is shown that this can be achieved with a factor $O(k)$ increase in cost. We strengthen the result by showing the following result.

\begin{theorem}
	\label{thm:overhead log k}There is a BST algorithm for simulating a
	$k$-finger BST with overhead factor $O(\log k)$.
\end{theorem}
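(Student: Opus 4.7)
The plan is to maintain the simulating BST with the invariant that each of the $k$ current finger positions is kept at depth $O(\log k)$ from the root. Concretely, I would organize the simulating BST into two layers: a top ``index tree'' of depth $O(\log k)$ whose leaves are the $k$ current finger keys (together with $O(k)$ auxiliary internal keys kept there to preserve the BST shape), and, hanging below the leaves of this index tree, subtrees that contain the remaining keys arranged so that the path from each finger to any key in its region matches, up to a constant factor, the corresponding path in the $k$-finger BST.

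First I would establish the per-access bound. When an access is served in the $k$-finger model by moving some finger $f_i$ along a path of length $d$ (with possibly $r$ rotations along the way), the simulating BST first walks from the root down the index tree to $f_i$ at cost $O(\log k)$, and then traces the same path that $f_i$ would follow in the $k$-finger BST, paying $O(d + r)$ additional steps. Since each $k$-finger access costs at least one unit, the per-access overhead is $O(\log k)$.

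Next I would handle updates. A single-step finger movement in the $k$-finger model may change which keys act as leaves of the index tree, so I would update the index tree using a standard amortized balanced-BST mechanism (for instance, weight-balanced or scapegoat rebalancing) so that each finger move triggers $O(\log k)$ amortized BST rotations in the simulating BST. Rotations performed by the $k$-finger algorithm do not move fingers and hence do not perturb the index tree; they are mirrored directly inside the region subtrees at $O(1)$ cost each. Summed over all $k$-finger operations, the total cost of the simulating BST is $O(\log k)$ times the $k$-finger cost, which is the claimed overhead.

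The main obstacle is carrying out the index-tree rebalancing inside the single BST on $[n]$ without conflicting with the region subtrees that hang from it. Since the index tree is not a separate object but a carve-out of the BST, rebalancing rotations may interact with ancestors of region keys, and one must verify that the BST invariant is preserved and that the amortized bound still holds. I would resolve this with a potential function that sums a logarithmic weight over the region subtrees (so that deep rebalancing is charged to past finger moves), analogously to the amortized analyses used for weight-balanced trees. Combining the per-access and per-update bounds then yields an online simulator with overhead $O(\log k)$, as required.
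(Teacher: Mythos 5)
Your high-level plan — keep all $k$ current finger positions at depth $O(\log k)$ under the root by splitting the simulating tree into a small ``index'' part and larger hanging subtrees, then pay $O(\log k)$ amortized per single-step finger operation — is the same approach the paper takes. But the proposal elides the hard parts of the construction, and one of the explicit claims is false, so as written the argument does not go through.

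The first gap is the handling of what the paper calls \emph{tendons}: the fingers of the $k$-finger tree span a Steiner tree, and the paths of that Steiner tree between branching points (pseudofingers) are shared between two fingers. An element in the middle of such a path can be reached in a few steps from the finger above it \emph{or} from the finger below it, and during the execution the fingers can slide along the path from either end. Your ``region subtrees'' model implicitly assumes each non-finger key lives close to exactly one finger and the regions partition cleanly; it does not explain how to lay out a long tendon so that either end is cheap to pop and push. The paper resolves exactly this with the half-tendon decomposition and a deque-in-a-BST (\textsection\ref{sub:DequeBST}) giving $O(1)$ amortized access to either end. Without a mechanism of this kind, the claim that ``the path from each finger to any key in its region matches, up to a constant factor, the corresponding path in the $k$-finger BST'' cannot be satisfied simultaneously for both endpoints of a tendon.

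The second gap is the statement that rotations in the $k$-finger tree ``do not move fingers and hence do not perturb the index tree; they are mirrored directly inside the region subtrees at $O(1)$ cost each.'' This is not correct. A rotation at a finger and its parent can reattach a node from a tendon to a knuckle (or vice versa), change a branching point, and therefore change the set of pseudofingers and the endpoints of half-tendons — precisely the data your index layer is built on. The paper's Fact~\ref{lemma:change exended hand} records that a finger move \emph{or a rotation} alters $O(1)$ half-tendons and $O(1)$ pseudofingers, and the whole amortized argument hinges on showing that the elements that move are always in an ``active set'' kept at depth $O(\log k)$ (Lemma~\ref{lem:active set}). Your proposal has no analogue of that characterization; invoking a generic scapegoat-style rebalancing and a vague potential over region subtrees does not by itself control where in the tree the structural changes occur, which is what is needed to keep the per-operation cost at $O(\log k)$ rather than something depending on tendon or region sizes. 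To turn your sketch into a proof you would need, essentially, the extended-hand decomposition and the deque gadget that the paper supplies.
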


From Theorem~\ref{thm:overhead log k} we immediately get \Cref{thm:simulate k finger} 
as a corollary.
	This is because the model which
	defines the $k$-lazy finger bound $LF^k(\cdot)$ is exactly $k$-finger
	BST except that the tree is static. 
	$LF^k(S)$ is the cost of
	$k$-finger BST on the optimal static tree. If we can simulate any
	$k$-finger BST algorithm with overhead $O(\log k)$, then we can
	indeed simulate any $k$-finger BST algorithm on any static tree,
	including the optimal tree. Therefore, $OPT(S)\le LF^k(S) \cdot O(\log k)$
	for any sequence $S$.

The rest of this section is devoted to the proof of \Cref{thm:overhead log k}.
We use the approach from \cite{DemaineILO13}. We are simulating a $k$-finger BST $T$ using a standard BST $T'$. The ingredients of the proof are: (1) To make sure that each element with a finger
on it in $T$ has depth at most $O(\log k)$ in $T'$. (In \cite{DemaineILO13}, each finger may have depth up to $O(k)$ in $T'$.) (2) To implement a deque data structure within $T'$ 
so that each finger in $T$ can move to any of its neighbors with cost
$O(\log k)$ amortized. (In \cite{DemaineILO13}, this cost is $O(1)$ amortized.) 
Therefore, to move a finger $f$ to its neighbor $x$ in $T$, we
can simply access $f$ from the root of $T'$ in $O(\log k)$ steps,
and then move $f$ to $x$ in $T'$ in $O(1)$ amortized steps. Hence,
the overhead factor is $O(\log k)$.

\subsubsection{DequeBST}
\label{sub:DequeBST}

We describe how to implement a deque in the BST model, called \emph{dequeBST}.
This is be used later in the simulation. The following construction
appears to be folklore, and it is the same as the one used in \cite{DemaineILO13}. As we have not found an explicit description in the literature, we include it here for completeness. 
\begin{lemma}
	\label{prop:DequeBST}The minimum and maximum element 
	from a dequeBST can be deleted in $O(1)$ amortized operations.
\end{lemma}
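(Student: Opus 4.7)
The plan is to describe a BST maintenance scheme for the deque elements and analyze delete-min and delete-max via a potential-function argument. I would organize the deque elements in a BST whose shape is kept so that the current minimum is quickly reachable from the root down the leftmost path and the current maximum is quickly reachable down the rightmost path; after each deletion a small number of rotations restores this shape for the next operation.

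For the operations themselves, delete-min walks down the leftmost path from the root to reach the current minimum, removes that node, and then performs a short sequence of rotations designed to bring the new minimum toward the root and to shorten the leftmost path. Delete-max is entirely symmetric on the right side. A single operation may perform work proportional to the current length of the relevant spine, but the restructuring after the deletion must be chosen so that the cost incurred is paid for by a decrease in spine length available to future operations.

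For the amortized analysis, I would define the potential $\Phi(T)$ as the length of the leftmost path of $T$ plus the length of the rightmost path of $T$ (possibly with logarithmic weights, analogous to the splay-tree potential). Then $\Phi(T) \ge 0$ always and $\Phi(T_0) = O(n)$ for a reasonable initial tree on $n$ elements. The key inequality to prove is that each delete-min has actual cost $O(1 + \Delta)$, where $\Delta$ is the drop in the leftmost-path contribution to $\Phi$ caused by that operation, and symmetrically for delete-max. Summing over any sequence of $m$ operations and telescoping the potential then yields total cost $O(m + n)$, which is $O(1)$ amortized per operation.

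The main obstacle, and the reason the argument is not completely trivial despite its folkloric status, is ensuring that delete-min and delete-max do not interact adversarially through the root: rotations used to shorten the leftmost path could in principle lengthen the rightmost path, breaking the decomposition of $\Phi$ into independent left and right contributions. The resolution is to confine all rotations performed by delete-min to (a subtree on) the left side of the root and those of delete-max to the right, so that the two parts of $\Phi$ evolve independently. Verifying this confinement, checking that the BST property is preserved throughout, and confirming the per-operation inequality on $\Phi$ are the technical points that complete the proof; once they are in place, the $O(1)$ amortized bound follows by the standard telescoping argument.
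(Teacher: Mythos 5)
Your approach has a genuine gap, and the problem lies in the choice of potential function. You take $\Phi(T)$ to be the length of the leftmost spine plus the length of the rightmost spine, and the key inequality you want is that each deletion's actual cost is $O(1)$ plus the drop in $\Phi$. But any maintenance scheme that makes the ``cheap'' deletions actually cheap must keep both the minimum and the maximum near the root, in which case $\Phi$ is bounded by a constant at all times and has no reserve to draw on when an expensive restructuring becomes unavoidable. And it does become unavoidable: after a long run of delete-min operations the root itself is the minimum (its left subtree is empty), and deleting it exposes whatever hangs off the right; to bring the new minimum and maximum back near the root one must now pay roughly in proportion to the size of that exposed structure, while $\Phi$ changes only by $O(1)$ (indeed, $\Phi$ may \emph{increase} at that moment). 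Your proposed remedy --- confining delete-min's rotations to the left of the root and delete-max's to the right so that the two spine lengths ``evolve independently'' --- does not touch this, because the costly event is not a rotation across the root but the removal of the root itself, which inevitably reshapes both spines at once.

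The paper sidesteps the issue by representing the deque as two stacks (the classical queue-from-two-stacks simulation): the two halves hang off the root as zigzag chains so that both active endpoints sit at depth one, and the potential is the \emph{imbalance} $\Phi = |\ell_1 - \ell_2|$ between the sizes of the two halves, not a sum of spine lengths. Ordinary operations cost $O(1)$ and change $\Phi$ by $O(1)$; the expensive event --- one half emptying, whereupon the other half of size $\ell$ is split evenly --- costs $\Theta(\ell)$ but drops $\Phi$ from $\ell$ to $0$. The two sides must be coupled in the potential (through their difference) precisely so that a large reserve is accumulated by the time the rebalance is forced; a potential designed so that the two sides are accounted for independently, as you propose, cannot perform this transfer, and the ``cost $\le O(1+\Delta)$'' claim fails at exactly the operations that matter.
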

\begin{proof} The simulation is inspired by the well-known simulation of a queue by two stacks with constant amortized time per operation (\cite[Exercise 3.19]{Mehlhorn-Sanders:Toolbox}). We split the deque at some position (determined by history) and put the two parts into structures that allow us to access the first and the last element of the deque. It is obvious how to simulate the deque operations as long as the sequences are non-empty.  When one of the sequences becomes empty, we split the other sequence at the middle and continue with the two parts.  A simple potential function argument shows that the amortized cost of all deque operations is constant. Let $\ell_1$ and $\ell_2$ be the length of the two sequences, and define the potential $\Phi = |\ell_1 - \ell_2|$. As long as neither of the two sequences are empty, for every insert and delete operation both the cost and the change in potential are $O(1)$. If one sequence becomes empty, we split the remaining sequence into two equal parts. The decrease in potential is equal to the length of the sequence before the splitting (the potential is zero after the split). The cost of splitting is thus covered by the decrease of potential.

The simulation by a BST is easy. We realize both sequences by chains attached to the root. The right chain contains the elements in the second stack with the top element as the right child of the root, the next to top element as the left child of the top element, and so on.
\end{proof}

\subsubsection{Extended Hand}

To describe the simulation precisely, we borrow terminology from~\cite{DemaineILO13}. Let $T$ be a BST with 
a set $F$ of $k$ fingers $f_{1},\dots,f_{k}$. For convenience
we assume the root of $T$ to be one of the fingers. Let $S(T,F)$ be
the Steiner tree with terminals $F$. A \emph{knuckle} is a connected
component of $T$ after removing $S(T,F)$. Let $P(T,F)$ be the union
of fingers and the degree-3 nodes in $S(T,F)$. We call $P(T,F)$
the set of \emph{pseudofingers}. A \emph{tendon} $\tau_{x,y}$ is
the path connecting two pseudofingers $x,y\in P(T,F)$ (excluding $x$ and $y$) such
that there is no other $z\in P(T,F)$ inside. We assume that $x$
is an ancestor of $y$. 

The next definitions are new. For each tendon $\tau_{x,y}$, there
are two \emph{half tendons,} $\tau_{x,y}^{<},\tau_{x,y}^{>}$ containing
all elements in $\tau_{x,y}$ which are less than $y$ and greater than
$y$ respectively. Let $H(T,F)=\{\tau_{x,y}^{<},\tau_{x,y}^{>}\mid\tau_{x,y}$
is a tendon$\}$ be the set of all half tendons. 

For each $\tau\in H(T,F)$, we can treat $\tau$ as an interval $[\min(\tau),\max(\tau)]$
where $\min(\tau),\max(\tau)$ are the minimum and maximum elements
in $\tau$ respectively. For each $f\in P(T,F)$, we can treat $f$
as an trivial interval $[f,f]$. 

Let $E(T,F)=P(T,F)\cup H(T,F)$ be the set of intervals defined by
all pseudofingers $P(T,F)$ and half tendons $H(T,F)$. We call $E(T,F)$
an \emph{extended hand}\footnote{In \cite{DemaineILO13}, they define a hand which involves only the pseudofingers.}.
Note that when we treat $P(T,F)\cup H(T,F)$ as a set of elements,
such a set is exactly $S(T,F)$. So $E(T,F)$ can be viewed as a partition
of $S(T,F)$ into pseudofingers and half-tendons.

We first state two facts about the extended hand.
\begin{lemma}
	\label{lem:hand size}Given any $T$ and $F$ where $|F|=k$, there
	are $O(k)$ intervals in $E(T,F)$.\end{lemma}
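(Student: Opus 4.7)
My plan is to bound $|P(T,F)|$ and $|H(T,F)|$ separately by $O(k)$.

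First, I would bound the number of pseudofingers. By definition, $P(T,F)$ consists of the $k$ fingers plus the degree-$3$ nodes of the Steiner tree $S(T,F)$. The Steiner tree $S(T,F)$ is a subtree of $T$ whose leaves are all fingers (we may have added the root as a finger per the convention in the paragraph above the lemma). Since $S(T,F)$ has at most $k$ leaves, a standard counting argument (the sum of $(\deg(v)-2)$ over internal nodes equals $(\text{number of leaves}) - 2$ in a tree where internal nodes have degree $\ge 2$) gives at most $k-2$ branching (degree-$\ge 3$) internal nodes. Hence $|P(T,F)| \le 2k = O(k)$.

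Next, I would bound the number of tendons. Contract each maximal path of degree-$2$ vertices in $S(T,F)$ to a single edge; the resulting tree $S'$ has vertex set $P(T,F)$ and its edges correspond bijectively to the tendons $\tau_{x,y}$. Since $S'$ is a tree on $|P(T,F)| = O(k)$ vertices, it has $O(k)$ edges, so there are $O(k)$ tendons. Each tendon contributes exactly two half-tendons $\tau_{x,y}^{<}$ and $\tau_{x,y}^{>}$ to $H(T,F)$, hence $|H(T,F)| = O(k)$.

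Combining, $|E(T,F)| = |P(T,F)| + |H(T,F)| = O(k)$, as claimed. The argument is essentially a counting fact about trees, so I expect no real obstacle; the only care needed is to ensure that the convention of treating the root as a finger (stated in the setup of the extended hand) makes $S(T,F)$ a well-defined subtree with exactly the fingers as its leaves, which makes the leaf count $k$ valid.
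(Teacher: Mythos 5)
Your proof is correct and takes essentially the same route as the paper: bound $|P(T,F)|$ by $O(k)$ via the number of leaves (fingers) versus branching nodes in the Steiner tree, then observe that tendons correspond to edges of a tree on the pseudofingers, giving $O(k)$ tendons and hence $O(k)$ half-tendons. You simply make the leaf-versus-branching-node count explicit (the paper just asserts "at most $k$ nodes with degree 3"); one tiny imprecision in your closing remark is that the leaves of $S(T,F)$ need only be a subset of the fingers, not exactly the fingers, but this only strengthens the bound.
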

\begin{proof}
	Note that $|P(T,F)|\le2k$ because there are $k$ fingers and there
	can be at most $k$ nodes with degree 3 in $S(T,F)$. Consider the
	graph where pseudofingers are nodes and tendons are edges. That graph
	is a tree. So $|H(T,F)|=O(k)$ as well. \end{proof}
\begin{lemma}
	\label{lem:hand ordered}Given any $T$ and $F$, all the intervals
	in $E(T,F)$ are disjoint.\end{lemma}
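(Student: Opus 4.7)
The plan is to prove the stronger structural claim that for every half-tendon $\tau \in H(T,F)$, the intersection $S(T,F) \cap [\min \tau, \max \tau]$ equals $\tau$ as a set. Once this is available, disjointness of all intervals in $E(T,F)$ follows by a short case analysis: if two half-tendons $\tau_1, \tau_2$ had overlapping intervals, then (up to swapping) $\min \tau_2$ would be an element of $S(T,F)$ sitting inside $[\min \tau_1, \max \tau_1]$, and the claim would force $\min \tau_2 \in \tau_1$, contradicting the disjointness of elements of distinct tendons. The half-tendon-versus-pseudofinger and pseudofinger-versus-pseudofinger cases are immediate from the same claim together with the fact that no pseudofinger lies on any tendon.

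To establish the structural claim, I fix $\tau = \tau_{x,y}^{<}$ and consider the path $x = v_0 \to v_1 \to \cdots \to v_k = y$ in $T$. The key observation is that whenever $v_i < y$ the path must continue into the right subtree of $v_i$ in order to reach $y$, so every subsequent $v_j$ has key strictly greater than $v_i$. Ordering $\tau$ by key as $t_1 < t_2 < \cdots < t_m$, this forces the $t_i$'s to appear in the same order along the path as in key order; in particular $t_1$ is the path-earliest $\tau$-element, so $t_1$ is a $T$-ancestor of every $t_i$ and of $y$.

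Since $t_m$ lies in the subtree of $t_1$, the BST subtree-range $(\ell_{t_1}, r_{t_1})$ of $t_1$ satisfies $r_{t_1} > t_m$, so $[t_1, t_m] \subseteq (\ell_{t_1}, r_{t_1})$ and every node of $T$ whose key lies in $[t_1, t_m]$ sits inside the subtree of $t_1$. It remains to identify the elements of $S(T,F)$ in this subtree whose key is at most $t_m$. The crucial input is that every intermediate tendon node has degree exactly two in $S(T,F)$, so it cannot host any off-path $S(T,F)$ branch; consequently $S(T,F)$ restricted to the subtree of $t_1$ is just the path suffix from $t_1$ down to $y$ together with $S(T,F)$ restricted to the subtree of $y$. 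Among the path suffix, the elements of key at most $t_m$ are exactly $t_1, \ldots, t_m$, since the remaining $v_j$'s lie in $\tau^{>}$ and hence exceed $y > t_m$; and because $t_m$ is a $T$-ancestor of $y$ with $y$ in its right subtree, every key in the subtree of $y$ strictly exceeds $t_m$. Combining these facts yields $S(T,F) \cap [t_1, t_m] = \tau$.

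The main obstacle I anticipate is keeping the case analysis compact: the four configurations corresponding to a left vs. right tendon and a $\tau^{<}$ vs. $\tau^{>}$ half are essentially symmetric, and I plan to carry the argument uniformly through the representative case above and then reduce the others via the obvious mirror symmetries of the BST.
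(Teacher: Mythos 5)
Your proof is correct, but it takes a genuinely different route from the paper's. The paper assumes two intervals intersect, forces one to be a half-tendon $\tau = \{t_1 < \dots < t_k\}$ in ancestor order, and then does a direct case split on the other interval: if it is a pseudofinger $x$ with $t_j < x < t_{j+1}$, it argues that $x$ cannot be an ancestor of $t_{j+1}$ (since $t_j$ is the nearest left ancestor), so $x$ lies in $t_{j+1}$'s left subtree, making $t_{j+1}$ a common ancestor of two pseudofingers and hence itself a pseudofinger, a contradiction; if it is another half-tendon, a nesting argument reduces to the same situation. You instead first isolate a cleaner structural invariant, namely that $S(T,F) \cap [\min\tau, \max\tau] = \tau$ for every half-tendon $\tau$. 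The key ingredients you use are (i) that the elements of a half-tendon form an ancestor chain in $T$ whose path order agrees with key order, (ii) that the full key interval $[\min\tau,\max\tau]$ lies inside the $T$-subtree of the chain's topmost element, and (iii) that tendon interiors have $S(T,F)$-degree exactly two, so the Steiner tree inside that $T$-subtree is just the remaining tendon path plus $S(T,F)$ below $y$, which you verify contributes no new keys in range. Disjointness then follows uniformly for every pair of intervals, since an overlap always forces an endpoint of one interval — which is always a Steiner-tree node — to land strictly inside the other's key range, violating the invariant. Both approaches are valid; yours buys a unified treatment of the pseudofinger-vs-half-tendon and half-tendon-vs-half-tendon cases (and avoids the paper's somewhat delicate nesting case), at the cost of spelling out the degree-two observation that the paper leaves implicit.

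Two small points worth tightening in a written version. First, your disjointness step should note explicitly that when intervals $[a_1,b_1]$ and $[a_2,b_2]$ overlap with $a_1 \le a_2$, you always have $a_2 \in [a_1,b_1]$, so it is indeed always an endpoint that lands inside the other interval (this covers both the nested and the staggered case). Second, your phrase ``the four configurations corresponding to a left vs.\ right tendon'' is a red herring — the orientation of $y$ relative to $x$ plays no role in your argument; the only genuine symmetry you need is $\tau^{<}$ versus $\tau^{>}$, which is a mirror reflection of key order.
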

\begin{proof}
	Suppose that there are two intervals $\tau,x\in E(T,F)$ that intersect
	each other. One of them, say $\tau$, must be a half tendon. Because
	the intervals of pseudofingers are of length zero and they are distinct,
	they cannot intersect. We write $\tau=\{t_{1},\dots,t_{k}\}$ where
	$t_{1}<\dots<t_{k}$. Assume w.l.o.g.\ that $t_{i}$ is an ancestor
	of $t_{i+1}$ for all $i<k$, and so $t_{k}$ is an ancestor of a
	pseudofingers $f$ where $t_{k}<f$. 
	
	Suppose that $x$ is a pseudofinger and $t_{j}<x<t_{j+1}$ for some
	$j$. Since $t_{j}$ is the first left ancestor of $t_{j+1}$, $x$
	cannot be an ancestor of $t_{j+1}$ in $T$. So $x$ is in the left
	subtree of $t_{j+1}$. But then $t_{j+1}$ is a common ancestor of
	two pseudofingers $x$ and $f$, and $t_{j+1}$ must be a pseudofinger
	which is a contradiction.
	
	Suppose next that $x=\{x_{1},\dots,x_{\ell}\}$ is a half tendon where
	$x_{1}<\dots<x_{\ell}$. We claim that either $[x_{1},x_{\ell}]\subset[t_{j},t_{j+1}]$
	for some $j$ or $[t_{1},t_{k}]\subset[x_{j'},x_{j'+1}]$ for some
	$j'$. Suppose not. Then there exist two indices $j$ and $j'$ where
	$t_{j}<x_{j'}<t_{j+1}<x_{j'+1}$. Again, $x_{j'}$ cannot be an ancestor
	of $t_{j+1}$ in $T$, so $x_{j'}$ is in the left subtree of $t_{j+1}$.
	We know either $x_{j'}$ is the first left ancestor of $x_{j'+1}$
	or $x_{j'+1}$ is the first right ancestor of $x_{j'}$. If $x_{j'}$
	is an ancestor of $x_{j'+1}$, then $x_{j'+1}<t_{j+1}$ which is a
	contradiction. If $x_{j'+1}$ is the first right ancestor of $x_{j'}$,
	then $t_{j+1}$ is not the first right ancestor of $x_{j'}$ and hence
	$x_{j'+1}<t_{j+1}$ which is a contradiction again. Now suppose w.l.o.g.
	$[x_{1},x_{\ell}]\subset[t_{j},t_{j+1}]$. Then there must be another
	pseudofinger $f'$ in the left subtree of $t_{j+1}$, hence $\tau$
	cannot be a half tendon, which is a contradiction. 
\end{proof}

\subsubsection{The structure of the simulating BST}
\label{sub:structure of extended hand}

In this section, we describe the structure of the BST $T'$ that we
maintain given a $k$-finger BST $T$ and the set of
fingers $F$. 

For each half tendon $\tau\in H(T,F)$, let $T'_{\tau}$ be the tree
with $\min(\tau)$ as a root which has $\max(\tau)$ as a right child.
$\max(\tau)$'s left child is a subtree containing the remaining elements
$\tau\setminus\{\min(\tau),\max(\tau)\}$. We implement a \emph{dequeBST
}on this subtree as defined in \textsection\,\ref{sub:DequeBST}. By Lemma~\ref{lem:hand ordered},
intervals in $E(T,F)$ are disjoint and hence they are totally ordered.
Since $E(T,F)$ is an ordered set, we can define $T'_{E_{0}}$ to
be a balanced BST such that its elements correspond to elements in
$E(T,F)$. Let $T'_{E}$ be the BST obtained from $T'_{E_{0}}$
by replacing each node $a$ in $T'_{E_{0}}$ that corresponds to a half
tendon $\tau\in H(T,F)$ by $T'_{\tau}$. That is, suppose that the
parent, left child, and right child are $a_{u},a_{l}$ and $a_{r}$ respectively.
Then the parent in $T'_{E}$ of the root of $T'_{\tau}$ which is
$\min(\tau)$ is $a_{up}$. The left child in $T'_{E}$ of $\min(\tau)$
is $a_{l}$ and the right child in $T'_{E}$ of $\max(\tau)$ is $a_{r}$. 

The BST $T'$ has $T'_{E}$ as its top part and each knuckle of $T$
hangs from $T'_{E}$ in a determined way. 
\begin{lemma}
	\label{lem:depth pseudofinger}Each element corresponding to pseudofinger
	$f\in P(T,F)$ has depth $O(\log k)$ in $T'_{E}$, and hence in $T'$.\end{lemma}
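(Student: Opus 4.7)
The plan is to exploit the fact that $T'_{E_0}$ is a balanced BST over the set of intervals $E(T,F)$, whose cardinality is $O(k)$ by \Cref{lem:hand size}. Consequently every node of $T'_{E_0}$ -- and in particular the singleton interval corresponding to any pseudofinger $f \in P(T,F)$ -- lies at depth $O(\log k)$ in $T'_{E_0}$. I would then show that the substitutions that turn $T'_{E_0}$ into $T'_{E}$ inflate any root-to-pseudofinger path only by a constant factor, which yields the desired $O(\log k)$ depth in $T'_E$ and hence in $T'$.

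The key step is to analyze what a single substitution $a \mapsto T'_{\tau}$ does along a path that continues to one of $a$'s original children. Recall that $T'_{\tau}$ has $\min(\tau)$ at the root, $\max(\tau)$ as its right child, and the dequeBST of the remaining elements hanging off $\max(\tau)$'s left child; moreover, the prescription in \textsection\,\ref{sub:structure of extended hand} attaches $a_l$ as $\min(\tau)$'s left child and $a_r$ as $\max(\tau)$'s right child. Hence, descending through the former node $a$ to its left child $a_l$ takes the single edge $\min(\tau) \to a_l$ (the same number of edges as before), while descending to its right child $a_r$ takes $\min(\tau) \to \max(\tau) \to a_r$, adding exactly one extra edge. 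Summing over the $O(\log k)$ edges of the root-to-$f$ path in $T'_{E_0}$, the depth of $f$ in $T'_E$ increases by at most $O(\log k)$; since $T'$ is obtained from $T'_E$ by attaching the knuckles below the leaves of $T'_E$ and $f$ already sits inside $T'_E$, the depth of $f$ in $T'$ is the same as in $T'_E$, namely $O(\log k)$.

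The only thing to verify carefully, and arguably the only (minor) obstacle, is that the dequeBST subtree attached at $\max(\tau)$ -- which may itself be deep -- never lies on any root-to-pseudofinger path; otherwise a single substitution could blow up the depth by $\Theta(\log|\tau|)$. This is immediate from the description of $T'_{\tau}$: the dequeBST occupies the left subtree of $\max(\tau)$ and contains only the interior elements of the half tendon $\tau$, none of which are pseudofingers or roots of other half tendons, whereas the route reaching $a_l$ and $a_r$ is confined to the spine $\min(\tau) \to \max(\tau)$. Once this structural observation is in place, the bound follows from the straightforward per-edge counting argument above.
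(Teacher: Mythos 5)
Your proof is correct and follows essentially the same route as the paper's: Lemma~\ref{lem:hand size} gives $|E(T,F)| = O(k)$, so the balanced tree $T'_{E_0}$ has depth $O(\log k)$, and each half-tendon substitution adds at most one edge to any root-to-pseudofinger path (via the $\min(\tau)\to\max(\tau)$ spine), yielding depth at most twice that in $T'_{E_0}$. Your additional remark that the dequeBST hangs off $\max(\tau)$'s left child and therefore never lies on such a path merely makes explicit what the paper's phrase ``by the construction of $T'_E$'' leaves implicit.
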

\begin{proof}
	By Lemma~\ref{lem:hand size}, $|E(T,F)|=O(k)$. So the depth of $T'_{E_{0}}$
	is $O(\log k)$. For each node $a$ corresponding to a pseudofinger $f\in P(T,F)$,
	observe that the depth of $a$ in $T'_{E}$ is at most twice the depth
	of $a$ in $T'_{E_{0}}$ by the construction of $T'_{E}$. 
\end{proof}

\subsubsection{The cost for simulating the $k$-finger BST}

We finally prove \Cref{thm:overhead log k}. That
is, we prove that whenever one of the fingers in a $k$-finger
BST $T$ moves to its neighbor or rotates, we can update the maintained
BST $T'$ to have the structure as described in the last section with
cost $O(\log k)$.

We state two observations which follow from the structure of our maintained BST $T'$ described in \textsection\,\ref{sub:structure of extended hand}. The first observation follows immediately from Lemma~\ref{prop:DequeBST}.
\begin{fact}
	\label{lem:update tendon}For any half tendon $\tau\in H(T,F)$, we
	can insert or delete the minimum or maximum element in $T'_{\tau}$
	with cost $O(1)$ amortized.\end{fact}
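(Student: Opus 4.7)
The plan is to show that inserting or deleting the min or max of $T'_\tau$ reduces, with $O(1)$ additional pointer/rotation work, to the corresponding insert/delete min/max operation on the inner dequeBST sitting under $\max(\tau)$. Once reduced, Lemma~\ref{prop:DequeBST} yields the $O(1)$ amortized bound. So the argument is purely structural: I unfold the three-level description of $T'_\tau$ (root $=\min(\tau)$, right child $=\max(\tau)$, dequeBST hanging as $\max(\tau)$'s left child), then trace what must change when the extreme element of $\tau$ is updated.

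First I would handle deletions. To delete $\min(\tau)$: let $x$ be the smallest key of the inner dequeBST; extract $x$ using Lemma~\ref{prop:DequeBST} at $O(1)$ amortized cost; install $x$ as the new root of $T'_\tau$ in place of $\min(\tau)$, keeping $\max(\tau)$ as its right child and the shrunken dequeBST as $\max(\tau)$'s left child. The left subtree of the old root and the parent pointer of $T'_\tau$ in $T'_E$ simply migrate to $x$; this is an $O(1)$ pointer update (at most a single rotation). The case of deleting $\max(\tau)$ is symmetric: take the largest key $y$ of the inner dequeBST (again $O(1)$ amortized by Lemma~\ref{prop:DequeBST}) and splice it in as the new right child of the root, with the remaining dequeBST as its left child.

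For insertions, the operations are simply the reverses of the deletions. To insert a new minimum $y<\min(\tau)$: push the old $\min(\tau)$ into the dequeBST as its new minimum element (cost $O(1)$ amortized by Lemma~\ref{prop:DequeBST}), then make $y$ the new root of $T'_\tau$, with $\max(\tau)$ as its right child and the enlarged dequeBST as $\max(\tau)$'s left child. The parent of $T'_\tau$ in $T'_E$ is relinked to $y$. Inserting a new maximum larger than $\max(\tau)$ is symmetric, pushing the old $\max(\tau)$ into the dequeBST as its new maximum. In both cases, only a constant number of pointer updates are performed outside the dequeBST.

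Summing up, each of the four operations costs $O(1)$ amortized inside the dequeBST by Lemma~\ref{prop:DequeBST}, plus $O(1)$ worst-case pointer/rotation work at the top of $T'_\tau$, which gives the claimed $O(1)$ amortized bound. The only subtlety is to verify that the dequeBST potential used in Lemma~\ref{prop:DequeBST} is unaffected by the small reshuffling at the root and right child of $T'_\tau$; since that reshuffling neither inserts into nor deletes from the dequeBST proper (it only renames which element plays the role of root and right child), the potential argument carries over verbatim. I do not anticipate a real obstacle here — the statement is essentially a wrapper around Lemma~\ref{prop:DequeBST}.
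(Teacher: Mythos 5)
The paper does not actually give a detailed proof of this fact; it dismisses it with the single sentence ``The first observation follows immediately from Lemma~\ref{prop:DequeBST}.'' Your proposal is a correct, fleshed-out version of that same argument: you explicitly trace how each of the four operations on $T'_\tau$ decomposes into one dequeBST min/max update (costing $O(1)$ amortized by Lemma~\ref{prop:DequeBST}) plus $O(1)$ worst-case pointer/rotation work at the root and right child of $T'_\tau$, and you correctly observe that the latter work never touches the dequeBST's potential. No gap; you simply made explicit what the paper treated as immediate.
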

Next, it is convenient to define a set $A$, called \emph{active set}, as a set
of pseudofingers, the roots of knuckles whose parents are pseudofingers,
and the minimum or maximum of half tendons. 
\begin{fact}
	\label{lemma:change exended hand}When a finger $f$ in a $k$-finger
	BST $T$ moves to its neighbor or rotates with its parent, the extended
	hand $E(T,F)=P(T,F)\cup H(T,F)$ is changed as follows. 
	\begin{enumerate}
		\item There are at most $O(1)$ half tendons $\tau\in H(T,F)$ whose elements
		are changed. Moreover, for each changed half tendon $\tau$, either
		the minimum or maximum is inserted or deleted. The inserted or deleted
		element $a$ was or will be in the active set $A$.
		\item There are at most $O(1)$ elements added or removed from $P(T,F)$.
		Moreover, the added or removed elements were or will be in the active
		set $A$.
	\end{enumerate}
\end{fact}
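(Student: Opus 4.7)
The plan is to perform a case analysis on the operation type: (i) $f$ moves to its parent, (ii) $f$ moves to one of its children (left and right being symmetric), and (iii) $f$ rotates with its parent. In each case, I would first establish part 2 (changes to $P(T,F)$) and then derive part 1 (changes to $H(T,F)$) from it, since the structure of tendons is controlled by the set of pseudofingers.

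For part 2, the key observation is locality of the Steiner tree $S(T,F)$. Let $v$ denote the old position of $f$ and $v'$ the new position. A finger move along the edge $(v,v')$ alters $S(T,F)$ only along that edge: $v$ may leave $S(T,F)$ if it had no other finger-descendants, and $v'$ may enter $S(T,F)$ if not already present because of another finger. A rotation alters $S(T,F)$ only in the $O(1)$-sized neighborhood of the rotated pair. In all cases only a constant number of nodes change either their finger status or their degree in $S(T,F)$, so at most $O(1)$ elements are added to or removed from $P(T,F)$. Each such element is one of $v$, $v'$, an immediate Steiner-tree neighbor that becomes a newly created degree-3 vertex, or a knuckle root that becomes exposed directly under a pseudofinger — all of which lie in the active set $A$ on at least one side of the operation.

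For part 1, any half tendon whose element set changes must either have an endpoint that changes pseudofinger status, or be split/merged at a pseudofinger that appears or disappears. By part 2 there are $O(1)$ such events, so at most $O(1)$ half tendons are affected. To see the refined claim that each change is the insertion or deletion of the single min or max, note that when $f$ moves from $v$ to an adjacent node $v'$, the only element that migrates between $S(T,F)$ and the knuckles (equivalently, between adjacent tendons) is $v$ itself (respectively $v'$). Since $v$ and $v'$ are parent and child in $T$, BST structure forces one of them to be the immediate predecessor or successor of the other among the relevant Steiner-tree path elements; this predecessor/successor is exactly the $\max$ of $\tau_{u,v'}^<$ or the $\min$ of $\tau_{u,v'}^>$ for the new endpoint pseudofinger $u$ above. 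The transferred element is in $A$ before the operation (as pseudofinger or knuckle root) and in $A$ afterwards (as min/max of a half tendon), fulfilling the ``was or will be in $A$'' clause.

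I expect the main obstacle to be the rotation case (iii), because a rotation actually modifies $T$. In a zig between $f$ and its parent $p$, the ``middle'' subtree (the right subtree of $f$ when $f$ is the left child of $p$, or vice versa) gets reattached to $p$, and if this subtree intersects $S(T,F)$ then tendons may be reorganized less transparently than for a mere move. I plan to enumerate the sub-configurations by (a) whether $f$ and $p$ are pseudofingers beyond being fingers (i.e., degree-3 in $S(T,F)$), and (b) whether the transferred middle subtree intersects $S(T,F)$ and, if so, at how many branches. In each sub-case, the reattachment transfers only $O(1)$ elements — essentially $p$ itself, together with at most one junction point from the middle subtree — between two adjacent half tendons, and the BST predecessor/successor argument from the move case shows that each transferred element is the min or max of the affected half tendon and lies in $A$ on one side of the rotation. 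The verification reduces to a handful of diagram cases whose routine checking dominates the length of the full proof.
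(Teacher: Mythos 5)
The paper asserts this Fact without any proof at all --- it is stated as a structural observation following the description of the maintained tree $T'$ --- so there is no paper argument to compare against. Your plan is the right one: a finger move or a single rotation alters $T$ (and hence the Steiner tree $S(T,F)$) only along $O(1)$ edges incident to $v$ and its parent, so only $O(1)$ nodes change membership or degree in $S(T,F)$; this controls part~2 directly, and then part~1 follows because the only elements that can migrate into or out of a half tendon are $v$ and its (old or new) parent.

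What you have, though, is a plan rather than a proof. You correctly identify the rotation as the substantive case and then explicitly defer it to ``routine checking'' of ``a handful of diagram cases''; none of those cases are actually checked, so as written the argument is incomplete precisely where the work lies. Two points also need tightening if you do carry it out. First, ``$v$ may leave $S(T,F)$ if it had no other finger-descendants'' should be stated in terms of whether $v$'s subtree (after $f$ departs) still contains a finger --- and note this can only happen when $f$ moves to $v$'s parent, never when it moves to a child. Second, the sentence ``BST structure forces one of them to be the immediate predecessor or successor of the other among the relevant Steiner-tree path elements'' is the right conclusion but needs a justification: $v$ is not in general the key-order predecessor or successor of $v'$. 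The actual claim you need is: if $a$ is an ancestor of a pseudofinger $y$ and $a$ is the parent of $y$, then among all ancestors of $y$ that lie on the tendon $\tau_{x,y}$, $a$ is the one closest to $y$ in key order on its side. This holds because if $z\neq a$ is a proper ancestor of $a$ with, say, $a<y$ and $a<z<y$, then $z$ would be a right-ancestor of $a$ and hence would exceed every key in $T_a\ni y$, a contradiction. So $a$ is $\max\tau_{x,y}^<$ or $\min\tau_{x,y}^>$, which is what licenses the ``only the extreme element changes'' clause. With that lemma stated and the rotation case genuinely enumerated, your outline becomes a proof; as submitted it is not yet one.
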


\begin{lemma}
	\label{lem:active set}Let $a\in A$ be an element in the active
	set. We can move $a$ to the root with cost $O(\log k)$ amortized.
	Symmetrically, the cost for updating the root $r$ to become some
	element in the active set is $O(\log k)$ amortized.\end{lemma}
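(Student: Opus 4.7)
The plan is to reduce the lemma to a depth bound: show that every element $a \in A$ has depth $O(\log k)$ in $T'$, and then conclude by observing that rotating an element at depth $d$ to the root (or the reverse, moving the root down to position $d$) takes $O(d)$ rotations plus the cost of any amortized updates on the dequeBSTs.

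The depth bound splits into three cases according to which kind of object in $A$ the element $a$ is. First, if $a$ is a pseudofinger, then Lemma~\ref{lem:depth pseudofinger} gives $d_{T'}(a) = O(\log k)$ directly. Second, if $a$ is the root of a knuckle whose parent in $T$ is a pseudofinger $f$, then by construction the knuckle hangs from $f$ in $T'$, so $d_{T'}(a) \leq d_{T'}(f) + 1 = O(\log k)$. Third, if $a = \min(\tau)$ or $a = \max(\tau)$ for some half tendon $\tau \in H(T,F)$, recall that in the replacement procedure of~\textsection\,\ref{sub:structure of extended hand}, the tendon-node of $T'_{E_0}$ is replaced by $T'_\tau$, inside which $\min(\tau)$ sits at the local root and $\max(\tau)$ is its right child. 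Since $T'_{E_0}$ is a balanced BST on the $O(k)$ intervals of $E(T,F)$ (Lemma~\ref{lem:hand size}), the relevant depth in $T'_E$ grows by only $O(1)$ from the balanced-tree depth $O(\log k)$, yielding $d_{T'}(a) = O(\log k)$ again.

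Given the depth bound, moving $a$ to the root by the usual sequence of rotations costs $O(\log k)$ worst-case; the knuckles sitting below the skeleton travel with their roots and do not contribute additional cost. The symmetric direction, pushing the current root $r$ back down into the position of a chosen active-set element, is just the same rotation sequence read in reverse. The amortized qualifier is introduced to absorb the $O(1)$ amortized operations on the dequeBST (Fact~\ref{lem:update tendon}, i.e.\ Lemma~\ref{prop:DequeBST}) that accompany an update: e.g.\ when the active element being surfaced is $\min(\tau)$ or $\max(\tau)$ and the subsequent step of Fact~\ref{lemma:change exended hand} deletes it from the deque.

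The main obstacle is the third case: one has to verify that the two-level construction of $T'_E$ (balanced BST on intervals, then substitution of each tendon-node by $T'_\tau$) preserves logarithmic depth on all \emph{active} elements even when several half-tendons and pseudofingers are stacked along the access path. This reduces to observing that each substitution adds only a constant to the depth (since $\min(\tau)$ and $\max(\tau)$ are at depths $0$ and $1$ of $T'_\tau$ respectively), and that the depth of the enclosing balanced tree on $|E(T,F)| = O(k)$ intervals is $O(\log k)$; once this is checked, the rest of the argument is routine.
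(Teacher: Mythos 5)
Your proposal is correct and follows the same overall strategy as the paper: reduce the lemma to a depth bound of $O(\log k)$ on active-set elements, then rotate. The one place you diverge is the third case. The paper does not argue directly that $\min(\tau)$ and $\max(\tau)$ sit at depth $O(\log k)$; instead it notes that the root of $T'_\tau$ is at depth $O(\log k)$, first performs a dequeBST delete (Lemma~\ref{prop:DequeBST}) so that the endpoint being surfaced becomes the parent of the remaining $T'_\tau$ at $O(1)$ amortized cost, and only then rotates to the root — which is also where the ``amortized'' in the lemma statement genuinely comes from. You instead extend the depth bound of Lemma~\ref{lem:depth pseudofinger} to the tendon endpoints (correctly: $\min(\tau)$ is at the substituted node's position and $\max(\tau)$ is its right child, so both inherit the $O(\log k)$ depth of $T'_E$), which makes the move-to-root worst-case and pushes the dequeBST restructuring into Lemma~\ref{lem:update cost}. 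Both bookkeepings are valid; the paper's choice is natural because the dequeBST delete must be done anyway to restore the $T'_\tau$ invariant after the endpoint leaves, so charging it here keeps that lemma's statement honest about being amortized, whereas in your version the ``amortized'' qualifier becomes vestigial. Your uniform depth-bound treatment is, if anything, slightly cleaner.
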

\begin{proof}
	There are two cases. If $a$ is a pseudofinger or a root of a knuckle
	whose parent is pseudofinger, we know that the depth of $a$ was $O(\log k)$
	by Lemma~\ref{lem:depth pseudofinger}. So we can move $a$ to root with cost
	$O(\log k)$. Next, if $a$ is the minimum or maximum of a half tendon
	$\tau$, we know that the depth of the root of the subtree $T'_{\tau}$
	is $O(\log k)$. Moreover, by \Cref{lem:update tendon}, we can delete
	$a$ from $T'_{\tau}$ (make $a$ a parent of $T'_{\tau}$) with cost
	$O(1)$ amortized. Then we move $a$ to root with cost $O(\log k)$
	worst-case. The total cost is then $O(\log k)$ amortized. The proof
	for the second statement is symmetric.
\end{proof}
\begin{lemma}
	\label{lem:update cost}When a finger $f$ in a $k$-finger BST $T$
	moves to its neighbor or rotates with its parent, the BST $T'$ can
	be updated accordingly with cost $O(\log k)$ amortized.\end{lemma}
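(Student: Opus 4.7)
The plan is to combine the three tools assembled in the preceding lemmas: (i) the extended-hand change fact, which guarantees that a single finger move or rotation in $T$ induces only $O(1)$ modifications to $P(T,F)$ and $H(T,F)$; (ii) \Cref{lem:active set}, which lets us bring any active element to the root of $T'$ (and symmetrically install a new root) in $O(\log k)$ amortized time; and (iii) \Cref{lem:update tendon}, which provides constant amortized insertion/deletion at the extremes of any half tendon $T'_\tau$. Since only $O(1)$ structural changes occur per finger operation, showing that each one can be handled in $O(\log k)$ amortized time suffices.

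I would start by enumerating the possible structural changes: a half tendon $\tau$ gains or loses its minimum or maximum element; a node is inserted into or deleted from $P(T,F)$; or a knuckle attachment must be rerouted because its parent pseudofinger changed. In every case, the element $a$ involved in the change lies in the active set $A$ either immediately before or immediately after the operation, by the extended-hand change fact. For each such $a$, I would (a) bring $a$ to the root of $T'$ using \Cref{lem:active set} at cost $O(\log k)$ amortized; (b) perform the local update—either a deque insert/delete at the end of $T'_\tau$ (cost $O(1)$ amortized), or an insertion/deletion of a single element into the ordered set underlying $T'_{E_0}$; and (c) reinsert $a$ into its correct new position in $T'$, again via \Cref{lem:active set}.

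The main obstacle is keeping $T'_{E_0}$ balanced while simultaneously respecting the rule that each pseudofinger has depth $O(\log k)$ in $T'_E$ and that knuckles remain correctly attached. The idea is to realize $T'_{E_0}$ as a weight-balanced BST on $|E(T,F)|=O(k)$ intervals, so that a single insertion or deletion costs $O(\log k)$ amortized and only disturbs the local neighborhood; knuckles attached at a pseudofinger that does not change can stay where they are, while a knuckle whose attachment point disappears or shifts is moved together with its (at most two) neighboring spine links, a constant amount of work once the relevant elements have been rotated to the root. Because every element that moves or is relocated is active at the moment of the operation, each such step inherits the $O(\log k)$ amortized guarantee of \Cref{lem:active set}.

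Summing over the $O(1)$ structural events produced by one finger move or rotation yields a total amortized cost of $O(\log k)$, proving the lemma. Plugging this bound into the simulation yields \Cref{thm:overhead log k} directly, since moving a finger $f$ in $T$ to a neighbor $x$ amounts to accessing $f$ from the root of $T'$ in $O(\log k)$ steps via the pseudofinger depth bound (\Cref{lem:depth pseudofinger}), performing one neighbor move in $T'$, and invoking the present lemma to restore the invariants.
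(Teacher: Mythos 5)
Your proposal follows essentially the same route as the paper: invoke \Cref{lemma:change exended hand} to bound the number of structural changes by $O(1)$, use \Cref{lem:active set} to shuttle the affected active-set elements to and from the root at $O(\log k)$ amortized per element, handle tendon endpoint changes via \Cref{lem:update tendon}, and absorb the rebalancing of $T'_{E_0}$ in another $O(\log k)$. The only additions are cosmetic — you name weight-balanced trees as the concrete balancing scheme for $T'_{E_0}$ and call out knuckle reattachment explicitly, both of which the paper leaves implicit — so the argument is correct and matches the paper's proof.
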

\begin{proof}
	According to \Cref{lemma:change exended hand}, we separate our cost
	analysis into two parts. 
	
	For the fist part, let $a\in A$ be the element to be inserted into
	a half tendon $\tau$. By Lemma~\ref{lem:active set}, we move $a$ to root
	with cost $O(\log k)$ and then insert $a$ as a minimum or maximum
	element in $T'_{\tau}$ with cost $O(\log k)$. Deleting $a$ from some half tendon with cost $O(\log k)$ is symmetric.
	
	For the second part, let $a\in A$ be the element to be inserted into
	a half tendon $\tau$. By Lemma~\ref{lem:active set} again, we move $a$
	to root and move back to the appropriate position in $T'_{E_{0}}$
	with cost $O(\log k)$. We also need rebalance $T'_{E_{0}}$ but this
	also takes cost $O(\log k)$.
\end{proof}
\begin{proof}[Proof of \Cref{thm:overhead log k}]
	We describe the simulation algorithm $A_{sim}$ with overhead $O(\log k)$.
	Let $A_{kBST}$ be an arbitrary algorithm for the $k$-finger BST
	$T$. Whenever there is an update in the $k$-finger BST $T$ (i.e.\
	a finger moves to its neighbor or rotates), $A_{sim}(A_{kBST})$ updates
	the BST $T'$ according to Lemma~\ref{lem:update cost} with cost $O(\log k)$
	amortized. $T'$ is maintained so that its structure is as described
	in \textsection\,\ref{sub:structure of extended hand}. By Lemma~\ref{lem:depth pseudofinger},
	we can access any finger $f$ of $T$ from the root of $T'$ with
	cost $O(\log k)$. Therefore, the cost of $A_{sim}(A_{kBST})$ is
	at most $O(\log k)$ times the cost of $A_{kBST}$.
\end{proof}

\subsection{Lazy finger bounds with auxiliary elements }

Recall that $LF(S)$ is defined as the minimum over all BSTs $T$ over $[n]$ of $LF_T(S)$. 
It is convenient to define a slightly stronger lazy finger bound that also allows {\em auxiliary elements}. 
Define $\widehat{LF}(S)$ as the minimum over all binary search trees $T$ that contains the keys $[n]$ (but the size of $T$ can be much larger than $n$).
We define $\widehat{LF}^k(S)$ as the $k$-lazy finger bound when the tree is allowed to have auxiliary elements.  
We argue that the two definitions are equivalent. 
  
\begin{theorem}
For any integer $k$, 
	$LF^k(S)=\Theta(\widehat{LF}^k(S))$ for all $S$.\end{theorem}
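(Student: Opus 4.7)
The easy direction $\widehat{LF}^k(S) \le LF^k(S)$ is immediate: every BST on $[n]$ is, in particular, a BST containing $[n]$ with an empty set of auxiliary elements, so the minimum in the definition of $\widehat{LF}^k$ is taken over a larger family of trees. The task is the reverse inequality $LF^k(S) \le O(\widehat{LF}^k(S))$.

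The plan is to follow the same weight-to-tree paradigm used throughout \textsection\,\ref{sec:access lemma class}. Let $T$ be an (arbitrary) BST containing $[n]$, possibly enriched with auxiliary elements, and let $(\vec f,\vec\ell)$ be an optimal $k$-finger strategy on $T$ achieving $\widehat{LF}^k(S)$. Define a weight function $w$ on $[n]$ by $w(i) = 4^{-d_T(i)}$, where $d_T(i)$ is the depth of $i$ in $T$ (this uses only the positions of the keys of $[n]$ inside $T$). Apply \Cref{thm:tree-from-weight} to this weight function to obtain a random BST $T'$ on $[n]$.

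The key claim is that for any two keys $i,j \in [n]$, $\mathbb{E}[d_{T'}(i,j)] = O(1 + d_T(i,j))$. For this, note that even when $T$ has auxiliary elements, the BST property still bounds the number of nodes at depth $d'$ inside any subtree $T_a$ by $2^{d'}$, so the argument of \Cref{lem:subtree-sum} gives $\sum_{k \in T_a \cap [n]} w(k) \le \sum_{k \in T_a} 4^{-d_T(k)} = O(w(a))$. Taking $a$ to be the LCA of $i$ and $j$ in $T$ and mimicking the computation in the proof of $\WSF(S) = O(\FF(S))$ in \textsection\,\ref{app3}, we obtain
\[
\log \frac{w[i:j]}{\min\{w(i),w(j)\}} \le O(1) + 2\,\max\{d_T(i,a), d_T(j,a)\} \le O(1) + 2\,d_T(i,j),
\]
and by the second bullet of \Cref{thm:tree-from-weight} this is, up to a constant, $\mathbb{E}[d_{T'}(i,j)]$.

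Now run the same finger strategy $(\vec f,\vec\ell)$ in $T'$: this is legal because $\vec\ell \in [n]^k$ and each subsequent finger location equals some previously served access $s_t \in [n]$. Summing the above bound over the $m$ access pairs and using linearity of expectation gives
\[
\mathbb{E}\bigl[LF^k_{T',\vec f,\vec\ell}(S)\bigr] \;=\; \sum_{t=1}^{m}\bigl(1 + \mathbb{E}[d_{T'}(s_t,s_{\sigma(f_t,t)})]\bigr) \;=\; O\!\left(\sum_{t=1}^{m}(1 + d_T(s_t,s_{\sigma(f_t,t)}))\right) \;=\; O\bigl(\widehat{LF}^k_T(S)\bigr).
\]
By the probabilistic method there exists a deterministic BST $T'$ on $[n]$ achieving this bound, so $LF^k(S) \le LF^k_{T'}(S) = O(\widehat{LF}^k(S))$, completing the proof.

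The main obstacle, if any, is verifying that the subtree-sum argument (\Cref{lem:subtree-sum}) is insensitive to the presence of auxiliary elements; this is immediate because the geometric bound $2^{d'}$ on nodes per depth in $T_a$ is a purely structural BST property, and the inequality only gets \emph{tighter} when we restrict the summation to $[n]$. Everything else is a direct reapplication of the weight-to-tree machinery already used to establish the equivalences in \textsection\,\ref{sec:access lemma class}.
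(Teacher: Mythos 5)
Your proposal is correct and takes essentially the same approach as the paper: both restrict the weight function $w(i)=4^{-d_T(i)}$ to the real keys, invoke the Seidel--Aragon randomized construction, bound pairwise distances via the weight-ratio logarithm using the LCA and the subtree-sum argument, reuse the same finger strategy in the new tree, and finish with the probabilistic method. Your write-up is marginally more explicit about the expectation and the final probabilistic-method step, which the paper's proof leaves implicit, but the mathematical content is identical.
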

\begin{proof}
	It is clear that $\widehat{LF}^k(S)\le LF^k(S)$. 
We only need to show the converse.  

	Let $T$ be the binary search tree (with auxiliary elements) such that $LF^k_{T}(S)=\widehat{LF}^k(S)$. Denote by $\vec{f}$ the optimal finger strategy on $T$. 
	Let $[n]\cup X$ be the elements of $T$ where $X$ is the set of
	auxiliary elements in $T$. 
For each $a\in[n]\cup X$, let $d_{T}(a)$
	be the depth of key $a$ in $T$, and let $w(i)=4^{-d_T(i)}$. 
For any
	two elements $i$ and $j$ and set $Y \subseteq [n] \cup X$, let $w_{Y}[i:j]$ be the
	sum of the weight $\sum_{k \in Y\cap [i,j]} w(i)$. 
	For any $i,j\in[n]\cup X$ such that $i\leq j$, we have  
	\[
	\lg\frac{w_{[n]\cup X}[i:j]}{\min(w(i),w(j))}=O(d_{T}(i,j)),
	\]
	where $d_{T}(i,j)$ is the distance from $i$ to $j$ in $T$. So,
	this same bound also holds when considering only keys in $[n]$. That is, for $i,j \in [n]$, we have  \[
	\lg\frac{w_{[n]}[i:j]}{\min(w(i),w(j))}=O(d_{T}(i,j)).
	\]

Given the weight of $\{w(a)\}_{a\in[n]}$, the BST $T'$ (without auxiliary elements) is constructed by invoking Lemma~\ref{thm:tree-from-weight}. 
We bound the term $LF^k_{T'}(S)$ (using strategy $\vec{f}$) by 	
\[
	O( \sum_{t} d_{T'}(s_{\sigma(f_t,t)}, s_t)) = O(\sum_{t=1}^{m-1}\lg\frac{w_{[n]}[s_{t}:s_{\sigma(f_t,t)}]}{\min(w(s_{i}),w(s_{\sigma(f_t,t)}))})=O(\sum_{t=1}^{m-1}d_{T}(s_{\sigma(f_t,t)},s_{t}))=O(LF^k_{T}(S))
	\]
	where $S=(s_{1},\dots,s_{m})$. Therefore, $LF^k(S)\le LF^k_{T'}(S)=O(LF^k_{T}(S))=O(\widehat{LF}^k(S))$.
\end{proof}

\newpage 
\section{Proofs from \Cref{sec:interleaving}}
\label{sec:interleaving_appendix}
\subsection{Auxiliary Elements}

For any $X\subset\mathbb{Q}$ finite subset of rational numbers,
we let $\opt_{X}(S)$ be the optimal cost for executing $S$ on a BST
which contains $[n]\cup X$ as elements. We call $X$ \emph{auxiliary
elements}. Let $\opt_{aux}(S)=\min_{X}\opt_{X}(S)$. The following
theorem shows that auxiliary elements never improve the optimal
cost.
\begin{theorem}
\label{thm:aux equiv}$\opt(S)=\opt_{X}(S)$ for any sequence $S$ and
auxiliary elements $X$.\end{theorem}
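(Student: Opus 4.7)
The theorem splits into two inequalities. The direction $\opt_X(S) \le \opt(S)$ I would prove by augmentation: start from a BST $T$ on $[n]$ whose execution achieves $\opt(S)$, and insert each $c \in X$ into $T$ at its unique BST-leaf position (well-defined since $X \subset \mathbb{Q}$ is disjoint from $[n]$, so every $c$ has a determined insertion slot). The resulting tree $\hat T$ has element set $[n] \cup X$; crucially, the depth of every $[n]$-element and every parent-child relation among $[n]$-elements is preserved by leaf insertions. Consequently, the original execution (whose pointer moves and rotations only involve $[n]$-elements) remains valid on $\hat T$ at the same cost, giving $\opt_X(S) \le \opt(S)$.

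For the harder direction $\opt(S) \le \opt_X(S)$ I plan to use the geometric framework of~\cite{DHIKP09}, in which the cost of a BST execution on $S$ equals, up to constants, the minimum number of points in an arborally satisfied superset of the access points $\{(i, s_i)\}_{i=1}^m$ in the time-key plane, where all $y$-coordinates lie in the tree's key set. Let $P$ be the arborally satisfied set witnessing the optimal execution on $\hat T$; its columns lie in $[n] \cup X$ and $|P| = \Theta(\opt_X(S))$. To produce an arborally satisfied set with $y$-coordinates only in $[n]$, I would project: define $\phi : [n] \cup X \to [n]$ sending each $c \in X$ to a nearest $[n]$-key (e.g., $\lceil c \rceil$, with a symmetric rule for $c$ outside $[1,n]$), and form $P' = \{(t, \phi(y)) : (t, y) \in P\}$ with duplicates identified. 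Then $|P'| \le |P|$, and because $\phi$ is weakly monotone, for any two non-collinear points of $P'$ the image under $\phi$ of their in-$P$ witness lies in the projected rectangle, so the arborally satisfied property is inherited.

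The main obstacle is an edge case in which a witness in $P$ projects to coincide with one of the rectangle's endpoints in $P'$ and thereby ceases to certify the rectangle. I would handle this either by choosing the projection direction ($\lceil c \rceil$ versus $\lfloor c \rfloor$) adaptively per aux column to avoid such coincidences, or by allowing at most one additional $[n]$-column point per bad collision, keeping $|P'| = O(|P|)$. The resulting $P'$ yields a valid arborally satisfied superset with $[n]$-columns, corresponding via~\cite{DHIKP09} to an execution on a BST over $[n]$ of cost $O(|P'|) = O(\opt_X(S))$, and hence $\opt(S) = O(\opt_X(S))$. Combined with the easy direction, this gives the claimed equality, interpreted up to constant factors consistent with the paper's other stated equalities.
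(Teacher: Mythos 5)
Your argument takes a genuinely different route for the hard direction $\opt(S)\le\opt_X(S)$. The paper stays entirely in the tree model and peels off one auxiliary key at a time: if $y'$ is the predecessor or successor of an auxiliary $y$ in $[n]\cup Y$, then (since consecutive keys always lie on a common root-to-leaf path) one can maintain a tree on $[n]\cup Y$ in which $y'$ sits at depth $\min\{d(y'),d(y)\}$ while all other relative depths track those in the tree on $[n]\cup Y\cup\{y\}$; every search path in the smaller tree is then a subset of the corresponding search path in the larger one, so the cost never increases. This yields the exact, constant-free inequality matching the stated equality $\opt(S)=\opt_X(S)$. Your route through the arborally-satisfied-set characterization of~\cite{DHIKP09} is legitimate but only gives $\opt(S)=\Theta(\opt_X(S))$, because (as you note yourself) the geometric correspondence and the patching step each lose constants. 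For the easy direction you and the paper use essentially the same observation; the paper phrases it as starting from an initial tree in which no element of $X$ is an ancestor of an element of $[n]$.

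On the edge case you flag: it is real, but neither patch you propose is airtight. Choosing $\lceil\cdot\rceil$ versus $\lfloor\cdot\rfloor$ per auxiliary column cannot in general avoid all coincidences, since different rectangles can impose conflicting preferences on the same column; and ``one extra point per bad collision'' is not obviously $O(|P|)$, because the number of rectangles that lose their only witness can be superlinear in $|P|$. The clean fix is not to perturb $\phi$ but to pick the preimages carefully. Given $p'=(a,t_1)$ and $q'=(b,t_2)$ in $P'$ with, say, $a<b$ and $t_1<t_2$, take as $p$ the preimage of $p'$ at time $t_1$ whose column is \emph{largest}, and as $q$ the preimage of $q'$ at time $t_2$ whose column is \emph{smallest}. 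By monotonicity of $\phi$ these columns satisfy $c_p<c_q$. A witness $r$ of the rectangle spanned by $p$ and $q$ that lies at time $t_1$ must have column strictly greater than $c_p$, hence by extremality $\phi(r)\ne a$, so $\phi(r)\ne p'$; symmetrically for time $t_2$; and any witness at an interior time trivially projects to an interior time. Thus $P'$ is arborally satisfied with no extra points at all. With that repair your approach goes through, modulo the constant-factor caveat above.
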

\begin{proof}
It is clear that $\opt_{X}(S)\le \opt(S)$, because we can start with
the initial tree such that no element in $X$ is above any element
in $[n]$, and then use an optimal algorithm on a BST with $[n]$
as elements to arrange the tree without touching any element in $X$. 

Next, to show that $\opt(S)\le \opt_{X}(S)$, we will prove that for
any $Y\subset\mathbb{Q}$ and another number $y\in\mathbb{Q}\setminus Y$,
$\opt_{Y}(S)\le \opt_{Y\cup\{y\}}(S)$. It suffices to prove that, given
an optimal BST algorithm $A$ for executing $S$ on a BST $T_{A}$
with $[n]\cup Y\cup\{y\}$ as elements, we can obtain another BST
algorithm $B$ on a BST $T_{B}$ with $[n]\cup Y$ as elements whose
cost for execute $S$ is at most the cost of $A$ which is $\opt_{Y\cup\{y\}}(S)$. 

Let $y'\in[n]\cup Y$ be either the predecessor or successor of $y$.
At each time, the algorithm $B$ arranges the elements such that the
depth of $y'$ is $d_{T_{B}}(y')=\min\{d_{T_{A}}(y'),d_{T_{A}}(y)\}$
and the relative depth of other elements $[n]\cup Y\setminus\{y'\}$
in $T_{B}$ are equivalent as the corresponding elements in $T_{A}$.
Observe that, for any element $i\in[n]\cup Y$, the search path of
$i$ in $T_{B}$ is a subset of the search path of $i$ in $T_{A}$.
So the cost of $B$ at any access is at most of the cost of $A$.
This conclude the proof.
\end{proof}

\subsection{Proof of Interleaving Bound}

Before we can prove \Cref{thm:interleaving}, we need one lemma.
\begin{lemma}
\label{lem:integer as leaves}For any sequence $S\in[n]^{m}$,
there is a BST algorithm $A$ on a BST $T_{A}$ with $[n]\cup X$
as elements where $X$ is some set of auxiliary elements and all elements
in $[n]$ are maintained as leaves in $T_{A}$. Moreover, the
cost of $A$ is at most $3\opt(S)$.\end{lemma}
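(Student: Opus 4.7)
The plan is to prove the lemma by taking any optimal algorithm $A^*$ for $S$ acting on a standard BST $T^*$ over $[n]$, and simulating it step by step in a tree $T_A$ that is a ``leafified'' copy of $T^*$.

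Concretely, I would first fix a finite set $X\subset\mathbb{Q}$ of auxiliary keys (two distinct rationals in each interval $(i,i+1)$ suffice, e.g.\ $i+1/4$ and $i-1/4$) and recursively define, for any BST $T$ on $[n]$, a tree $\widehat{T}$ on $[n]\cup X$ as follows: each internal node $v$ of $T$ with key $i$, left subtree $L$ and right subtree $R$ is replaced by a gadget with two new auxiliary nodes, namely a top auxiliary $a_v$ with key in $(i,\min R)$ whose right child is $\widehat{R}$, and a left auxiliary $a_v'$ with key in $(\max L,i)$ whose left child is $\widehat{L}$ and whose right child is the leaf~$i$. Degenerate cases (one of $L,R$ empty, or $v$ itself a leaf in $T$) use simpler gadgets with a single auxiliary or none at all. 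A routine induction then shows that $\widehat{T}$ is a valid BST on $[n]\cup X$, that every $i\in[n]$ is a leaf of $\widehat{T}$, and that the depth of the leaf $i$ in $\widehat{T}$ is at most $2\,d_T(i)+2$.

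Next, I would let $A$ start from $\widehat{T_0^*}$ and maintain the invariant that its current tree is $\widehat{T_t^*}$ after $t$ accesses, where $T_t^*$ denotes the tree of $A^*$ at the same moment. To serve $s_{t+1}$, the finger of $A$ descends from the root of $\widehat{T_t^*}$ to the leaf $s_{t+1}$; the depth bound above makes this cost at most $2\,d_{T_t^*}(s_{t+1})+3$, which is already at most three times the access cost $d_{T_t^*}(s_{t+1})+1$ of $A^*$. To restore the invariant after each rotation of $A^*$ between a parent $u$ and a child $v$, $A$ performs a constant number of local rotations and pointer moves inside the gadgets of $u$ and $v$, reconfiguring them into the gadgets of the post-rotation tree. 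Each pointer move charged to $A^*$ is therefore simulated by at most a constant number of pointer moves in $T_A$.

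Summing both contributions over the whole sequence yields the desired bound $\mathrm{cost}(A)\le 3\,\opt(S)$. The main obstacle will be the detailed verification of the rotation simulation: one has to enumerate left versus right rotations, handle the degenerate gadgets that arise when one or more subtrees are empty, and be careful that every node involved in the local rearrangement is either on the search path already paid for or reachable via $O(1)$ additional pointer moves. This case analysis is mechanical but somewhat tedious, and balancing the overhead coming from accesses against that coming from rotations is what pins the constant down to exactly~$3$.
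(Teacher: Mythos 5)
Your proposal is correct and takes essentially the same route as the paper. The paper also replaces each node $i$ of the optimal tree $T_B$ by a three-node gadget built from two auxiliary keys $i_L\in(i-1,i)$ and $i_R\in(i,i+1)$, arranged so that the original key $i$ becomes a leaf, and then simulates the optimal algorithm step by step, asserting (without detailed case analysis, just as you do) that each rotation and each access incurs at most a constant factor~$3$ overhead; your gadget is simply the mirror image of the paper's, and both yield the same $O(d_T(i))$ depth bound for the leaf $i$ and the same overall bound of $3\,\opt(S)$.
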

\begin{proof}
Let $B$ be an optimal BST algorithm on a BST $T_{B}$ with $[n]$
as elements for executing $S$. To construct $T_{A}$, for each element
$i$, we replace $i$ in $T_{B}$ with three elements $i_{L},i_{R},i$
where $i-1<i_{L}<i<i_{R}<i+1$ and $i_{L}$ is always a parent of
$i_{R}$ which is always a parent of $i$. Therefore, $d_{T_{A}}(i)=3d_{T_{B}}(3)$
and $i$ is leaf, for each $i\in[n]$. After each access, if $B$
rearranges $T_{B}$, we rearrange $T_{A}$ accordingly, which costs
at most 3 times as much. 
\end{proof}
\begin{proof}
[Proof of \Cref{thm:interleaving}] Let $P=([a_{1},b_{1}],\dots,[a_{k},b_{k}])$
be the partition of $[n]$ from the theorem. We show a BST algorithm
$A$ on a BST $T_{A}$ with some auxiliary elements such that the
cost for accessing $S$ is at most $\sum_{i=1}^{k}\opt(S_{i})+3\opt(\tilde{S})$. 

To describe $T_{A}$, we construct a BST $\tilde{T}_{A}$ that has
$[k]$ as leaves using Lemma~\ref{lem:integer as leaves}. Then,
for each leaf $i\in[k]$, we replace $i$ with the root of a subtree
$T_{A}^{(i)}$ containing elements in $[a_{i},b_{i}]$. That is, the
auxiliary elements in $\tilde{T}_{A}$ are always above elements in
$T_{A}^{(i)}$ for all $i\in[k]$. To describe the algorithm $A$,
if an element $x\in[a_{i},b_{i}]$ is accessed, we first access the
root of $T_{A}^{(i)}$ using an algorithm from Lemma~{lem:integer as leaves},
and then we access $x$ inside $T_{A}^{(i)}$ using the optimal algorithm
for executing $S_{i}$. By Lemma~{lem:integer as leaves}, the
total cost spent in $\tilde{T}_{A}$ is $3\opt(\tilde{S})$, and the
total cost spent in $T_{A}^{(i)}$ is $\opt(S_{i})$ for each $i$.
\end{proof}

\section{Omitted Proofs from Section~\ref{sec:separation}}
\label{sec:app_ex}
\subsection{Proof of Theorem~\ref{thm:tightness}}
All bounds are derived via information-theoretic arguments.
To avoid the need to do probabilistic analysis, we use instead the language of Kolmogorov complexity, which is applicable to a specific input, rather than a distribution on inputs.  
First, we state the following proposition. 

\begin{lemma}[\cite{ChalermsookG0MS15}]
	\label{prop:K less than OPT} For any sequence $S$, let $K(S)$ denote Kolmogorov complexity of $S$. We have 
	$K(S) = O(\opt(S)) $. 
\end{lemma}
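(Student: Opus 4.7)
The plan is to construct, from the execution trace of an optimal offline BST algorithm on $S$, a short binary description that suffices to reconstruct $S$, thereby bounding the Kolmogorov complexity by the length of this description. Since every unit of $\opt(S)$ corresponds to one elementary operation (a pointer move or a rotation), the trace can be encoded using $O(\opt(S))$ bits, and $K(S) = O(\opt(S))$ will follow immediately.

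First, I would normalize $S$ by relabeling the distinct keys actually appearing in $S$ to $[n']$ via the unique order-preserving bijection; this does not change $\opt(S)$ and ensures $n' \le |S|$, hence $\log n' \le \log |S| \le \opt(S)$. Then I would hardcode into a universal simulator a canonical BST construction on $[n']$ (for instance, a balanced BST on the sorted key set). The only input-dependent preamble the simulator needs is the integer $n'$, which costs $O(\log n')$ bits.

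Next, I would encode the trace itself. The optimal algorithm performs a sequence of elementary unit-cost operations: move the pointer to a neighbor, rotate at the pointer, and a distinguished ``access complete'' marker that signals that the pointer's current position is the accessed key $s_j$ (after which the pointer returns to the root). These primitives live in a constant-size alphabet, so each operation takes $O(1)$ bits under any fixed prefix code, giving a total trace length of $O(\opt(S))$ bits. On input (preamble, trace), the simulator reconstructs $S$ deterministically by building the canonical initial tree and simulating the algorithm operation by operation, recording the label at the pointer's location each time the ``access complete'' marker is read.

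The only point to be careful about is making the encoding self-delimiting so that operations and access boundaries can be unambiguously parsed, and that the choice of canonical initial tree is made once and for all, independently of $S$. Once these conventions are fixed, the total program length is $O(\opt(S)) + O(\log n') + O(1) = O(\opt(S))$, which establishes the lemma. No step here is technically hard; the main subtlety is bookkeeping around the self-delimiting encoding and the WLOG renaming step that allows the $O(\log n')$ preamble to be absorbed into $O(\opt(S))$.
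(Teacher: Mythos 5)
Your core approach -- encode the execution trace of an optimal offline BST algorithm and observe that the trace suffices to reconstruct $S$ -- is the standard argument and is exactly how \cite{ChalermsookG0MS15} establishes this lemma; the present paper cites it without re-proving it. The trace-encoding reasoning is sound: the simulator is hardwired with a canonical initial tree, replays the $O(1)$-alphabet sequence of pointer moves and rotations, maintains the current tree, and reads off the pointer's key at each ``access complete'' marker. Each unit of $\opt(S)$ corresponds to $O(1)$ such symbols, and the $m$ access markers are absorbed because $\opt(S) \ge m$. So $K(S) \le O(\opt(S)) + (\text{preamble})$.

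The one place your write-up is off is the relabeling step, and it is worth being precise about why. After replacing $S$ by the order-isomorphic $S' \in [n']^m$, your decoder reconstructs $S'$, not $S$. To recover $S$ you would additionally need the inverse of the order-preserving bijection into $[n]$, which can cost up to $\Theta(n' \log n)$ bits and is \emph{not} absorbed into $\opt(S)$; so as written you have bounded $K(S')$, which can be strictly smaller than $K(S)$. In the setting where this lemma is actually invoked in the paper (permutations $S \in [n]^n$) the relabeling is the identity and $n' = n \le m \le \opt(S)$, so the issue evaporates. The cleaner formulation is to drop the relabeling and hardcode $n$ into the simulator -- that is, interpret $K(S)$ as conditional on the universe $[n]$, which is the convention under which the lemma is meant -- so the only input-dependent preamble is the $O(\log n)$ description of $n$, absorbed whenever $\opt(S) = \Omega(\log n)$, as holds in every application in the paper.
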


We also use the following standard fact in the theory of Kolmogorov complexity. 

\begin{lemma} 
Let $\sset$ be a subset of sequences in $[n]^m$. There exists a sequence $S \in \sset$ with $K(S) = \Omega(\log |\sset|)$.  
\end{lemma}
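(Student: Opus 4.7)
The plan is to use a standard counting (pigeonhole) argument that is folklore in Kolmogorov complexity. Fix a universal Turing machine $U$ relative to which $K(\cdot)$ is defined; every sequence $S$ has a shortest description $p_S \in \{0,1\}^*$ with $U(p_S) = S$ and $|p_S| = K(S)$. The key observation is that there are very few short descriptions: the number of binary strings of length strictly less than $\ell$ is $\sum_{i=0}^{\ell-1} 2^i = 2^\ell - 1 < 2^\ell$.

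Concretely, I would set $\ell = \lfloor \log_2 |\sset| \rfloor - 1$. Then the number of sequences admitting a description of length less than $\ell$ is at most $2^\ell - 1 < 2^\ell \leq |\sset|/2$. Since the descriptions are distinct (each input produces at most one output under $U$), the map $S \mapsto p_S$ is injective, so at most $|\sset|/2$ sequences in $\sset$ can satisfy $K(S) < \ell$. Hence at least one (in fact at least half) of the sequences in $\sset$ must satisfy
\[
K(S) \geq \ell = \lfloor \log_2 |\sset| \rfloor - 1 = \Omega(\log |\sset|),
\]
which is the desired bound.

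There is essentially no obstacle; the only mildly delicate point is making sure the argument is done with respect to a fixed universal machine so that the implicit constant hidden in the $\Omega(\cdot)$ is absolute. Note also that we do not need any assumption on the structure of $\sset$ beyond its cardinality, so the statement applies uniformly to any collection of sequences we wish to invoke it on (e.g.\ families of $k$-monotone or $k$-avoiding permutations used to derive the lower bounds of Theorem~\ref{thm:tightness} via Lemma~\ref{prop:K less than OPT}).
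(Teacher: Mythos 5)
Your proof is correct and is exactly the standard counting argument that this lemma is based on; the paper itself states the lemma without proof, calling it a ``standard fact in the theory of Kolmogorov complexity,'' and your pigeonhole argument (at most $2^\ell - 1$ strings admit a description shorter than $\ell$, so not all of $\sset$ can have complexity below roughly $\log_2 |\sset|$) is precisely that folklore proof.
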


We are now ready to derive all bounds.

\begin{theorem} [$k$-lazy fingers]  
For each $k$ (possibly a function that depends on $n$), there are infinitely many sequences $X$, for which $\LF^k(S) \cdot \log k\le \opt(S)$.
\end{theorem}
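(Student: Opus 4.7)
The plan is to exhibit, for every $k$ (possibly depending on $n$), a sequence $S_1$ on which the upper bound $\opt \leq O(\log k) \cdot \LF^k$ from \Cref{thm:mainlf} is matched up to constants. The construction will be extremely simple: take the key space to be exactly $[k]$, so that the $k$ fingers suffice to ``park'' one on each key, trivializing $\LF^k$; then observe that a suitably incompressible sequence over $[k]$ forces $\opt$ to be $\Omega(\log k)$ per access via Kolmogorov complexity.

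More concretely, I would set $n = k$ and consider arbitrary $S_1 \in [k]^m$. For the $\LF^k$ upper bound, fix any BST $T$ on $[k]$ and place finger $i$ initially on key $i$; the finger strategy $f_t := s_t$ always uses the finger already sitting on the requested key, so no finger ever moves. Only the additive $+1$ per access contributes, yielding $\LF^k(S_1) \leq m$, independent of $T$ and of $S_1$.

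For the matching lower bound on $\opt$, I would apply the Kolmogorov counting argument already used in the paper. The family $[k]^m$ has $k^m$ members, so by the standard fact about Kolmogorov complexity quoted in the excerpt, at least one $S_1 \in [k]^m$ satisfies $K(S_1) \geq m \log k - O(1)$. \Cref{prop:K less than OPT} then gives $\opt(S_1) = \Omega(K(S_1)) = \Omega(m \log k)$, whence $\opt(S_1) / \LF^k(S_1) = \Omega(\log k)$. Letting $k$ (or $m$) grow produces infinitely many witnesses.

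Because both halves of the argument are essentially one-liners, I do not anticipate a serious obstacle. The only subtlety is conceptual: the tightness of \Cref{thm:mainlf} is witnessed in the extreme regime $n = k$, where every key has its own dedicated finger and lazy-finger information is maximized; if one instead prefers $n \gg k$, the same analysis applies after restricting accesses to $k$ well-separated representative elements, one per block of a partition of $[n]$ into $k$ intervals, and invoking the same Kolmogorov counting argument over the $k^m$ resulting sequences.
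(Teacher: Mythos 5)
Your proposal is correct and takes essentially the same approach as the paper: restrict accesses to a set of $k$ keys so that one finger can be ``parked'' on each (making $\LF^k$ trivially $\Theta(m)$), then invoke the Kolmogorov-complexity counting argument together with Lemma~\ref{prop:K less than OPT} to force $\opt = \Omega(m\log k)$. The paper phrases this with an arbitrary $k$-element subset $J \subseteq [n]$ and a balanced reference tree rather than setting $n = k$, but you explicitly note that variant at the end, so the two arguments coincide.
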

\begin{proof}  
Let $J \subseteq [n]: |J| = k$.  
We choose a sequence $S\in J^m$ that has Kolmogorov complexity at least $K(S) = \Omega(m \log k)$, so we have $\opt(S) = \Omega(m \log k)$.  
Now we argue that the lazy finger cost is low. 
Choose the reference tree as an arbitrary balanced tree $T$.  
Notice that the lazy finger bound is $\LF^k_T(S) =O(m)$: Initial fingers can be chosen to be the location of keys in $J$.   
Afterwards, each key in $J$ is served by its private finger.  
\end{proof}

\begin{theorem}[Monotone]
For each $k$, there are infinitely many sequences $S$ for which $m(S) = k$,   
and $K(S) \geq |S| \cdot \log k$.
\end{theorem}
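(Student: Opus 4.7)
The plan is to exhibit, for each $k \ge 3$ and infinitely many $n$, a large family $\mathcal{F}$ of length-$n$ sequences, each satisfying $m(S) = k$, and then apply a standard incompressibility (counting) argument: since all $|\mathcal{F}|$ members share a compact description scheme, some $S \in \mathcal{F}$ must have $K(S) \ge \log_2 |\mathcal{F}| - O(\log n)$.

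\emph{Construction.} Take $n$ divisible by $k-1$ and partition $[n]$ into consecutive intervals $I_1 < I_2 < \dots < I_{k-1}$, each of size $n/(k-1)$. Let $\mathcal{F}$ consist of all permutations $S$ of $[n]$ satisfying: (a) for every $j$, the elements of $I_j$ appear in $S$ in increasing order of value, and (b) the first $k-1$ positions of $S$ hold $\min(I_{k-1}), \min(I_{k-2}), \dots, \min(I_1)$, in that order.

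\emph{Verifying $m(S)=k$.} By (a), $S$ is an interleaving of $k-1$ increasing subsequences, so by Dilworth every decreasing subsequence in $S$ has length at most $k-1$; in particular $S$ avoids $(k,k-1,\dots,1)$, giving $m(S)\le k$. For the matching lower bound, condition (b) plants a decreasing pattern of length $k-1$ in the first $k-1$ positions, while condition (a) forces an increasing pattern of length $|I_1|=n/(k-1)\ge k-1$ (for $n\ge(k-1)^2$) on the elements of $I_1$. Thus neither $(1,\dots,k-1)$ nor $(k-1,\dots,1)$ is avoided, so $m(S)\ne k-1$, and hence $m(S)=k$.

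\emph{Counting and conclusion.} A sequence in $\mathcal{F}$ is fully determined by its type word in $[k-1]^{n-(k-1)}$, recording for each post-prefix position the interval label of the accessed element (within an interval the order is pinned by (a)). This word must use label $j$ exactly $n/(k-1)-1$ times, so $|\mathcal{F}|=\binom{n-(k-1)}{n/(k-1)-1,\dots,n/(k-1)-1}$; Stirling gives $\log_2|\mathcal{F}| \ge n\log_2(k-1) - O(k\log n) = \Omega(n\log k)$ once $n$ is sufficiently large compared to $k$. A uniform description of $\mathcal{F}$ exists from $(n,k)$, so the incompressibility lemma yields some $S\in\mathcal{F}$ with $K(S)\ge \log_2|\mathcal{F}| - O(\log n) = \Omega(|S|\log k)$, as required. (Combined with $K(S)=O(\opt(S))$ from the referenced lemma, this also recovers the $\opt$-lower bound in the statement of Theorem~\ref{thm:tightness}.)

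The main subtlety is getting $m(S)=k$ \emph{exactly}; without the prefix condition (b) one could pick interval labels in the order $1,1,\dots,1,2,2,\dots$ and obtain a purely increasing sequence with $m(S)=2$. Enforcing (b) costs only a factor $n^{-O(k)}$ in the count, harmlessly absorbed in the asymptotic bound.
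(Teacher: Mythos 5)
Your proof is correct, and it takes a genuinely different route from the paper. The paper's proof is a one-liner that cites Regev's 1981 asymptotic enumeration result: the number of permutations of $[n]$ avoiding $(1,\dots,k)$ is $k^{\Omega(n)}$, hence one of them has $K(S)=\Omega(n\log k)$. Your proof replaces the citation with an explicit family: interleave $k-1$ increasing interval-subsequences, pin the first $k-1$ positions to a planted decreasing pattern, and count by multinomial coefficients. This buys two things. First, it is self-contained and elementary, needing only Dilworth (or a direct argument that a union of $k-1$ increasing sequences has no decreasing subsequence of length $k$) and Stirling, rather than an external enumeration theorem. Second, and more importantly, it actually nails $m(S)=k$ \emph{exactly}: the planted prefix forces a decreasing pattern of length $k-1$ and the block $I_1$ forces an increasing pattern of length $k-1$, so $m(S)\ge k$, while Dilworth gives $m(S)\le k$. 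The paper's cited argument only yields $m(S)\le k$ as stated; upgrading it to equality requires an extra (omitted) step, such as noting that permutations avoiding $(1,\dots,k-1)$ or $(k-1,\dots,1)$ are exponentially rarer. Your explicit prefix trick sidesteps that cleanly, at a cost of only a $\mathrm{poly}(n)$ factor in the count, which you correctly observe is absorbed. Your restriction to $k\ge 3$ is also the right call: for $k=2$ there are only two length-$n$ sequences with $m(S)=2$, so the stated bound cannot hold, a detail the paper glosses over.
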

 
\begin{proof} 
This follows from the result of Regev~\cite{regev1981} (who proved much more general results) which implies that for sufficiently large $n$, the number of permutations $S \in [n]^n$ that avoid $(1,\ldots, k)$ is at least $k^{\Omega(n)}$.
Therefore, there exists a permutation $S$ with $K(S) = \Omega(n \log k)$.  
\end{proof}  

\paragraph*{Pattern avoidance.}
We argue that there is a permutation sequence $S$ of size $n$ that avoids a pattern of size $k$ (an indication of easiness) but nevertheless has high Kolmogorov complexity $K(S)=\Omega(n\sqrt{k})$.

Let $\pi$ be a permutation of size $k$. Let $S_{\pi}(n)$ be the number
of permutations of size $n$ avoiding $\pi$, and let $L(\pi)=\lim_{n\rightarrow\infty}S_{\pi}(n)^{1/n}$.
Let $ex(n,\pi)$ be the maximum number of non-zeros of an $n\times n$
matrix that avoids $\pi$, and let $c(\pi)=ex(n,\pi)/n$. Cibulka
\cite{Cibulka09} shows that $c(\pi)=O(L(\pi)^{4.5})$ and $L(\pi)=c(\pi)^{2}$; for a simpler proof of the second result see~\cite{Fox13}.
\begin{theorem}
	[\cite{Cibulka09,Fox13}]\label{thm:mass vs number}$c(\pi)=L(\pi)^{\Theta(1)}$.
\end{theorem}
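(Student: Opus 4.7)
The statement is an immediate consequence of the two bounds cited in the paragraph immediately preceding it, so the plan is essentially one of assembly rather than new derivation. By Cibulka's result $c(\pi) = O(L(\pi)^{4.5})$ we already have an upper bound on $c(\pi)$ in terms of $L(\pi)$. From the identity $L(\pi) = c(\pi)^2$ (due to Cibulka, with a simpler proof by Fox) we obtain $c(\pi) = L(\pi)^{1/2}$, giving the matching lower bound. Combining these two inequalities yields the sandwich
\[
L(\pi)^{1/2} \;\le\; c(\pi) \;\le\; O(L(\pi)^{4.5}),
\]
which is exactly the claim $c(\pi) = L(\pi)^{\Theta(1)}$. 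So my plan is simply to record the two cited inequalities and remark that together they constitute a two-sided polynomial relationship.

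Since the theorem is attributed to \cite{Cibulka09, Fox13} and is being invoked as a black box by the surrounding discussion, the substantive content lies in those two external proofs and not in their combination. If one wished to reprove the theorem from scratch, the harder half is Cibulka's upper bound $c(\pi) = O(L(\pi)^{4.5})$: here one must convert an extremal $\pi$-avoiding $0/1$ matrix with $\Theta(c(\pi)\cdot n)$ ones into sufficiently many distinct $\pi$-avoiding permutations of length $\Theta(n)$, thereby lifting a density bound on matrices into a counting bound on permutations. The reverse direction $L(\pi) \le c(\pi)^2$ admits the shorter encoding-style argument of Fox, which bounds the number of $\pi$-avoiding permutations of length $n$ directly by a function of $c(\pi)$ through a block decomposition of the permutation matrix. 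The main obstacle in any self-contained reproof would be the exponent $4.5$; the $L(\pi) \le c(\pi)^2$ direction is the routine part.
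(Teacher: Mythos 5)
Your proposal is correct and follows exactly the route the paper intends: the theorem carries no proof of its own and is simply the observation that the two cited Cibulka bounds give a two-sided polynomial relationship. One small point worth flagging: the phrase ``the identity $L(\pi) = c(\pi)^2$'' should not be read as an exact equality. If it were exact, then $c(\pi) = L(\pi)^{1/2}$ would hold on the nose and the $O(L(\pi)^{4.5})$ bound would be redundant. Cibulka's actual result is the one-sided inequality $L(\pi) = O(c(\pi)^2)$, which is precisely why you need both inequalities to get the sandwich $\Omega(L(\pi)^{1/2}) \le c(\pi) \le O(L(\pi)^{4.5})$. Your derivation uses it only in the lower-bound direction, so the logic is sound, but the wording ``identity'' (inherited from the paper) is a slight misnomer.
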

In \cite{Fox13}, Fox also shows the surprising
result that there exists a permutation $\pi$ of size $k$ such that $c(\pi)=2^{\Omega(k^{1/4})}$%
It is also shown in~\cite{Fox13} that for any $\pi$
	of size $k$, $c(\pi)=2^{O(k)}$, improving the celebrated result
	of $2^{O(k\log k)}$ by Marcus and Tardos~\cite{MarcusT04}.
Geneson and Tian~\cite{GenesonT15a} improve the lower bound as follows.
\begin{theorem} [\cite{GenesonT15a}]
	\label{thm:lower bound mass}There exists $\pi$ of size $k$ where
	$c(\pi)=2^{\Omega(k^{1/2})}$. 
\end{theorem}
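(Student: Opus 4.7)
My plan is to use the probabilistic method on random permutations combined with a structured dense matrix construction. Concretely, I would exhibit an $n\times n$ 0-1 matrix $M$ with $d\cdot n$ ones, where $d=2^{\Omega(\sqrt k)}$, such that $M$ realises strictly fewer than $k!$ distinct permutation patterns of size $k$; by a counting/union-bound argument, a random $\pi\in S_k$ is then avoided by $M$ with positive probability, certifying $c(\pi)\ge d$.

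First I would set up the candidate matrix. A natural choice is a hierarchical block construction: partition $[n]\times[n]$ into blocks of side $b$, place a structured gadget (a small permutation matrix, or recursively a scaled copy of the same construction) inside each nonempty block, and choose the occupancy pattern of blocks so that every row meets $d$ ones. The parameters $b$ and the per-block gadget are to be tuned at the end; Marcus--Tardos style block reasoning ensures the density adds up correctly and that pattern occurrences decompose cleanly between a ``macro'' (inter-block) and ``micro'' (intra-block) level. One could alternatively take $M$ to be a tensor/composition product of two smaller avoiding matrices, which is useful because if $M_1$ avoids $\pi_1$ and $M_2$ avoids $\pi_2$, then $M_1\otimes M_2$ avoids a natural composition of $\pi_1,\pi_2$, allowing bootstrapping from a weaker base bound.

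Second I would bound the number $N(M,k)$ of distinct permutation patterns of size $k$ that can be read off $M$ by choosing $k$ rows and $k$ columns intersecting in ones. The trivial bound $N(M,k)\le\binom{dn}{k}$ is too weak. Using the block decomposition, any such $k$-subset partitions into its projections onto the visited blocks; if it touches $s$ blocks, there is a macro-pattern on $s$ positions and a distribution of the remaining $k$ indices into the blocks, giving
\[
\log N(M,k) \;\le\; O\bigl(s\log(n/b) + k\log d_{\mathrm{block}} + k\log(k/s)\bigr).
\]
Comparing this with $\log(k!)=\Theta(k\log k)$, and choosing $s\approx\sqrt k$ to balance the first and third terms, forces $d=2^{\Omega(\sqrt k)}$ when the gadgets and $b$ are set accordingly. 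If $N(M,k)<k!$, then $\Pr_{\pi}[\pi\text{ occurs in }M]<1$, so some $\pi$ of size $k$ is avoided by $M$, which is the desired permutation.

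The hard part will be step two: getting the $\sqrt k$ rather than Fox's $k^{1/4}$ in the exponent. The improvement hinges on a two-scale entropy accounting that separates inter-block choices from intra-block choices and exploits the fact that a random permutation of size $k$ restricted to any $s$-subset of positions is itself uniformly random; this symmetry lets one charge the counting almost entirely to the macro level, whose description length is $O(\sqrt k\log n)$ rather than $O(k)$. Making this argument rigorous -- in particular, showing that few macro-shapes support many distinct micro-completions in $M$, and tuning $b$ so that the macro-counting and the micro-counting meet at exponent $\sqrt k$ -- is where the main technical work lies. Once $c(\pi)=2^{\Omega(\sqrt k)}$ is established, invoking Theorem~\ref{thm:mass vs number} would also give the corresponding growth rate $L(\pi)=2^{\Omega(\sqrt k)}$, sitting inside the $\mathsf{OPT}$-lower-bound chain of Theorem~\ref{thm:tightness}.
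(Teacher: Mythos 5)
This statement is cited in the paper from Geneson and Tian \cite{GenesonT15a}; the paper itself gives no proof, so there is nothing internal to compare your attempt against. Judged on its own terms, what you have written is a plan, not a proof, and the plan has a genuine gap precisely at the point that carries all the content of the theorem. You correctly identify the standard template in this area (construct a dense 0--1 matrix $M$ with $dn$ ones, bound the number $N(M,k)$ of distinct size-$k$ permutation patterns readable from $M$ by strictly less than $k!$, conclude by a union bound that some $\pi$ is avoided). That template already appears in Fox's $2^{\Omega(k^{1/4})}$ bound; the Geneson--Tian improvement to $2^{\Omega(k^{1/2})}$ is exactly the step you defer as ``where the main technical work lies.'' Deferring it means the theorem is not proved.

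Moreover, the intermediate inequality you assert,
\[
\log N(M,k) \le O\bigl(s\log(n/b) + k\log d_{\mathrm{block}} + k\log(k/s)\bigr),
\]
is not derived, and as written it is on shaky ground. The parameter $s$ (number of blocks touched by a chosen $k$-subset) is not a quantity you get to ``choose $\approx\sqrt k$''; it varies over all values from $1$ to $k$ depending on the subset, and the count must be summed or maximized over $s$. The term $k\log(k/s)$ also looks like an overcount for assigning $k$ points to $s$ blocks under the order constraints (the natural bound on compositions is more like $s\log(ek/s)$, which changes the balance point entirely). Without a concrete matrix construction for which the two-scale counting can actually be verified, the claimed balancing ``forces $d = 2^{\Omega(\sqrt k)}$'' is wishful: you have tuned the symbols to produce the target exponent rather than shown any $M$ achieves it. If you want to pursue this route, the honest path is to reproduce the Geneson--Tian construction and counting in full; the high-level outline alone does not establish the result, and does not reflect any argument present in this paper.
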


Combining \Cref{thm:mass vs number}
	and \Cref{thm:lower bound mass}, we have that there exists $\pi$
	such that $L(\pi)=2^{\Omega(k^{1/2})}$. By the definition of $L(\pi)$,
	we conclude that there is a permutation $\pi$ of size $k$ and
	infinitely many integers $n$ such that 
$S_{\pi}(n) = \Omega(2^{n\sqrt{k}})$. Let $P$ be such a set
	of permutations. We know that there must be a permutation $S\in P$
	such that the Kolmogorov complexity of $S$ is $K(S)=\Omega(n\sqrt{k})$.

\subsection{Proof of Theorem~\ref{thm:hierarchy}}
Let $n$ be an integer multiple of $k$ and $\ell = n/k$.  
Consider the tilted $k$-by-$\ell$ grid $S_k$. 
The access sequence is $1$, $\ell + 1$, \ldots, $\ell \cdot(k-1) + 1$, $2$, $\ell+2$, \ldots, $(k-1)\ell +2$,\ldots, $(k-1)\ell +\ell$.
It is clear that this sequence avoids $(k+1, k, \ldots, 1)$, so the third part of the theorem follows easily.

We now show that $\LF^k(S_k) = O(n)$.  We write $S_k = (s_1,\dots,s_n)$.
The idea is to partition the keys into blocks, and use each finger to serve only the keys inside blocks.  
In particular, for each $i=1,\ldots, k$, denote by $\bset_i \subseteq [n]$ the set of keys in $[\ell(i-1) +1, \ell \cdot i]$. 
We create a reference tree $T$ and argue that $\LF^k_T(S_k) = O(n)$.
Let $T_0$ be a BST of height $O(\log k)$ and with $k$ leaves. 
Each leaf of $T_0$ corresponds to the keys $\set{\ell \cdot (i-1) +\frac 1 2}_{i=1}^k$. 
The non-leafs of $T_0$ are assigned arbitrary fractional keys that are consistent with the BST properties.
For each $i$, path $P_i$ is defined as a BST with key $\ell \cdot (i-1) +1$ at the root, where for each $j = 0, \ldots, (\ell-1)$, the key $\ell (i-1)+j$ has $\ell (i-1)+ (j+1)$ as its only (right) child.  
The final tree $T$ is obtained by hanging each path $P_i$ as a left subtree of a leaf $\ell \cdot (i-1) +\frac 1 2$.  
The $k$-server strategy is simple: The $i^{th}$ finger only takes care of the elements in block $\bset_i$. 
The cost for the first access in block $\bset_i$ is $O(\log k)$, and afterwards, the cost is only $O(1)$ per access. 
So the total access cost is $O(\frac{n}{k} \log k + n) = O(n)$. 

To see that $\opt(S_k) = O(n)$, let $S_k^{(i)}$ be obtained from $S_k$ by restriction to $\bset_i$ for each $i=1,\dots,k$. Let $\tilde{S}_k=(\tilde{s}_1,\dots,\tilde{s}_n ) \in [k]^n$ where $\tilde{s}_j = i$ iff $s_j \in \bset_i$. 
By \Cref{thm:interleaving}, $\opt(S_k) \le \sum_{i=1}^k \opt(S_k^{(i)}) + O(\opt(\tilde{S}_k))$. 
But, for each $i$, $\opt(S_k^{(i)}) = O(\ell)$ because $S_k^{(i)}$ is just a sequential access of size $\ell$.  $\opt(\tilde{S}_k) = O(k\cdot \ell) = O(n)$ because $\tilde{S}_k$ a sequential access of size $k$ repeated $\ell$ times. Therefore, $\opt(S_k) = O(n)$.

The rest of this section is devoted to proving the following: 
 
\begin{theorem} 
$\LF^{k-1}(S_k) = \Omega(\frac{n}{k} \log (n/k))$
\end{theorem}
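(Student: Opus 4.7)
The plan is to combine a per-phase pigeonhole argument with a depth-counting lemma and convexity, then absorb the remaining parameter regime using a trivial lower bound. Fix an arbitrary BST $T$ on $[n]$ (allowing auxiliary elements, since $\LF^{k-1}=\Theta(\widehat{LF}^{k-1})$) and any $(k-1)$-finger strategy for $S_k$. For each block $\bset_i$, let $m_i$ denote the LCA of $\bset_i$ in $T$; since $\bset_i$ is a contiguous key range, $m_i\in\bset_i$ and $T_{m_i}\supseteq \bset_i$. The basic ingredient is the \emph{depth-sum lemma}: for any subset $E\subseteq\bset_i$ of size $s$, $\sum_{x\in E}d_T(m_i,x)=\Omega(s\log s)$, because at most $2^d$ nodes of $T_{m_i}$ can sit at depth $\le d$ from $m_i$.

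The main step is a per-phase pigeonhole. In each phase $p$, the $k$ accesses $s_{k(p-1)+1},\ldots,s_{kp}$ fall in distinct blocks $\bset_1,\ldots,\bset_k$ in that temporal order. With only $k-1$ fingers available, some finger must serve at least two of them, and the earlier-served access sits in some $\bset_{i_p}$, the later one in $\bset_{j_p}$ with $i_p<j_p$; this produces a \emph{right-cross}. The cost of this cross is $d_T(s_{k(p-1)+i_p},s_{k(p-1)+j_p})\ge d_T(m_{j_p},a_{j_p}+p-1)$: let $v_p$ be the LCA of the source and destination in $T$; then $v_p$ is a common ancestor of $\bset_{i_p}$ and $\bset_{j_p}$, the path from source to destination passes through $v_p$, and (after a short case analysis based on which side of $m_{j_p}$ the destination lies on, accounting for the possibility that $T_{m_{j_p}}$ absorbs some of $\bset_{i_p}$) $m_{j_p}$ also lies on the descending portion from $v_p$ to the destination, so the distance is at least $d_T(m_{j_p},a_{j_p}+p-1)$.

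Aggregating over all $\ell=n/k$ phases, group by destination block: set $H_j=\{p:j_p=j\}$, so $\sum_j|H_j|\ge\ell$. Summing per-phase bounds and applying the depth-sum lemma to the subset $\{a_j+p-1:p\in H_j\}\subseteq\bset_j$,
\[
\LF^{k-1}_T(S_k)\;\ge\;\sum_j\sum_{p\in H_j}d_T(m_j,a_j+p-1)\;\ge\;\sum_j\Omega\bigl(|H_j|\log|H_j|\bigr).
\]
By convexity of $x\log x$ and Jensen's inequality, $\sum_j|H_j|\log|H_j|\ge \ell\log(\ell/k)$.

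To cover all parameter regimes, combine this Jensen bound with the trivial bound $\LF^{k-1}_T(S_k)\ge n$ (each access contributes at least $1$ in the definition of $\LF^k$). When $k$ is not too large (say $k\le n^{1/3}$), $\log(\ell/k)=\Theta(\log\ell)=\Theta(\log(n/k))$, so Jensen yields $\Omega(\tfrac{n}{k}\log(n/k))$ directly; for larger $k$, $n/k\le k2^k$ forces $k\ge\log(n/k)$, hence the trivial $n$ already dominates $\tfrac{n}{k}\log(n/k)$. Since $T$ was arbitrary, minimizing over $T$ yields $\LF^{k-1}(S_k)=\Omega(\tfrac{n}{k}\log(n/k))$. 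The main obstacle is the per-phase right-cross lower bound: one must argue that $m_{j_p}$ really does sit on the finger's traversal even in the awkward case where $T_{m_{j_p}}$ happens to contain elements of $\bset_{i_p}$, which requires using the BST property to place source and destination on opposite sides of $m_{j_p}$ (for at least a constant fraction of the destination elements), a technicality which contributes only an absorbable constant factor to the final bound.
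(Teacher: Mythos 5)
Your overall plan mirrors the paper's: a per-phase pigeonhole to extract a right-cross, a depth-counting lemma of the form $\sum_{x\in E} d_T(v,x)=\Omega(|E|\log|E|)$, and a Jensen step. But the crucial middle step---the per-cross lower bound---has a genuine gap, and it is precisely the step you flag as ``the main obstacle.''

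The claim that the cross cost $d_T(s,d)$ (source $s\in\bset_{i_p}$, destination $d\in\bset_{j_p}$) is at least $d_T(m_{j_p},d)$, where $m_{j_p}$ is the LCA of the destination block, is false in general. The justification that ``$v_p$ is a common ancestor of $\bset_{i_p}$ and $\bset_{j_p}$'' is incorrect: $v_p=\mathrm{LCA}(s,d)$ is a common ancestor of $s$ and $d$, not of the two entire blocks. The case you call awkward---$T_{m_{j_p}}$ absorbing part of $\bset_{i_p}$---is exactly where the claim fails: when $s\in T_{m_{j_p}}$ and $d<m_{j_p}$, both $s$ and $d$ lie in the \emph{left subtree} of $m_{j_p}$, so $v_p$ is a strict descendant of $m_{j_p}$ and $m_{j_p}$ is \emph{not} on the path from $s$ to $d$. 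In that regime $d_T(s,d)$ can be much smaller than $d_T(m_{j_p},d)$. The fallback that this occurs ``for at most a constant fraction of the destination elements'' is not justified: $m_{j_p}$ can be the \emph{rightmost} key of $\bset_{j_p}$ (e.g.\ if $T$ restricted to $\bset_{j_p}$ is a left path), in which case \emph{every} destination in the block is $<m_{j_p}$, so every cross with source inside $T_{m_{j_p}}$ lands in the bad case. Since the tree and the finger strategy are both adversarial, there is no universal constant-fraction guarantee, and your inequality $\LF^{k-1}_T(S_k)\ge\sum_j\sum_{p\in H_j}d_T(m_j,a_j+p-1)$ does not follow from the case analysis as described.

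The paper's proof handles exactly this obstruction with a different choice of pivot and a pairing trick, which are the key technical ideas your proposal is missing. It groups right-crosses by the triple $(j,i,i')$ (finger, source block, destination block), orders the crossing times $t_1,\dots,t_C$ by the \emph{source} keys $a_{t_1}<\dots<a_{t_C}$, and takes $R$ to be the LCA of the interval $[a_{t_{\lfloor C/2\rfloor+1}},a_{t_C}]$ of \emph{source} keys (not the destination block). Then it pairs $t_r$ with $t_{C-r}$ and shows the two crosses together cost at least $\min\{d_T(R,s_{t_r}),d_T(R,s_{t_{C-r}})\}$: if $s_{t_r}\in T_R$, the ancestor $u_r=\mathrm{LCA}(a_{t_r},s_{t_r})$ must sit above $R$ (because $a_{t_r}$ lies strictly left of the defining interval and $s_{t_r}$ strictly right), so the cross descends through $R$; otherwise $s_{t_r}\notin T_R$, in which case $s_{t_{C-r}}$ is also outside $T_R$ while $a_{t_{C-r}}$ is inside, so the \emph{other} cross must exit $T_R$ through $R$. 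This escape-or-enter pairing resolves the very case your argument cannot. Your Jensen step and trivial-bound regime split for large $k$ are fine (and your grouping by destination block would even give a slightly milder requirement on $n$ vs.\ $k$ than the paper's $n=\Omega(k^4)$ threshold, if the per-cross bound held), but as written the proof is not complete.
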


A {\em finger configuration} $\vec{f} = (f(1),\ldots, f(k-1)) \in [n]^{k-1}$ specifies to which keys the fingers are currently pointing. 
Let $T$ be a reference tree. 
Any finger strategy can be described by a sequence $\vec{f}_1,\ldots, \vec{f}_n$, where $\vec{f}_t$ is a configuration after element $s_t$ is accessed.  
Just like in a general $k$-server problem, we may assume w.l.o.g.\ the following: 
\begin{fact}
\label{prop:k-server}  
For each time $t$, the configurations $\vec{f}_t$ and $\vec{f}_{t+1}$ differ at exactly one position. In other words, we only move the finger that is used to access $s_{t+1}$.  
\end{fact}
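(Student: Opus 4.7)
The statement is the standard ``lazy'' normal form for $k$-server problems specialized to lazy fingers in a BST. My plan is to give a simple exchange argument: starting from any optimal finger strategy $\vec{f}_1, \ldots, \vec{f}_n$ (together with the initial configuration $\vec{f}_0 = \vec{\ell}$) of cost $C$, I will produce another strategy of cost at most $C$ in which only the serving finger is moved at each step.

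First, recall that in the definition of $\LF^k_{T,\vec{f},\vec{\ell}}(S)$ in Section~\ref{sec:dict}, the cost paid at time $t$ is $1 + d_T(s_t, s_{\sigma(f_t,t)})$, where $f_t \in [k]$ is the finger chosen to serve $s_t$ and $\sigma(f_t, t)$ is its position \emph{just before} the request is served. In particular, the cost depends only on the location of the serving finger immediately before time $t$ and on $s_t$. Moving any other finger in between accesses $t$ and $t+1$ contributes nothing to any cost paid up to time $t$, yet such a move would still be counted if it changed the configuration (this is implicit in the $k$-server interpretation via the tree metric; alternatively, one can charge it as part of the finger's eventual move to its next serving position).

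Now I would proceed by induction on $t$. Suppose we have modified the strategy so that for all $t' < t$, configurations $\vec{f}_{t'-1}$ and $\vec{f}_{t'}$ differ only in coordinate $f_{t'}$ (the identity of the serving finger). Consider step $t$. Let $i = f_t$ be the finger serving $s_t$, so that in the current strategy the $i$-th coordinate of $\vec{f}_t$ equals $s_t$. For every other coordinate $j \neq i$ in which $\vec{f}_{t-1}$ and $\vec{f}_t$ differ, I defer the move of finger $j$: I leave finger $j$ where it was in $\vec{f}_{t-1}$, and I replay the deferred move only at the next time $t^* > t$ when finger $j$ is chosen as the serving finger (i.e., $f_{t^*} = j$). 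By the triangle inequality in the tree metric on $T$, combining finger $j$'s deferred travel with its eventual move to $s_{t^*}$ costs no more than the sum of the two separate moves, so the total cost does not increase. If finger $j$ is never used again, its move can simply be dropped at no cost.

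The main (and only mildly subtle) point is making sure the deferral is well-defined: since all operations happen in the fixed tree $T$, finger $j$'s final destination depends only on the requests that finger~$j$ serves, not on when intermediate moves take place. Hence the modification preserves feasibility (every request is still served by some finger at the required key) and cost is monotone non-increasing by the triangle inequality. Iterating this modification over all $t$ yields a strategy in which $\vec{f}_t$ and $\vec{f}_{t+1}$ differ only in coordinate $f_{t+1}$, as claimed.
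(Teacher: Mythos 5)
Your proposal is correct: the paper states this Fact without proof, simply invoking the standard ``lazy adversary'' normalization from $k$-server theory, and your deferral/exchange argument (postpone every non-serving finger's move to the next time it serves, then use the triangle inequality in the tree metric $d_T$) is precisely the folklore justification being appealed to. The only cosmetic remark is that the argument actually yields ``differ in at most one position'' (the configurations coincide when the serving finger already sits on $s_{t+1}$), which is all that is needed and matches how the Fact is used in the lower bound for $\LF^{k-1}(S_k)$.
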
 

We see the input sequence $S_k$ as having $\ell$ phases: The first phase contains the subsequence $1, \ell+1,\ldots, \ell(k-1)+1$, and so on. Each phase is a subsequence of length $k$.  

\begin{lemma} 
For each phase $p \in \{1,\ldots, \ell \}$, there is time $t \in [(p-1)\ell+1, p\cdot \ell ]$ such that $s_t$ is accessed by finger $j$ such that $f_{t-1}(j) $ and $f_t(j)$ are in different blocks, and $f_{t-1}(j) < f_t(j)$.  
That is, this finger moves to the block $\bset_{b}$, $b = t \mod k$, from some block $\bset_{b'}$, where $b' < b$, in order to serve $s_t$.  
\end{lemma}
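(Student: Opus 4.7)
The proof strategy is a straightforward pigeonhole argument combined with Fact~\ref{prop:k-server}. First, I would unpack the structure of phase $p$: the $i$-th access in the phase (for $i \in \{1,\ldots,k\}$) is to the key $(i-1)\ell + p$, which lies in block $\bset_i$. Hence the $k$ accesses of phase $p$ visit the $k$ blocks in order of increasing index, and the accessed keys themselves are strictly increasing with $i$ (since $(i-1)\ell+p$ is increasing in $i$).

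Since only $k-1$ fingers are available to serve these $k$ accesses, the pigeonhole principle guarantees that some finger $j$ serves at least two accesses within the phase. Let $t_1 < t_2$ be two such times, chosen so that $t_2$ is minimal --- that is, $t_1$ and $t_2$ are two consecutive uses of finger $j$ inside phase $p$. By this choice, every access at time $t \in [t_1+1, t_2-1]$ is served by some finger $j' \neq j$. Invoking Fact~\ref{prop:k-server}, the configurations $\vec{f}_{t-1}$ and $\vec{f}_t$ differ only at the finger used to serve $s_t$, so $f_t(j) = f_{t-1}(j)$ for all such $t$. Telescoping yields $f_{t_2-1}(j) = f_{t_1}(j) = s_{t_1}$.

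Let $b_i$ denote the index of the block containing $s_{t_i}$. Because $t_1, t_2$ are both in phase $p$ with $t_1 < t_2$, we have $b_1 < b_2$, so the two positions lie in different blocks. Moreover, $s_{t_1} = (b_1-1)\ell + p < (b_2-1)\ell + p = s_{t_2}$, i.e.\ $f_{t_2-1}(j) < f_{t_2}(j)$. Taking $t = t_2$ gives the desired time, and $b = b_2$ (which equals $t_2 \bmod k$ under the convention $0 \equiv k$) gives the target block. The only mild subtlety is ruling out that finger $j$ drifts away from $s_{t_1}$ in the interval $[t_1, t_2]$; this is precisely what the ``consecutive use'' choice of $(t_1, t_2)$ combined with Fact~\ref{prop:k-server} rules out, and no deeper obstacle is expected.
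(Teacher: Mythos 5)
Your proof is correct and is essentially the contrapositive presentation of the paper's argument: the paper assumes no such cross-block left-to-right move exists and derives (via Fact~\ref{prop:k-server} and the in-order block structure of a phase) that each of the $k-1$ fingers serves at most one of the $k$ accesses, a contradiction, while you argue directly that pigeonhole forces some finger to serve two accesses and then show the second of its two consecutive uses is the required move. The underlying ingredients (pigeonhole over $k-1$ fingers and $k$ monotone accesses, plus Fact~\ref{prop:k-server} to pin the finger's location between uses) are identical, so this is the same proof in constructive rather than contradiction form.
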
 

\begin{proof} 
Suppose not. 
By \Cref{prop:k-server}, this implies that each finger is used to access each element only once in this phase (because accesses in this phase are done in blocks $\bset_1,\ldots, \bset_k$ in this order). 
This is impossible because we only have $k-1$ fingers. 
\end{proof} 

For each phase $p \in [\ell]$, let $t_p$ denote the time for which such a finger moves across the blocks from left to right; if they move more than once, we choose $t_p$ arbitrarily.  
Let $J = \{t_p\}_{p=1}^{\ell}$.   
For each finger $j \in [k-1]$, each block $i \in [k]$ and block $i' \in [k]: i < i'$, 
let $J(j,i,i')$ be the set containing the time $t$ for which finger $f(j)$ is moved from block $\bset_i$ to block $\bset_{i'}$ to access $s_t$. Let $c(j,i,i') = |J(j,i,i')|$. 
Notice that $\sum_{j,i,i'} c(j,i,i') = \frac{n}{k} = \ell$, due to the lemma. 
Let $P(j,i,i')$ denote the phases $p$ for which $t_p \in J(j,i,i')$.

\begin{lemma} 
$\sum_{j, i, i': c(j,i,i') \geq 16} c(j,i,i') \geq n/2k$ if $n = \Omega(k^4)$.  
\end{lemma}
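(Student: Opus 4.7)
The plan is a direct pigeonhole/counting argument based on the total mass constraint $\sum_{j,i,i'} c(j,i,i') = \ell = n/k$ established just before the lemma.

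First I would bound the number of triples $(j,i,i')$ over which $c(j,i,i')$ ranges. The index $j$ takes at most $k-1$ values, and $(i,i')$ with $1 \leq i < i' \leq k$ takes $\binom{k}{2}$ values, so the total number of triples is at most $(k-1) \binom{k}{2} \leq k^3/2$.

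Next I would split the sum into small and large contributions. Let
\[
S_{\text{small}} = \sum_{\substack{(j,i,i'):\\ c(j,i,i') < 16}} c(j,i,i'), \qquad S_{\text{large}} = \sum_{\substack{(j,i,i'):\\ c(j,i,i') \geq 16}} c(j,i,i'),
\]
so $S_{\text{small}} + S_{\text{large}} = n/k$. Each small triple contributes at most $15$, hence $S_{\text{small}} \leq 15 \cdot (k-1)\binom{k}{2} \leq \tfrac{15}{2} k^3$. Choosing the hidden constant in $n = \Omega(k^4)$ to be at least $15$, we get $\tfrac{15}{2} k^3 \leq n/(2k)$, and therefore $S_{\text{large}} \geq n/k - n/(2k) = n/(2k)$, which is exactly what we want.

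There is no real obstacle here beyond making the constants in $n = \Omega(k^4)$ explicit; the argument is a one-line pigeonhole. The only thing worth double-checking is that the identity $\sum_{j,i,i'} c(j,i,i') = n/k$ already appearing before the lemma uses the previous lemma correctly (each phase $p$ contributes exactly one time $t_p$ to some $J(j,i,i')$, and there are $\ell = n/k$ phases), which it does.
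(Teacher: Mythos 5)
Your proof is correct and takes exactly the same approach as the paper: bound the number of triples by $O(k^3)$, observe that small terms contribute $O(k^3)$ total, and subtract from the total mass $n/k$ using $n=\Omega(k^4)$. You are slightly more careful with constants (e.g.\ using $15$ per small term and $(k-1)\binom{k}{2}$ triples rather than the paper's cruder $16k^3$), but the argument is identical.
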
 
\begin{proof}
There are only at most $k^3$ triples $(j,i,i')$, so the terms for which $c(j,i,i') < 16$ contribute to the sum at most $16k^3$. 
This means that the sum of the remaining is at least $n/k - 16k^3 \geq n/2k$ if $n$ satisfies $n= \Omega(k^4)$.  
\end{proof}

From now on, we consider the sets $J'$ and $J'(j,i,i')$ that only concern those $c(j,i,i')$ with $c(j,i,i') \geq 16$ instead.  

\begin{lemma}
\label{lem:key-lemma}
There is a constant $\eta >0$ such that the total access cost during the phases $P(j,i,i')$ is at least $\eta c(j,i,i') \log c(j,i,i')$.  
\end{lemma}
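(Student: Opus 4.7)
Since every phase $p \in P(j,i,i')$ contains the distinguished access at time $t_{p_r}$ in which finger $j$ moves from some $y_r \in \bset_i$ to $x_r := x_{t_{p_r}} \in \bset_{i'}$, the sum $\sum_{r=1}^{c} d_T(y_r,x_r)$ lower-bounds the total access cost during those phases. My goal is therefore to prove $\sum_{r=1}^{c} d_T(y_r,x_r) \ge \Omega(c\log c)$, where $c = c(j,i,i')$.

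\textbf{Key step.} Let $v := \mathrm{LCA}_T(\{x_r\}_{r=1}^{c})$ in the reference tree $T$. Since every target lies in $\bset_{i'}$ and the LCA of a set of keys in a BST has its key in the convex hull of that set, we have $v \in \bset_{i'}$ and all $c$ targets are descendants of $v$ in $T$. In the clean case where the subtree $T_v$ contains no element of $\bset_i$, each $y_r$ lies outside $T_v$, so the path from $y_r$ to $x_r$ must enter $T_v$ through its root, giving $d_T(y_r,x_r) \ge d_{T_v}(v,x_r)$. A Kraft-style counting argument---at depth $d$ of a binary tree there are at most $2^d$ nodes, so any $c$ distinct descendants of $v$ have total depth $\Omega(c\log c)$---then yields $\sum_r d_T(y_r,x_r) \ge \Omega(c \log c)$. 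Summing with the lower bound from $c \geq 16$ gives a useful constant $\eta$.

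\textbf{Main difficulty.} The hard part is the case where $T_v$ does contain some element of $\bset_i$. Since $\bset_i < \bset_{i'} \ni v$, any such element must sit in the left subtree of $v$, which permits some $y_r$ to lie inside $T_v$ so that the path $y_r \leadsto x_r$ bypasses $v$. I would address this by a recursive argument on the LCA structure. Because $v$ is the LCA of the $c$ targets, either $v$ is itself a target or the targets split nontrivially between the two subtrees of $v$. Targets in the right subtree of $v$ always admit the clean bound $d_T(y_r,x_r) \ge d_T(v,x_r)$ because any $y_r \in \bset_i \subseteq T$ lies outside that subtree by the BST order. Hence if a constant fraction of the targets lies on the right side, Kraft applied inside the right subtree of $v$ alone produces $\Omega(c\log c)$. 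Otherwise the majority of targets is on the left; I recurse on the left subtree, replacing $v$ by the LCA of the left-side targets, and sum the per-level right-side Kraft contributions. Each recursion strictly shrinks the hosting subtree while either closing the argument via the clean case or producing an additional Kraft term, and the geometric aggregation of these contributions yields the desired $\Omega(c\log c)$ bound.
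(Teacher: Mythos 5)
Your plan — lower-bounding the phase cost by $\sum_r d_T(y_r,x_r)$ and proving $\sum_r d_T(y_r,x_r)\ge\Omega(c\log c)$ — is the right one, and your "clean case" argument (LCA $v$ of the targets, Kraft inequality on distinct descendants of $v$) is correct. However, the way you handle the non-clean case has a genuine gap: the recursion you propose does not necessarily accumulate an $\Omega(c\log c)$ total.

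Concretely, consider a reference tree $T$ in which $v_0, v_1, \ldots, v_{c-1}$ form a left spine (each $v_{\ell+1}$ is the left child of $v_\ell$), where all $v_\ell$ are targets in $\bset_{i'}$, and the block $\bset_i$ hangs entirely below $v_{c-1}$'s left subtree. At every level $\ell$ of your recursion the LCA of the remaining targets is $v_\ell$ itself, which is one of the targets, so it is "absorbed," no target lies in the right subtree of $v_\ell$, and the Kraft contribution at that level is zero. The clean case is never triggered either, since $\bset_i$ stays inside $T_{v_\ell}$ for every $\ell$. The recursion runs for $c$ levels and your accounting produces a total of $0$, even though the true sum $\sum_r d_T(y_r,x_r)$ is in fact $\Omega(c^2)$ because every source in $\bset_i$ is far below every target on the spine. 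The missing quantity is exactly the contribution of source depth, which your per-level right-side Kraft terms never see; "shrinking the hosting subtree" by itself does not generate cost, and "geometric aggregation" fails when the per-level right-side counts are $0$ or $1$ for $\Theta(c)$ consecutive levels.

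The paper's route is substantively different and circumvents this: it picks $R$ to be the LCA of the upper half of the \emph{sources} (after sorting them, all in $\bset_i$), and uses a pairing argument between $r$ and $C-r$. For each pair, if the target $s_{t_r}$ lies in $T_R$, the source $a_{t_r}$ is to the left of $R$ so the LCA of $a_{t_r}$ and $s_{t_r}$ is an ancestor of $R$ and the path descends through $R$; otherwise the source $a_{t_{C-r}}$ lies in $T_R$ while the target lies outside, forcing passage through $R$. Either way, one of the two accesses pays at least its distance to $R$, and one then applies Kraft to $\lfloor C/2\rfloor$ distinct target keys. Choosing $R$ as a median-type LCA of the sources, rather than the LCA of the targets, is what eliminates the recursion and avoids the failure mode above. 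If you want to keep your target-based LCA, you would have to explicitly credit the depth of the sources (or the depth of the recursion itself) in the potential, which would substantially change your argument.
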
 

Once we have this lemma, everything is done. 
Since the function $g(x) = x \log x$ is convex, we apply Jensen's inequality to obtain: 
\[\frac{1}{|J'|} \sum_{j,i,i'} \eta c(j,i,i')\log c(j,i,i') \geq \eta (\frac{n}{2k|J'|}) \log (n/2k|J'|). \]
Note that the left side is the term ${\mathbb E}[g(x)]$, while the right side is $g({\mathbb E}(x))$. Therefore, the total access cost is at least $\frac{\eta n}{8k} \log (n/2k)$, due to the fact that $|J'| \leq k^3$.
We now prove the lemma. 

\begin{proof}[Proof of Lemma~\ref{lem:key-lemma}] 
We recall that, in the phases $P(j,i,i')$, the finger-$j$ moves from block $\bset_i$ to $\bset_{i'}$ to serve the request at corresponding time.  
For simplicity of notation, we use $\tilde J$ and $C$ to denote $J(j,i,i')$ and $c(j,i,i')$ respectively. 
Also, we use $\tilde f$ to denote the finger-$j$.  
For each $t \in \tilde J$, let $a_t \in \bset_i$ be the key for which the finger $\tilde f$ moves from $a_t$ to $s_t$ when accessing $s_t \in \bset_{i'}$. 
Let $\tilde J = \{t_1,\ldots, t_{C}\}$ such that $a_{t_1} < a_{t_2} <  \ldots < a_{t_{C}}$. 
Let $R$ be the lowest common ancestor in $T$ of keys in $[a_{t_{\lfloor C/2 \rfloor } +1}, a_{t_{C}}]$.

\begin{lemma} 
\label{lem:lastclaim}
For each $r \in \{1,\ldots, \lfloor C/2 \rfloor \}$, the access cost of $s_{t_r}$  and $s_{t_{C -r}}$ is together at least 
$\min \{d_T({R}, s_{t_r}), d_T(R, s_{t_{C -r}})\}$. 
\end{lemma}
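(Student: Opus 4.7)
The plan is to perform a case analysis based on whether the paths between the relevant $a_{t_j}$'s and $s_{t_j}$'s in the reference tree $T$ pass through the node $R$. First, I would record the key positional facts. Let $\rho$ denote the key value associated with $R$. By the BST property and Fact~\ref{fact:unique}, $\rho$ lies in the interval $[a_{t_{\lfloor C/2\rfloor+1}}, a_{t_C}]$, and the subtree rooted at $R$ contains every key in that interval. Since $r \leq \lfloor C/2\rfloor$, the key $a_{t_r}$ satisfies $a_{t_r} < a_{t_{\lfloor C/2\rfloor+1}} \leq \rho$, while $a_{t_{C-r}}$ lies in $R$'s subtree. Since $s_{t_r},s_{t_{C-r}} \in \bset_{i'}$ while $\rho \in \bset_i$ and every key of $\bset_i$ is less than every key of $\bset_{i'}$, we also have $s_{t_r},s_{t_{C-r}} > \rho$.

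For the case split, I would first consider the path $P_2$ from $a_{t_{C-r}}$ to $s_{t_{C-r}}$ in $T$. If $P_2$ passes through $R$, then since $R$ lies on the segment from $a_{t_{C-r}}$ to $s_{t_{C-r}}$ we immediately get $d_T(a_{t_{C-r}}, s_{t_{C-r}}) \geq d_T(R, s_{t_{C-r}})$, and the lemma follows. Otherwise $P_2$ avoids $R$, which (since $a_{t_{C-r}}$ is a descendant of $R$) forces $s_{t_{C-r}}$ to lie in the same child-subtree of $R$ as $a_{t_{C-r}}$; because $s_{t_{C-r}} > \rho$, both must lie in the right subtree of $R$. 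In this subcase I would then switch attention to the path $P_1$ from $a_{t_r}$ to $s_{t_r}$. Since $a_{t_r} < \rho < s_{t_r}$, the two endpoints straddle $\rho$ in value, and I would use the BST property together with the established location of the right subtree of $R$ to argue that the unique path in $T$ between $a_{t_r}$ and $s_{t_r}$ must traverse $R$ (any node strictly between $a_{t_r}$ and $s_{t_r}$ that is an ancestor of both is either $R$ or forces the path to descend through $R$). Thus $d_T(a_{t_r}, s_{t_r}) \geq d_T(R, s_{t_r})$, yielding the claim.

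The step I expect to require the most care is the final claim that $P_1$ must pass through $R$. The delicate configuration is when $a_{t_r}$ lies outside $R$'s subtree to the left and $s_{t_r}$ lies outside $R$'s subtree to the right, in which case the least common ancestor of $a_{t_r}$ and $s_{t_r}$ is a proper ancestor of $R$ and the path could a priori bypass $R$ entirely. The plan to rule this out is to exploit what the previous case already established: in Case~B, the right subtree of $R$ already contains $s_{t_{C-r}}\in\bset_{i'}$, so the maximum-value descendant of $R$ extends into $\bset_{i'}$, and by a careful BST tracing one can show $s_{t_r}$ cannot simultaneously lie outside $R$'s subtree while $a_{t_{C-r}}$ and $s_{t_{C-r}}$ lie strictly inside; if this still leaves a pathological corner case, the plan is to absorb it into the constant $\eta$ in Lemma~\ref{lem:key-lemma} by slightly shrinking $R$'s defining range (e.g., taking the LCA of the upper quartile of the $a_{t_j}$'s rather than the upper half), which only affects the final constant.
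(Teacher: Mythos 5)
Your case decomposition (first testing whether the path $P_2$ from $a_{t_{C-r}}$ to $s_{t_{C-r}}$ traverses $R$, then falling back to $P_1$) is essentially a re-ordering of the paper's two cases, and like the paper's argument it bottoms out in the single problematic configuration you correctly identify: $a_{t_r}$ strictly to the left of $R$'s subtree, $s_{t_r}$ strictly to the right, and $s_{t_{C-r}}$ inside. The gap is genuine, but both of your proposed patches miss the mechanism that actually closes it. The impossibility of this configuration is not a BST-structural fact about $T$: for arbitrary pairs $(a_{t_j},s_{t_j})$ and an arbitrary reference tree one can certainly realize $a_{t_r} < L \le R.\mathrm{key} < s_{t_{C-r}} \le H < s_{t_r}$. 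Likewise, replacing the upper half by an upper quartile when defining $R$ only changes $L$ and $H$; the same configuration can be reconstructed for any choice of defining range, so the shrinking idea buys nothing.

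What actually rules the configuration out is the \emph{temporal} discipline imposed by Fact~\ref{prop:k-server}, not the geometry of $T$. Because the keys $a_{t_1}<\dots<a_{t_C}$ are the phase-labels of $\bset_i$ and the tilted-grid accesses within $\bset_i$ occur in increasing order, the single finger $\tilde f$ occupies the positions $a_{t_j}$ during pairwise disjoint time intervals whose left endpoints increase with $j$. Disjointness then forces the departure times to increase as well, $t_1<\dots<t_C$, and since $s_{t_j}$ is the phase-$p_j$ key of $\bset_{i'}$ this gives $s_{t_1}<\dots<s_{t_C}$, in particular $s_{t_r}<s_{t_{C-r}}$. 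Once you have this monotonicity, the gap closes in one line: if $s_{t_{C-r}}\le H$ then $R.\mathrm{key}\le \ell i<s_{t_r}<s_{t_{C-r}}\le H$, so $s_{t_r}$ lies inside $R$'s subtree after all, contradicting the hypothesis of the delicate sub-case. (The paper's own proof also asserts ``and so is $s_{t_{C-r}}$'' without justification; your diagnosis of where the difficulty lies is exactly right, but the missing ingredient is this chronological ordering of the $s_{t_j}$'s, not a tree-tracing argument or a re-choice of $R$.)
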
 

\begin{proof} 
 
Let $u_r$ be the lowest common ancestor between $a_{t_r}$ and $s_{t_r}$. If $s_{t_r}$ is in the subtree rooted at $R$, then the cost must also be at least $d_T(u_r, s_{t_r}) \geq d_T(R,s_{t_r})$ as $u_r$ must be an ancestor of $R$ (because $a_{t_r} < a_{t_{\lfloor C/2 \rfloor }} < a_{t_{C}} < s_{t_r}$). Otherwise, we know that $s_{t_r}$ is outside of the subtree rooted at $R$, and so is $s_{t_{C- r}}$. On the other hand, $a_{t_{C-r}}$ is in such subtree, so moving the finger from $a_{t_{C -r}}$ to $s_{t_{C - r}}$ must touch ${R}$, therefore costing at least $d_T({R}, s_{t_{C-r}})$. 

\end{proof} 

Lemma~\ref{lem:lastclaim} implies that, for each $r=1,\ldots, \lfloor C/2 \rfloor $, we pay the distance between some element $v_r \in \set{s_{t_r}, s_{t_{C-r}}}$ to $R$. 
The total such costs would be $\sum_{r} d_T(R, v_r)$. 
Applying the fact that (i) $v_r$'s are different and (ii) there are at most $3^d$ vertices at distance $d$ from a vertex $R$, we conclude that this sum is at least $\sum_r d_T(R, v_r) \geq \Omega(C \log C)$.   
\end{proof}

{\renewcommand {\tset}{T}

\subsection{Working set and $k$-lazy finger bounds are incomparable}

We show the following.

\begin{theorem}\ \\
\begin{compactitem}
\item There exists a sequence $S$ such that $\WS(S) = o(\LF^{k}(S))$, and
\item There exists a sequence $S^{'}$ such that $\LF^{k}(S^{'}) = o(\WS(S^{'}))$.
\end{compactitem}
\end{theorem}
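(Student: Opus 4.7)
The theorem has two directions, each asked to be witnessed by a separate family of sequences. I plan to give very simple witnesses for each direction and verify the two bounds by direct calculation; neither requires machinery beyond the definitions in \textsection\,\ref{sec:dict} and one folklore observation about in-order traversals of BSTs.

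For the first direction, $\WS(S_1) = o(\LF^k(S_1))$, the plan is to use the two-block sequence $S_1 = (1,1,\ldots,1,2,2,\ldots,2)$ consisting of $n/2$ copies of key $1$ followed by $n/2$ copies of key $2$. Every access except the single first occurrence of key $2$ has working set of size exactly $1$ (we keep touching the same key), and that single exception contributes $\log(n/2+1)$, so $\WS(S_1) = O(\log n)$. On the other hand, the definition $\LF^k_{T,\vec{f},\vec{\ell}}(S) = \sum_{t=1}^m(1+d_T(s_t, s_{\sigma(f_t,t)}))$ already contributes $1$ per access regardless of tree and finger strategy, so $\LF^k(S_1) \ge n$. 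Hence $\WS(S_1)/\LF^k(S_1) = O(\log n/n) \to 0$.

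For the second direction, $\LF^k(S_2) = o(\WS(S_2))$, the plan is to use the repeated sequential access $S_2 = (1,2,\ldots,\sqrt{n})^{\sqrt{n}}$ on key set $[\sqrt{n}]$, of total length $n$. After the first pass, every subsequent access $s_j$ (with $j>\sqrt n$) has $|w_{S_2}(j)| = \sqrt n$, because a complete pass over $[\sqrt n]$ separates consecutive occurrences of the same key, giving $\WS(S_2) \ge (n-\sqrt n)\log\sqrt n = \Omega(n\log n)$. For the $\LF^k$ upper bound I would use a balanced BST $T$ on $[\sqrt n]$ as reference tree and a single finger. The folklore in-order traversal bound $\sum_{i=1}^{N-1} d_T(i,i+1) \le 2(N-1)$, which holds in any BST on $N$ keys because each tree edge is crossed at most twice during an in-order walk, bounds the cost of one pass by $O(\sqrt n)$; the $\sqrt n - 1$ wrap-arounds from key $\sqrt n$ back to key $1$ between consecutive passes each cost $O(\log n)$. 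Summing yields $\LF^1_T(S_2) = O(n)$, and since $\LF^k \le \LF^1$ the desired $\LF^k(S_2) = O(n) = o(\WS(S_2))$ follows.

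The only input beyond unwinding definitions is the in-order traversal fact, which is standard. The main ``conceptual'' obstacle is realising that the explicit $+1$ per access in the $\LF^k$ definition makes the first direction essentially automatic; were $\LF^k$ defined to count only the tree-distance moves, keeping $\WS$ small while simultaneously forcing \emph{every} choice of BST and finger strategy to incur cost $\omega(\WS)$ would require a genuinely spatial obstruction rather than the block construction above.
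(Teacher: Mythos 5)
Your second direction is essentially the paper's argument. The paper uses the same idea -- repeated sequential access -- though it phrases the $\LF$ upper bound via a path-shaped reference tree rather than a balanced tree with the in-order traversal folklore bound. Both variants work and give $\WS = \Omega(m\log n)$ against $\LF^k = O(m)$, a separation that survives the implicit linear slack discussed below.

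The first direction, however, has a genuine gap, and you have half-noticed it yourself. Your argument rests entirely on the explicit $+1$ per access in the definition of $\LF^k$, which forces $\LF^k(S) \geq m$ for every sequence. But in this paper a BST bound $f$ is always interpreted with an additive $O(|X|)$ safety term (``$f$ is sound if $\opt(X) \le O(f(X)+|X|)$''), precisely because any algorithm must spend at least one unit per access. Read uniformly, your two-block sequence gives $\WS(S_1) + n = \Theta(n)$ and $\LF^k(S_1) + n = \Theta(n)$: no separation. The $+1$ in $\LF^k$ is not a feature you can exploit against $\WS$; it only makes explicit a term that $\WS$ carries implicitly. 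You flag this in your last paragraph, but you treat it as a curiosity rather than as a defect. The paper's own result (its Theorem \ref{wslessklf}) shows the \emph{real} separation: for a random-phase sequence, $\WS(S) = O(m\log k)$ while $\LF^k(S) = \Omega(m\log(n/k))$, which both dominate $m$ and still differ by a factor $\Theta(\log(n/k))$ (for $k$ constant, a $\log n$ gap). The proof is genuinely spatial: in each phase one picks $2k$ uniformly random keys, argues via a union bound over all $O(4^n)$ reference trees and all finger strategies that with high probability \emph{some} pair of selected keys is far apart in every tree, so $k$ fingers cannot cover all $2k$ keys cheaply. This is the ``genuinely spatial obstruction'' you anticipate would be needed, and it is needed. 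To repair your proof, you would have to replace your two-block construction with something along these lines.
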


The sequence $S^{'}$ above is straightforward: For $k=1$, just consider the sequential access $1,\dots,n$ repeated $m/n$ times. For $m$ large enough, the working set bound is $\Omega(m \log n)$. However, if we start with the finger on the root of the tree which is just a path, then the lazy finger bound is $O(m)$. The $k$-lazy finger bound is always less than lazy finger bound, so this sequence works for the second part of the theorem.

The existence of the sequence $S$ is slightly more involved (the special case for $k=1$ was proved in~\cite{BoseDIL14}), and is guaranteed by the following theorem, the proof of which comprises the remainder of this section.

\begin{theorem}\label{wslessklf}
For all $k = O(n^{1/2 - \eps})$, there exists a sequence $S$ of length $m$ such that $\WS(S) = O(m \log k)$ whereas $\LF^{k}(S) = \Omega(m \log (n/k))$. 
\end{theorem}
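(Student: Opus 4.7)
The plan is to construct $S$ as a concatenation of ``phases'' of random disjoint key-sets. Set $m=n$ and $p=n/k$, and let $A_1,\dots,A_p$ be a uniformly random partition of $[n]$ into $p$ sets each of size $k$. Define
\[
S \;=\; \pi_1(A_1)\,\|\,\pi_2(A_2)\,\|\,\cdots\,\|\,\pi_p(A_p),
\]
where each $\pi_i$ is any fixed ordering of $A_i$. So phase $i$ consists of the $k$ elements of $A_i$ accessed once each, giving $|S| = m = n$.

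The working-set bound is immediate: inside phase $i$ only the $k$ elements of $A_i$ are touched, so every access in the phase has working set at most $k$; at phase boundaries the working set is at most $2k$. Summing over all $m$ accesses yields $\WS(S) \le m\log(2k) = O(m\log k)$.

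For the lazy-finger lower bound, I would fix an arbitrary BST $T$ on $[n]$ and any $k$-finger strategy, and let $F_i$ denote the set of $k$ finger positions immediately before phase $i$. The cost of phase $i$ is at least the minimum bipartite matching $\mathrm{match}_T(F_i,A_i)$, since each of the $k$ distinct requests in $A_i$ must be served by some finger. A simple exchange argument shows that without loss of generality $F_{i+1}=A_i$ (after serving $A_i$ optimally, the $k$ fingers can be taken to sit exactly on $A_i$), and hence
\[
\LF^k_T(S) \;\ge\; \sum_{i=1}^{p}\mathrm{match}_T(A_{i-1},A_i).
\]
The core structural lemma I would invoke is: for any BST $T$ on $[n]$ and any $F\subseteq[n]$ with $|F|\le k$, removing $F$ from $T$ leaves at most $3k$ components of total size $n-k$, and within a component of size $s$ the sum of depths from its attachment point is $\Omega(s\log s)$. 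By convexity, $\sum_{v\in[n]} d_T(v,F) = \Omega(n\log(n/k))$, so on average an element of $[n]$ is at tree-distance $\Omega(\log(n/k))$ from the nearest finger.

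The final step is to parlay this per-tree bound into a per-phase matching bound. For a fixed BST $T$, the random $A_i$ (conditioned on $A_{i-1}$) is approximately a uniform random $k$-subset of $[n]\setminus A_{i-1}$, so by the core lemma the expected contribution of each element of $A_i$ to $\mathrm{match}_T(A_{i-1},A_i)$ is $\Omega(\log(n/k))$, giving $\mathbb{E}_A[\mathrm{match}_T(A_{i-1},A_i)] = \Omega(k\log(n/k))$ and hence $\mathbb{E}_A\bigl[\sum_i \mathrm{match}_T(A_{i-1},A_i)\bigr] = \Omega(n\log(n/k))$. Standard concentration plus a union bound over the at most $n^{O(n)}$ BSTs on $[n]$ then produces a single deterministic partition $\{A_i\}$ for which $\LF^k_T(S) = \Omega(m\log(n/k))$ simultaneously for every $T$.

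\textbf{Main obstacle.} The delicate step will be converting the ``nearest-finger'' bound into a bound on the \emph{bipartite matching} cost uniformly over all BSTs. Trees that are highly unbalanced (path-like) can admit short matchings between pairs of random $k$-subsets (by sorting), so the matching may be much smaller than $k\log(n/k)$ in such shapes; on the other hand, these trees make \emph{within-phase} traversal of a well-spread $A_i$ expensive, so one must combine the between-phase matching lower bound with a within-phase traversal lower bound to cover all tree shapes. The hypothesis $k=O(n^{1/2-\varepsilon})$ is precisely what is needed for the random partition to be dispersed enough that, in every BST, at least one of the two regimes contributes $\Omega(\log(n/k))$ amortized per access, and for the union bound over trees to go through.
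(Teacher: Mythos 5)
Your construction has a fatal error at the very first step: the working-set bound does not hold. In your sequence $S = \pi_1(A_1)\|\cdots\|\pi_p(A_p)$ every key appears exactly once, so by the paper's definition ($\rho_S(j)=0$ for a first access, hence $w_S(j)=\{s_1,\dots,s_j\}$), the working set at the $j$-th access has size $j$ and $\WS(S)=\sum_{j=1}^n\log j=\Theta(m\log n)$, not $O(m\log k)$. Your intuition that ``inside phase $i$ only the $k$ elements of $A_i$ are touched, so every access has working set at most $k$'' conflates the working set with the per-phase support; the working set counts everything accessed since the \emph{previous} access to the same key, and there is no previous access. This is exactly why the paper's construction repeats the phase-$i$ support $X/2k$ times within the phase: after the first pass the working set drops to at most $2k$, and choosing $X\ge 5k\log n/\log 2k$ makes the first-pass contribution negligible. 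Without within-phase repetition the working-set side of the separation is unobtainable.

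There is a second, independent gap in the lazy-finger lower bound. You claim the cost of phase $i$ is at least $\mathrm{match}_T(F_i,A_i)$, but a single finger may serve several requests of $A_i$ consecutively, so the phase cost is a path-cover cost rather than a matching cost. In a path-shaped $T$ a single finger sweeping sorted $A_i$ costs the range of $A_i$ (roughly $n$), which can be far below the matching cost $\Theta(n\sqrt k)$ against random $F_i$; the matching is simply not a valid lower bound. You correctly flag this tension in your ``main obstacle'' paragraph, but the proposal does not resolve it, and the concentration/union-bound step inherits the problem because you no longer have a clean per-phase random event to estimate. The paper's mechanism is structurally different and sidesteps both issues at once: each phase selects $2k$ random keys and repeats them many times, and the only event one needs to control is whether some pair of the $2k$ keys lies at tree-distance less than $d$ (``$d$-good''). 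When they are pairwise $\ge d$ apart, each of the $k$ fingers can be near at most one of the $2k$ keys, so in every repetition at least $k$ accesses cost $\ge d$; this gives $\Omega(Xd)$ per phase without any matching argument, and the probability of $d$-goodness is a clean binomial-type quantity that survives the $4^n$ union bound. Your route would need a genuinely new per-phase lower bound that handles path covers in arbitrary trees, plus the repetition fix for the working set; as written it does not prove the theorem.
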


We construct a random sequence $S$ and show that while $\WS(S) = O(m \log k)$ with probability one, the probability that there exists a tree $\tset$ such that $\LF^{k}_{T}(S) \leq c m\log_{3} (n/k)$ is less than $1/2$ for some constant $c<1$. This implies the existence of a sequence $S$ such that for all trees $\tset$, $\LF^{k}_{T}(S) = \Omega(m \log (n/k))$.

The sequence is as follows. We have $Y$ phases. In each phase we select $2k$ elements $R_{i} = \{r^{i}_{j}\}_{j=1}^{2k}$ uniformly at random from $[n]$. We order them arbitrarily in a sequence $S_{i}$, and access $[S_{i}]^{X/2k}$ (access $S_{i}$ $X/2k$ times). The final sequence $S$ is a concatenation of the sequences $[S_{i}]^{X/2k}$ for $1 \leq i \leq Y$. Each phase has $X$ accesses, for a total of $m= XY$ accesses overall. We will choose $X$ and $Y$ appropriately later.

\noindent{\bf Working set bound.} One easily observes that $\WS(S) =O(Y(2k \log n + (X-2k) \log (2k)))$, because after the first $2k$ accesses in a phase, the working set is always of size $2k$. We choose $X$ such that the second term dominates the first, say $X \geq 5 k \frac{\log n}{\log 2k}$. We then have that the working set bound is $O(XY \log k) = O(m \log k)$, with probability one.

\noindent{\bf $k$-lazy finger bound.} Fix a BST $\tset$. We classify the selection of the set $R_{i}$ as being $d$-good for $\tset$ if there exists a pair $r^{i}_{j}, r^{i}_{\ell} \in R_{i}$ such that their distance in $\tset$ is less than $d$. The following lemma bounds the probability of a random selection being $d$-good for $\tset$. 

\begin{lemma} Let $\tset$ be any BST. 
The probability that $R_{i}$ is $d$-good for $\tset$ is at most $8k^{2}3^{d}/n$.
\end{lemma}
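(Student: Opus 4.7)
The plan is a straightforward union-bound argument built on a combinatorial observation about BSTs. The first step is to establish the following claim: for any fixed node $v$ in the BST $\tset$, the number of nodes $u$ with $d_{\tset}(u,v) < d$ is at most $3^{d}$. This holds because every node in $\tset$ has at most three graph neighbors (parent, left child, right child), so a BFS from $v$ has branching factor at most $3$, and the number of nodes reachable within $d-1$ edges is bounded by $\sum_{i=0}^{d-1} 3^{i} = (3^{d}-1)/2 \le 3^{d}$.

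The second step is to bound the probability that a single pair $r^{i}_{j}, r^{i}_{\ell} \in R_{i}$ satisfies $d_{\tset}(r^{i}_{j}, r^{i}_{\ell}) < d$. Conditioning on the value of $r^{i}_{j}$, by the claim there are at most $3^{d}$ values of $r^{i}_{\ell}$ that place it within graph distance $d$ of $r^{i}_{j}$. Since $r^{i}_{\ell}$ is uniform over $[n]$ (or, if sampling is without replacement, uniform over $[n] \setminus \{r^{i}_{j}\}$, at the cost of a factor of at most $2$), this conditional probability is at most $2 \cdot 3^{d}/n$, and hence the unconditional probability is $O(3^{d}/n)$ as well.

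Finally, a union bound over the $\binom{2k}{2} < 2k^{2}$ unordered pairs in $R_{i}$ shows that the probability some pair falls within distance $d$ of each other in $\tset$ is at most $2k^{2} \cdot 2 \cdot 3^{d}/n = 4k^{2} \cdot 3^{d}/n \le 8k^{2} \cdot 3^{d}/n$, which is exactly the claimed bound.

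I do not anticipate any real obstacle. The only point of minor care is the constant slack arising from the with-versus-without-replacement ambiguity of the sampling and from using the loose bound $3^{d}$ rather than the tighter $O(2^{d})$; both are absorbed into the constant $8$ in the conclusion.
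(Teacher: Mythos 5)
Your proof is correct, and it takes a genuinely different (and arguably cleaner) route than the paper's. The paper estimates the complementary probability $P[R_i \text{ is not } d\text{-good}]$ via a sequential, birthday-paradox-style product $\prod_{i=1}^{2k-1}(1 - i\,3^d/n)$ — each newly sampled element must avoid the balls around the previously sampled ones — then converts this to the stated bound by the chain $1-(1-x)^{2k} = 1 - \exp(2k\ln(1-x)) \le 1 - \exp(-4kx) \le 4kx$, i.e.\ via the elementary inequalities $\ln(1-x) \ge -2x$ and $e^{-y} \ge 1-y$ (which is also where the assumption $8k^2 3^d/n < 1$ is needed to legitimize the first of these). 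You instead apply a direct union bound over the $\binom{2k}{2} < 2k^2$ pairs, using the same ball-volume fact ($\le 3^d$ vertices within graph distance $d$ of a fixed vertex) and the observation that the marginal of any unordered pair from $R_i$ is uniform over distinct pairs, giving a per-pair probability of at most $(3^d-1)/(n-1) \le 2\cdot 3^d/n$. This avoids the exponential approximations entirely and even yields the slightly better constant $4k^2 3^d/n$, at the cost of a mild $n \ge 2$ assumption absorbed into the factor $2$. Both arguments are sound; the union bound is shorter and more elementary, while the paper's product bound generalizes more gracefully if one later wants sharper control of the complementary probability.
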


\begin{proof} We may assume $8 k^2 3^d/n < 1$ as the claim is void otherwise. We compute the probability that a selection $R_{i}$ is not $d$-good first. This happens if and only if the balls of radius $d$ around every element $r^{i}_{j}$ are disjoint. The volume of such a ball is at most $3^d$, so we can bound this probability as 

\begin{eqnarray}
P[R_{i}\text{ is not d-good for }\tset] &=& \displaystyle\Pi_{i=1}^{2k-1} \left(1-\frac{i3^d}{n}\right) \notag \\
								&\geq& \left( 1-\frac{2k3^d}{n}\right)^{2k} \notag \\
\Rightarrow P[R_{i}\text{ is d-good for }\tset] &\leq& 1- \left( 1-\frac{2k3^d}{n}\right)^{2k} \notag \\
									&=& 1- \exp\left(2k \ln \left( 1-\frac{2k3^d}{n}\right)\right) \notag \\
									&\leq& 1 - \exp\left(-8k^{2}3^{d}/n\right) \notag \\
									&\leq& 8k^{2}3^{d}/n, \notag
\end{eqnarray}
where the last two inequalities follow from $\ln (1-x) > -2x$ for $x \le 1/2$ (note that $8 k^2 3^d/n < 1$ implies 
$2k 3^d/n \le 1/2$) and $e^x > 1+x$, respectively.
\end{proof}

Observe that if $R_{i}$ is not $d$-good, then the $k$-lazy finger bound of the access sequence $[S_{i}]^{X/2k}$ is $\Omega(d(X-k))=\Omega(dX)$. This is because in every occurrence of $S_{i}$, there will be some $k$ elements out of the $2k$ total that will be outside the $d$-radius balls centered at the current $k$ fingers.

We call the entire sequence $S$ $d$-good for $\tset$ if at least half of the sets $R_{i}$ are $d$-good for $\tset$. Thus if $S$ is not $d$-good, then $\LF^{k}_{\tset}(S) = \Omega(XYd)$.

\begin{lemma}
$P[S\text{ is d-good for }\tset] \leq \left(\frac{32k^{2}3^{d}}{n} \right)^{Y/2}$. 
\end{lemma}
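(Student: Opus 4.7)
The plan is to exploit the independence of the random sets $R_1, \dots, R_Y$ across phases. Since each phase's selection $R_i$ is drawn uniformly and independently from $[n]$, the indicator events ``$R_i$ is $d$-good for $T$'' are mutually independent. By the preceding lemma, each such event has probability at most $p := 8 k^{2} 3^{d}/n$. So, fixing a tree $T$, the total number $N$ of $d$-good sets is stochastically dominated by a binomial random variable $\mathrm{Bin}(Y, p)$, and the event ``$S$ is $d$-good for $T$'' is exactly $\{N \ge Y/2\}$.

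Next, I would bound $\Pr[N \ge Y/2]$ by a standard union bound over subsets of phases. Specifically, for any fixed subset $I \subseteq [Y]$ with $|I| = \lceil Y/2 \rceil$, the probability that every $R_i$ with $i \in I$ is $d$-good is at most $p^{|I|} \le p^{Y/2}$, by independence. Taking a union bound over the $\binom{Y}{\lceil Y/2 \rceil} \le 2^{Y}$ such subsets yields
\[
\Pr[S \text{ is } d\text{-good for } T] \;\le\; \binom{Y}{\lceil Y/2 \rceil}\, p^{Y/2} \;\le\; 2^{Y}\, p^{Y/2} \;=\; (4p)^{Y/2}.
\]
Substituting $p = 8 k^{2} 3^{d}/n$ gives $(4p)^{Y/2} = (32 k^{2} 3^{d}/n)^{Y/2}$, which is exactly the desired bound.

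There is no real obstacle here: the whole argument reduces to (i) invoking the previous lemma to bound the single-phase probability, (ii) using independence of phases to turn the count into a binomial, and (iii) applying a textbook tail bound $\Pr[\mathrm{Bin}(Y,p) \ge Y/2] \le 2^Y p^{Y/2}$. The only place one must be careful is in noting that ``at least half'' should be interpreted consistently (e.g. $\lceil Y/2 \rceil$) so that the binomial coefficient bound $\binom{Y}{\lceil Y/2\rceil} \le 2^Y$ applies cleanly; this does not affect the final expression since the factor of $4$ absorbed into the base already accommodates it.
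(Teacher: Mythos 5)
Your argument is correct and is essentially the same as the paper's: a union bound over the $\binom{Y}{\lceil Y/2\rceil}$ choices of $Y/2$ phases that could all be $d$-good, combined with independence across phases and the single-phase bound $p \le 8k^2 3^d/n$, followed by $\binom{Y}{\lceil Y/2\rceil} \le 2^Y = 4^{Y/2}$. The framing in terms of a binomial tail bound is just a repackaging of this same computation, and the ceiling issue you flag indeed makes no difference to the final expression.
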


\begin{proof}
By the previous lemma and by definition of goodness of $S$, we have that

\begin{eqnarray}
P[S\text{ is d-good for }\tset] &\leq& {Y \choose Y/2}\left(\frac{8k^{2}3^{d}}{n} \right)^{Y/2} \notag \\
			& \leq& 4^{Y/2}\left(\frac{8k^{2}3^{d}}{n} \right)^{Y/2} \notag \\
			&=& \left(\frac{32k^{2}3^{d}}{n} \right)^{Y/2}.\notag
\end{eqnarray}
\end{proof}

The theorem now follows easily. Taking a union bound over all BSTs on $[n]$, we have 
\[ P[S\text{ is d-good for some BST }\tset] \leq 4^{n}\left(\frac{32k^{2}3^{d}}{n} \right)^{Y/2}.
\]

Now set $Y=2n$. We have that 
\[ P[\exists \text{ a BST }\tset:\LF^{k}_{\tset}(S) \leq md/4] \leq 4^{n}\left(\frac{32k^{2}3^{d}}{n} \right)^{n}.
\]

Putting $d = \log_{3} \frac{n}{256k^{2}}$ gives that for some constant $c<1$,
\begin{eqnarray}
 P[\exists \text{ a BST }\tset:\LF^{k}_{\tset}(S) \leq c(m \log (n/k))] &\leq& 4^{n}\left(\frac{32k^{2}3^{d}}{n} \right)^{n} = 1/2 \notag
\end{eqnarray}
which implies that with probability at least $1/2$ one of the sequences in our random construction will have $k$-lazy finger bound that is $\Omega(m \log (n/k))$. The working set bound is always $O(m \log k)$. This establishes the theorem.

}

\end{document}